\date{}
\newcommand{\SetR}{\mathbb{R}}
\newcommand{\SetZ}{\mathbb{Z}}
\newcommand{\relbd}[1]{\mathrm{rb}(#1)}
\newcommand{\ri}[1]{\mathrm{ri}(#1)}
\newcommand{\aff}[1]{\mathrm{aff}(#1)}
\newcommand{\ball}[2]{\mathrm{B}(#1,#2)}
\newcommand{\N}{\mathcal{N}}
\newcommand{\h}{\mathbf{h}}
\newcommand{\wh}{\widetilde{\h}}
\newcommand{\g}{\mathbf{g}}
\newcommand{\x}{\mathbf{x}}
\newcommand{\s}{\mathbf{s}}
\newcommand{\hsp}{\mathcal{H}}
\newcommand{\pmr}{\Gamma_n}
\newcommand{\ef}{\Gamma^*_n}
\newcommand{\efc}{\overline{\Gamma^*_n}}
\newcommand{\fix}{\mathrm{fix}}
\newcommand{\rb}{\mathbf{r}}
\newcommand{\ub}{\mathbf{u}}
\newcommand{\mi}{\mb{i}}
\newcommand{\mr}{\mathrm}
\newcommand{\mb}{\mathbf}
\newcommand{\mc}{\mathcal}
\newcommand{\mf}{\mathfrak}
\newcommand{\bs}{\boldsymbol}
\newcommand{\bsl}{\boldsymbol{\lambda}}
\newtheorem{theorem}{Theorem}
\newtheorem{lemma}{Lemma}
\newtheorem{corollary}{Corollary}
\newtheorem{proposition}{Proposition}
\newtheorem{definition}{Definition}
\newtheorem{example}{Example}
\newcommand{\QQQ}{$\Box$}
\newenvironment{proof}{\textit{Proof}}{\hfill\QQQ

}
\begin{document}

\title{Partition-Symmetrical Entropy Functions}


\author{
Qi~Chen
\IEEEmembership{Member IEEE}
 and Raymond W.~Yeung~\IEEEmembership{Fellow,~IEEE}\footnote{Qi Chen
   is with the Institute of Network Coding and the Department of
   Information Engineering, The Chinese University of Hong Kong, N.T.,
   Hong Kong (e-mail: qichen@ie.cuhk.edu.hk). Raymond W. Yeung is with
   the Institute of Network Coding and the Department of Information
   Engineering, The Chinese University of Hong Kong, N.T., Hong Kong,
   and with the Key Laboratory of Network Coding Key Technology and
   Application and Shenzhen Research Institute, The Chinese University
   of Hong Kong, Shenzhen, China (e-mail:
   whyeung@ie.cuhk.edu.hk). This paper was presented in part at ITW2013\cite{CY13}.}
}

\maketitle

\begin{abstract}
Let $\N=\{1,\cdots,n\}$. The entropy function $\h$ of a set of $n$
discrete random variables $\{X_i:i\in\N\}$
is a $2^n$-dimensional vector whose entries are $\h(\mc{A})\triangleq H(X_{\mc{A}}),\mc{A}\subset\N$, the (joint) entropies
of the subsets of the set of $n$ random variables with
$H(X_\emptyset)=0$ by convention.
The set of all entropy functions for $n$ discrete random variables,
denoted by $\Gamma^*_n$, is called the entropy region for
$n$. Characterization of $\Gamma^*_n$ and its closure
$\overline{\Gamma^*_n}$ are well-known
open problems in information theory. They are  important not
only because they play key roles in information theory problems
but also they are related to other subjects in mathematics and physics.

In this paper, we consider \emph{partition-symmetrical entropy
  functions}. Let $p=\{\N_1,\cdots, \N_t\}$
be a $t$-partition of $\N$. 
An entropy function $\bf h$ is called 
$p$-symmetrical if for all ${\cal A},{\cal B} \subset {\cal N}$, $\h({\cal A}) = \h({\cal B})$ whenever 
$|{\cal A} \cap {\cal N}_i| = |{\cal B} \cap {\cal N}_i|$, $i = 1,
\cdots,t$. The set of all the $p$-symmetrical entropy functions,
denoted by $\Psi^*_p$, is called $p$-symmetrical entropy function region.
We prove that $\overline{\Psi^*_p}$,
the closure of $\Psi^*_p$, is completely characterized by
Shannon-type information inequalities if and only if $p$ is the
$1$-partition or a $2$-partition with one of its blocks being a
singleton. 

The characterization of the partition-symmetrical entropy
functions can be useful for solving some information theory 
and related problems where symmetry exists in the structure of the
problems.

\textbf{Keywords:} entropy, entropy function, information inequality,
polymatroid.
\end{abstract}

\section{Introduction}
\label{sec:1}
Let $\mc{N}=\{1,\cdots,n\}$. 
For a set of  (discrete) random variables
$X_\mc{N}=\{X_i:{i\in\mc{N}}\}$, 
we define a function $\h:2^{\N}\rightarrow\SetR$ by
\begin{equation*}
  \h(\mc{A})=H(X_\mc{A}),\ \mc{A}\subset\N
\end{equation*}
with $H(X_\emptyset)=0$ by convention.
Then $\h$ is called the \emph{entropy function} of $X_\N$.

Let $\hsp_n=\SetR^{
2^\mc{N}}$ be the \emph{entropy space} for $n$
random variables.\footnote{For a field $\mathbb{F}$ and a set $S$,
$\mathbb{F}^S$ denotes an $|S|$-dimensional space $\mathbb{F}^{|S|}$
whose coordinates are
labeled by $s\in S$.}
A vector $\h\in\hsp_n$ is called \emph{entropic}
if $\h$ is the entropy function for some set of $n$ random
variables, otherwise, it is called \emph{non-entropic}. The region in
$\hsp_n$ of all entropy functions 
is denoted by
$\Gamma^*_n$, called the \emph{entropy region}. As $\h(\emptyset)=0$ for any $X_\N$,
$\Gamma^*_n\subset\hsp^0_n\triangleq\{\h\in\hsp_n:\h(\emptyset)=0\}$
which is a subspace of $\hsp_n$\cite{Y97}.

It is well known that the entropy function satisfies the following
polymatroidal axioms: for all $\mc{A}, \mc{B}\subset\mc{N}$,
 \begin{align*}
 &\h({\emptyset})=0,\\
 &\h({\mc{A}})\leq \h({\mc{B}}), \quad \text{if }\mc{A}\subset\mc{B},\\
 &\h({\mc{A}})+\h({\mc{B}})\geq
 \h({\mc{A}\cap\mc{B}})+\h({\mc{A}\cup\mc{B}}),
 \end{align*}
that is, any entropy function $\h$ is  (the rank
function of) a polymatroid \cite{F78}. The
polymatroidal axioms are equivalent to the basic information
inequalities \cite[App. 14A]{Y08} from which all Shannon-type
information inequalities can be derived. The set of polymatroids, or
equivalently, the region bounded by
Shannon-type information inequalities,
is denoted by $\Gamma_n$, called the \emph{polymatroidal region}. Then
$\Gamma^*_n\subset\Gamma_n$.

Now the question is whether all polymatroids are entropic, or
whether $\Gamma^*_n=\Gamma_n$. It can be shown that
$\Gamma^*_2=\Gamma_2$, while $\Gamma^*_3\subsetneq\Gamma_3$ due to
the existence of non-entropic polymatroids on the boundary of
$\Gamma_3$\cite{ZY97,M06,CY12}.
However, by taking the closure of
$\Gamma^*_3$, we have $\overline{\Gamma^*_3}=\Gamma_3$. The vectors
in $\overline{\Gamma^*_n}$ are called \emph{almost entropic}. Thus,
all polymatroids are almost entropic when $n=3$\cite{ZY97}.
This was proved not
to be true for $n\ge 4$, i.e.
$\overline{\Gamma^*_n}\subsetneq\Gamma_n$, due to
the existence of unconstrained non-Shannon-type information
inequalities\cite{ZY98}. For a comprehensive treatment of the
subject, we refer the readers to \cite[Chapter 13-15]{Y08}.

Following the discovery of the first unconstrained non-Shannon-type information
inequality in \cite{ZY98}, many such inequalities have been
found, e.g., \cite{YYZ01,MMRV02,Z03, DFZ06,XWS08,DFZ11}. The region $\Gamma^*_n$ was proved to be
``solid inside''\cite[Theorem 1]{M07b}, that is, for any
$\h\in\overline{\Gamma^*_n}$ and $\h\notin\Gamma^*_n$, $\h$ must be on
the boundary of $\overline{\Gamma^*_n}$. It
was further proved in \cite{M07a} that there exist infinitely many
independent linear
non-Shannon-type inequalities.

Characterizations of $\Gamma^*_n$ and its closure
$\overline{\Gamma^*_n}$ are important not only because information
inequalities play key roles in the proof of converse coding theorems
but also they are related to probability theory, quantum mechanics\cite{NC00} and
matrix theory\cite{CGY12}, and have one-to-one correspondence with network coding\cite{CG08},
group theory\cite{CY02}, Kolmogorov complexity\cite{HRSV00} and
combinatorics\cite{C01}. For a comprehensive treatment of the relations between
entropy region $\Gamma^*_n$ and other subjects, readers are referred to
\cite{Y12, C11}. However, full characterizations of
$\Gamma^*_n$ and $\overline{\Gamma^*_n}$
are extremely difficult. To obtain partial characterizations of these regions,
constraints can be added to the boundary of the region and
corresponding constrained
non-Shannon-type inequalities have been discovered\cite{ZY97,KR11,KR12,KR13}.

In this paper, we consider \emph{partition-symmetrical entropy
  functions} defined as follows. A
partition $p$ of $\N$ is a set of nonempty subsets $\{\N_1,\cdots,\N_t\}$ of $\N$ such that distinct blocks $\N_i$ and
$\N_j$ are disjoint and $\cup^t_{i=1}\N_i=\N$. It induces a permutation group
$\Sigma_p$ whose members are those
permutations that keep the elements of each block $\N_i$ in the same
block. We define an action of group $\Sigma_p$ on
the entropy space $\hsp_n$, for any
$\sigma\in\Sigma_p$ and for $\h\in\hsp_n$,
$\sigma(\h)(\mc{A})=\h(\sigma(\mc{A})),\ \mc{A}\subset\N$.
This group action can naturally induce an action on the power set of
$\hsp_n$, i.e., for any $T\subset\hsp_n$,
$\sigma(T)=\{\sigma(\h):\h\in T\}$.
Then the
fixed set of $\hsp_n$, $\fix_p=\{\h\in\hsp_n:\sigma(\h)=\h,\ \text{for all
}\sigma\in\Sigma_p\}$, is a subspace of
$\hsp_n$ and is called the $p$-symmetrical subspace. 
It can be seen
that, $\h\in \fix_p$ if and only if $\h(\mc{A})=\h(\mc{B})$ whenever
$|\mc{A}\cap\N_i|=|\mc{B}\cap\N_i|$ for all $i=0,\cdots,t$. We call
a polymatroid or an entropy function $p$-symmetrical if it is in
the $p$-symmetrical subspace. The $p$-symmetrical polymatroids and
$p$-symmetrical entropy functions form the $p$-symmetrical
polymatroidal region,
denoted by $\Psi_p$, and $p$-symmetrical entropy function region,
denoted by $\Psi^*_p$,
respectively.

We prove in Theorem \ref{bfdp} that $\Psi_p=\overline{\Psi^*_p}$ if and only if $p$
is the 1-partition or a 2-partition with one of its blocks being a
singleton when $n\ge 4$. To prove the ``if'' part of Theorem \ref{bfdp}, we analyze the
extreme rays of $\Psi_p$ for the two cases of $p$ such that
$\Psi_p=\overline{\Psi^*_p}$ and show that these extreme rays contain
factors of uniform matroids which are almost entropic. To prove the ``only
if'' part of Theorem \ref{bfdp}, we show that $\Psi_p$ for other $p$
contain polymatroids that can be restricted to a factor of the V\'amos
matroid, which is known to be not almost entropic. Toward establishing
Theorem \ref{bfdp}, we prove some symmetrical properties pertaining to
$\Gamma_n$, $\Gamma^*_n$, $\Psi_p$ and $\Psi^*_p$. In particular, we
prove in Theorem \ref{bafoi} that each facet of $\Psi_p$ corresponds
to a
$p$-orbit of facet of $\Gamma_n$.

The rest of the paper is organized as follows.
Section \ref{udifb} gives the
preliminaries on convex analysis, matroid theory, partitions and group
theory that
are relavent to the discussion in this paper.
The problems 
are set up in Section \ref{bafdg}, where Theorem \ref{bfdp}, the main theorem is
stated. In Section
\ref{yuber}, we 
establish some symmetrical properties pertaining to $\Gamma_n$ and
$\Gamma^*_n$. The proof of this theorem is given in Section \ref{5555}. 
Discussions on applications to secret-sharing and further research
are in the last section. 
We close this section with the list of
notations in this paper.

{
\scriptsize 
\begin{longtable}{lllc}
    \hline
\hline
$\N$&$\{1,\cdots,n\}$& index set & Section \ref{sec:1}\\

$\hsp_n$&$\SetR^{2^\mc{N}}$& entropy space & \\

$\hsp^0_n$&$\{\h\in\hsp_n:\h(\emptyset)=0\}$& & \\

$\Gamma^*_n$& &entropy function region&\\

$\Gamma_n$& &polymatroidal region&\\
\hline
$\aff{A}$&&affine hull of $A$&Subsection \ref{sec:convex-cones}\\
 $E(i)$&$\{\h\in \pmr: \h(\mathcal{N}\setminus\{i\})=\h(\mathcal{N})\},\
        i\in\mathcal{N}$& facets of
        $\Gamma_n$, first type&\\
$E(ij,\mc{K})$&$\{\h\in \pmr:
\h(\mc{K}\cup\{i\})+\h(\mc{K}\cup\{j\})=\h(\mc{K})$&facets of
        $\Gamma_n$, second type
&\\
&$+\h(\mc{K}\cup\{i,j\})\}, i,j\in\mc{N},\mc{K}\subset
        \mathcal{N}\setminus\{i,j\}$&&\\
$E(\mc{I},\mc{K})$& $E(i)=E(\{i\},\emptyset)$,
$E(ij,\mc{K})=E(\{i,j\},\mc{K})$& &\\ 
$\mc{E}_n$& &the set of all facets of $\Gamma_n$& \\
  $\ball{\mb{c}}{r}$&$\{\mb{x}\in\SetR^d:\|\mb{x}-\mb{c}\|_2<r\}$&open
  ball&\\
$\ri{A}$&$\{\mb{x}\in\SetR^d:\exists
  \epsilon>0,\ball{\mb{x}}{\epsilon}\cap \aff{A}\subset A\}$&relative interior of $A$&\\
$\relbd{A}$&$\overline{A}\setminus\ri{A}$& relative boundary of $A$&\\
$U_{m,n}$& $U_{m,n}(\mc{A})=\min\{m,
  |\mc{A}|\}\ \mc{A}\subset\N$&uniform matroid&Subsection
  \ref{sec:matroid-theory}\\
$p$&$\{\N_1,\cdots,\N_t\}$ s.t. $\N_i$ disjoint and
$\N=\cup^t_{i=1}\N_i$&partition of $\N$&Subsection \ref{sdsb}
  \\
$\mc{P}_n$&&the set of all partitions of $\N$&\\
$\mc{P}_{t,n}$&&the set of all $t$-partitions of $\N$&\\
$\lambda_{\mc{A},p}$&$(|\mc{A}\cap\N_1|,\cdots,|\mc{A}\cap\N_t|)$&partition
vector of $\mc{A}$ under $p$&\\
$\lambda_{\mc{A},p}(i)$&$|\mc{A}\cap\N_i|$& the $i$-th entry of
$\lambda_{\mc{A},p}$&\\
  $\lambda_p$& $\lambda_{\N,p}$&the partition vector of $p$&\\
$\mb{n}$&$[n_1,\cdots,n_t]$ with
  $0<n_i\le n_j,1\le i<j\le t$&partition of $n$&\\
$\mb{n}_p$&&nondecreasing arrangement of $\lambda_{p}(i)$&\\
$\Sigma_n$&&symmetric group on $\N$&Section
  \ref{hidmx}\\
$\Sigma_p$ &$\{\sigma\in\Sigma_n: \sigma(j)\in\N_i,\  j\in\N_i,\ i=1,\cdots,t\}$&$p$-group&\\
$\mc{O}_p,\mc{O}_p(\h)$&&$p$-orbit&\\
$\fix_{p} (T)$&&fixed set of $\mc{T}$ under $p$-group&\\
 $S_p$&$\{\h\in\hsp_n:\ \h(\mc{A})=\h(\mc{B}),
  \text{ if } \lambda_{\mc{A},p}=\lambda_{\mc{B},p}\}$&$p$-symmetrical
  subspace&\\
\hline
$\mc{N}_p$&$\{(k_1,\cdots,k_t):k_i\in\{0,1,\cdots,n_i\},i=1,\cdots,t\}$&&Section \ref{bafdg}\\
$\s(\h,p)$&$(s_{k_1,\cdots,k_t})_{(k_1,\cdots,k_t)\in\N_p}$&&\\
$\Psi^*_p$&$\Gamma^*_n\cap S_p$&$p$-symmetrical entropy function
region&\\
$\Psi_p$&$\Gamma_n\cap S_p$&$p$-symmetrical polymatroidal
region&\\
$S^0_p$&$\{\h\in S_p:\h(\emptyset)=0\}$&&\\
$\mc{P}^*_n$&&representative set of partitions of $\N$&\\
$\mc{P}^*_{t,n}$&&representatives set of $t$-partitions of $\N$&\\
\hline
$\mf{F}_p$&&collection of all $p$-orbits of faces of $\Gamma_n$&Section \ref{yuber}\\
$\mf{E}_p$&&collection of all $p$-orbits of facets of $\Gamma_n$&\\
 $\mf{N}_p$&&set
of all distinct pairs of
$(\lambda_{\mc{I},p},\lambda_{\mc{K},p})$&\\
$\mc{E}_p(\lambda_{\mc{I},p},\lambda_{\mc{K},p})$&&the $p$-orbit that
$E(\mc{I},\mc{K})$ belongs to&\\
$\mb{0}_t$&&zero vector with dimension
$t$&\\
$\mb{1}_{t}(l)$&&$t$-dimensional vector with the $l$-th &\\
&&entry 1 and other
entries 0&\\
$\mb{2}_{t}(l)$&&$t$-dimensional vector with the $l$-th &\\
&&entry 2 and other
entries 0&\\
$\mb{1}_{t}(l_1,l_2)$&&$t$-dimensional vector with the $l_1$-th&\\
&&
and $l_2$-th entries 1 and other entries 0&\\
$\mc{G}_p$&&the collection of all facets of $\Psi_p$&\\
$\omega_p$&$E\mapsto E\cap S_p$ &&\\
$G_p(\lambda_{\mc{I},p},\lambda_{\mc{K},p})$&$
  E(\mc{I},\mc{K})\cap S_p$&facet of $\Psi_p$&\\
\hline
$\mf{E}_{\mc{E},p}$&& the family
$p$-orbits contained in $p'$-orbit $\mc{E}$&Appendix\\
\hline
\hline
  \caption{Notation List}
  \label{tab:1}
  \end{longtable}

}

\section{Preliminaries}
\label{udifb}
\subsection{Convex cone}
\label{sec:convex-cones}
A convex set $C\subset\SetR^d$ is called a \emph{convex cone} if for
any $\mb{c}\in C$ and $a\ge 0$, we have $a\mb{c}\in C$. 
A convex
cone which does not contain a line is called \emph{pointed}. In
this paper, convex cones are assumed to be pointed and closed unless
otherwise specified. From the definition, it can be seen that
$\Gamma_n$ is a convex cone in $\hsp_n$. It was shown in
\cite{ZY97} that $\overline{\Gamma^*_n}$ is also a convex cone.

A convex cone is called \emph{polyhedral} if it is the intersection
of a finite set of closed halfspaces. Since each closed halfspace is
induced by a linear inequality and the number of linear inequalities
in the set of polymatroidal axioms is finite for a fixed $n$,
$\Gamma_n$ is a polyhedral cone. On the contrary,
$\overline{\Gamma^*_n}$ is not polyhedral as proved
in \cite{M07a}.

A hyperplane $P$ is called a \emph{supporting hyperplane} of a
convex set $C\subset\SetR^d$
if one of its corresponding closed halfspaces\footnote{For a
hyperplane $P=\{\x\in\SetR^d: \mb{c}^T\x=a\}$, its two
corresponding closed halfspace are $\{\x\in\SetR^d: \mb{c}^T\x
\leq a\}$ and $\{\x\in\SetR^d: \mb{c}^T\x\geq a\}$, where
$\mb{c}\in\SetR^d$ and $a\in\SetR$.} 
$P^+\supset C$ and
$\mr{dist}(P,C)=0$.\footnote{Given $A, B\subset\SetR^d$,
$\mr{dist}(A,B)=\inf_{x\in A,y\in B}\|x-y\|_2.$
} 

\begin{definition}
[Face] 
\label{def:sd} 
A \emph{face} of a convex cone
$C\subset\SetR^d$
 is the cone $C$ itself or
$C\cap P$, where $P$ is a supporting hyperplane of $C$. A face that
is not $C$ or the origin is called a \emph{proper face} of the cone.
\end{definition}

Note that if $\mr{dim} C< d$,\footnote{
Given $A\subset\SetR^d$, $\mr{dim}A$
is defined by the dimension of $\aff{A}$, the affine hull of
$A$.
}  for any hyperplane $P\supset C$, we
have $C=C\cap P$. We can readily see that $P$ is a supporting hyperplane
of $C$.
Therefore, for convex cone $C$ with $\mr{dim}
C< d$, all faces $F$ of
$C$ can be written as $C\cap P$ for some supporting hyperplane $P$ of $C$.

A \emph{maximum proper face} of a convex cone is a proper face which
is not contained by any other proper face. Similarly, a
\emph{minimum proper face} is a proper face that does not contain
other proper faces.

\begin{definition}[Extreme ray] An \emph{extreme ray} $R$ of
a convex cone $C$ is a subset of $C$ and for any $\mb{r}\in R$ such
that $\mb{r}=\mb{c}_1+\mb{c}_2$ and $\mb{c}_1,\mb{c}_2\in C$, we
have $\mb{c}_1, \mb{c}_2\in R$, where $\mb{c}_1=a \mb{r}$ and
$\mb{c}_2=(1-a)\mb{r}$ for some $a\in\SetR$.
\end{definition}

For a polyhedral convex cone $C$ with $\mathrm{dim}
C=d'$, where $d'\ge 2$, 
the maximum proper faces, also called \emph{facets}, are the
$(d'-1)$-dimensional faces, while the minimum proper faces are the
$1$-dimensional faces and they coincide with the extreme rays
of the cone. Note that if $\mathrm{dim}
C=1$, then $C$ does not have a proper face.

The family $\mc{F}$ of faces of a convex cone $C$ form a lattice
called the \emph{face lattice} of the convex cone, which is
partially ordered by inclusion. That is, for any $F_1, F_2\in
\mc{F}$, $F_1\le F_2$ if and only if $F_1\subset F_2$. Furthermore,
the faces of a convex cone have the following properties.

\begin{proposition}
 Any non-origin face of a
convex cone is the convex combination of some extreme rays of the cone.
\end{proposition}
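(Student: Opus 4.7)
The plan is to reduce this to the Krein--Milman theorem by passing to a compact base of the face. Let $F$ be a non-origin face of the pointed closed convex cone $C\subset\SetR^d$. First, I would observe that $F$ itself is a pointed closed convex cone: it equals $C$ or $C\cap P$ for some supporting hyperplane $P$, and in either case it inherits convexity, closedness, and pointedness from $C$. Because $F$ is pointed, there exists a hyperplane $H\subset\SetR^d$ such that $B\triangleq F\cap H$ is a compact convex set (a \emph{base} of $F$), and every nonzero $\mb{f}\in F$ has a unique representation $\mb{f}=\alpha\mb{b}$ with $\alpha>0$ and $\mb{b}\in B$.

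Next, I would apply the Krein--Milman theorem to $B$: every $\mb{b}\in B$ is a convex combination of the extreme points of $B$. Combined with the base representation, this shows that every $\mb{f}\in F$ is a non-negative combination of vectors lying on rays through extreme points of $B$, so that $F$ coincides with the conic hull of these rays.

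It then remains to identify these rays with extreme rays of $C$. I would verify two claims in succession: (i) the ray $\{\alpha\mb{b}:\alpha\ge 0\}$ through an extreme point $\mb{b}$ of $B$ is an extreme ray of $F$, which follows by transferring convex combinations in $B$ to conic combinations in $F$ via the base map; and (ii) every extreme ray $R$ of $F$ is an extreme ray of $C$. For (ii), if $\mb{r}\in R$ and $\mb{r}=\mb{c}_1+\mb{c}_2$ with $\mb{c}_1,\mb{c}_2\in C$, the face property forces $\mb{c}_1,\mb{c}_2\in F$, so by extremeness of $R$ in $F$ both $\mb{c}_i$ are non-negative scalar multiples of $\mb{r}$.

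The main obstacle is step (ii), which hinges on the fact that a face is closed under splitting, i.e., $\mb{c}_1+\mb{c}_2\in F$ and $\mb{c}_1,\mb{c}_2\in C$ imply $\mb{c}_1,\mb{c}_2\in F$. Using Definition~\ref{def:sd}, write $F=C\cap P$ for a supporting hyperplane $P$ of $C$ with $C\subset P^+$; if $\mb{c}_1+\mb{c}_2\in P$ and $\mb{c}_1,\mb{c}_2\in P^+$, linearity of the defining linear functional of $P$ forces $\mb{c}_1,\mb{c}_2\in P$, as desired. A secondary technical point is the existence of the base hyperplane $H$, which follows from pointedness by standard separation arguments; in the polyhedral case it can be taken explicitly to be a level set of the sum of the nonnegative defining functionals of $C$.
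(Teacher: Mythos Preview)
Your argument is correct and follows the standard route: pass to a compact base of the face, apply Krein--Milman (or, equivalently in finite dimensions, Minkowski's theorem) to write every point of the base as a convex combination of extreme points, lift back to conic combinations, and then push extreme rays of $F$ down to extreme rays of $C$ using the face-splitting property. Your handling of step~(ii) via the supporting-hyperplane description $F=C\cap P$ is exactly the right mechanism.

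There is nothing to compare against: the paper states this proposition without proof, treating it as a standard background fact from convex analysis (with references to Gr\"unbaum, Ziegler, and Rockafellar). Your write-up supplies precisely the kind of argument those references contain. One minor remark: the paper's phrasing ``convex combination of some extreme rays'' is informal shorthand for what you correctly interpret as the conic hull of a set of extreme rays; you might make that reading explicit if you formalize the proof.
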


\begin{proposition}
\label{njkdf}
 The intersection of any collection of faces
of a convex cone is a face of the cone.
\end{proposition}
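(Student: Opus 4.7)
The plan is to show that any intersection of faces can be written as $C \cap P$ for a single supporting hyperplane $P$, matching the definition of a face given in Definition \ref{def:sd}. First I would dispatch the trivial case: if every face in the collection equals $C$, the intersection is $C$ itself, which is a face by definition. Otherwise, let $J$ index the proper faces in the collection; each such $F_\alpha$ can be written as $C \cap P_\alpha$ for some supporting hyperplane $P_\alpha$.

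The key preliminary observation is that every supporting hyperplane of a cone passes through the origin, so each $P_\alpha$ has the form $\{\mb{x}\in\SetR^d : \mb{c}_\alpha^T \mb{x} = 0\}$ with $\mb{c}_\alpha \neq \mb{0}$ and $\mb{c}_\alpha^T \mb{x} \le 0$ on $C$. This follows because $\mb{0}\in C$, and $\lambda \mb{x}\in C$ for all $\lambda\ge 0$ and $\mb{x}\in C$, which together force the defining constant of any supporting halfspace of $C$ to vanish.

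Next I would invoke finite dimensionality to handle possibly uncountable $J$: the set $\bigcap_{\alpha\in J} P_\alpha$ is a linear subspace of $\SetR^d$ cut out by hyperplanes through the origin, and any such intersection must already coincide with $\bigcap_{\alpha\in J'} P_\alpha$ for some finite $J'\subset J$. I would then define $\mb{c} = \sum_{\alpha\in J'}\mb{c}_\alpha$ and $P = \{\mb{x}:\mb{c}^T\mb{x}=0\}$. Three routine verifications complete the argument: $P^+\supset C$ by summing the inequalities $\mb{c}_\alpha^T\mb{x}\le 0$ (and $\mb{0}\in C\cap P$ gives $\mr{dist}(P,C)=0$); the equality $C\cap P = C\cap \bigcap_{\alpha\in J'} P_\alpha$ holds because a sum of nonpositive numbers vanishes only when each summand does; and $\mb{c}\neq\mb{0}$, since otherwise the same vanishing argument would force $C\subset P_\alpha$ for each $\alpha\in J'$, contradicting $F_\alpha\neq C$.

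The main obstacle I anticipate is handling arbitrary (possibly uncountable) collections of faces, and it is resolved cleanly by the finite-dimensional reduction to a subcollection $J'$; once that reduction is in hand, the construction of the single supporting hyperplane as a nonnegative combination and the three verifications above are essentially mechanical.
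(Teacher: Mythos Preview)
The paper does not prove this proposition; it is stated in the preliminaries (Subsection~\ref{sec:convex-cones}) as a standard fact from convex geometry, alongside Propositions~1, \ref{vssld}, and~\ref{coidl}, with the reader referred to \cite{G03,Z95,R70} for details.

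Your argument is correct in substance and is a clean self-contained proof. One small wording issue: you let $J$ index the \emph{proper} faces in the collection, but by Definition~\ref{def:sd} the origin is a face that is neither $C$ nor proper, so your case split misses collections containing $\{\mb{0}\}$. The fix is trivial: since by Definition~\ref{def:sd} every face other than $C$ itself is of the form $C\cap P$ for some supporting hyperplane, simply let $J$ index all faces in the collection that are not equal to $C$; the rest of your construction (finite-dimensional reduction, averaging the normals $\mb{c}_\alpha$, and the three verifications) then goes through verbatim. Alternatively, dispose of the case where $\{\mb{0}\}$ appears in the collection separately, since then the intersection is $\{\mb{0}\}$, which is a face.
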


\begin{proposition}
\label{vssld}
  Any face of a convex cone that is not the cone itself is the
  intersection of some facets.
\end{proposition}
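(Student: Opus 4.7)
The plan is to exploit the halfspace representation of $C$ as a polyhedral cone and to track, at each point, which facet-defining hyperplanes are active. Write $C=\bigcap_{i=1}^{m}\{\x\in\SetR^d:\mb{c}_i^T\x\ge 0\}$, where each facet has the form $E_i=C\cap\{\x:\mb{c}_i^T\x=0\}$. For any $\x\in C$, let $I(\x)=\{i:\mb{c}_i^T\x=0\}$ be the indices of active constraints at $\x$, and for any face $F$, let $I(F)=\{i:F\subset E_i\}$ be the indices of facets containing $F$.

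Given a face $F$ of $C$ with $F\ne C$, I would set $G=\bigcap_{i\in I(F)}E_i$. By Proposition \ref{njkdf} $G$ is a face, and $F\subseteq G$ is trivial; the whole point is to prove the reverse inclusion $G\subseteq F$. For this I would pick any $\x\in\ri{F}$ (so that $I(\x)=I(F)$; see below) and any $\mb{y}\in G$, and consider the affine combination $\mb{z}_\epsilon=(1+\epsilon)\x-\epsilon\mb{y}$. For each $i\in I(F)$ we have $\mb{c}_i^T\x=0=\mb{c}_i^T\mb{y}$, hence $\mb{c}_i^T\mb{z}_\epsilon=0$. For each $i\notin I(F)$ we have $\mb{c}_i^T\x>0$ while $\mb{c}_i^T\mb{y}\ge 0$, so $\mb{c}_i^T\mb{z}_\epsilon>0$ for all sufficiently small $\epsilon>0$. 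Thus $\mb{z}_\epsilon\in C$, and rewriting gives
\begin{equation*}
\x=\frac{1}{1+\epsilon}\mb{z}_\epsilon+\frac{\epsilon}{1+\epsilon}\mb{y},
\end{equation*}
expressing $\x\in F$ as a convex combination of two points of $C$. The face property of $F$ then forces $\mb{z}_\epsilon,\mb{y}\in F$, in particular $\mb{y}\in F$, which completes the proof.

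The main obstacle I anticipate is not the main argument but the supporting fact $I(\x)=I(F)$ for $\x\in\ri{F}$, which the excerpt has not recorded. I would insert a short preliminary lemma: the sets $\{\x\in C:I(\x)=J\}$, as $J$ ranges over subsets of $\{1,\dots,m\}$, partition $C$ into relatively open convex pieces; the closure of each non-empty piece is the face $\bigcap_{i\in J}E_i$, and the piece itself is the relative interior of that face. With this in hand the choice of $\x$ is legitimate and the construction of $\mb{z}_\epsilon$ keeps $\mb{z}_\epsilon$ in the same relatively open stratum as $\x$, so no degenerate behaviour occurs. The boundary case $F=\{\mb{0}\}$ is absorbed automatically, since every facet contains the origin and the pointedness of $C$ forces $\bigcap_{i=1}^{m}E_i=\{\mb{0}\}$.
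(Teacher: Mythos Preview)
The paper does not prove Proposition~\ref{vssld}; it is listed among the convex-analysis preliminaries and the reader is referred to \cite{G03,Z95,R70}. So there is no ``paper's proof'' to compare against. Your argument is correct and is in fact the standard proof for polyhedral cones, which is the only setting the paper uses.

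Two minor presentational points, neither a gap. First, writing $C=\bigcap_{i=1}^{m}\{\x:\mb{c}_i^T\x\ge 0\}$ with each $E_i=C\cap\{\x:\mb{c}_i^T\x=0\}$ a facet tacitly assumes $C$ is full-dimensional and the description is irredundant; in the paper $\Gamma_n$ lives in the hyperplane $\{\h:\h(\emptyset)=0\}$ and is not full-dimensional in $\hsp_n$. This is handled by working inside $\aff{C}$, so it is cosmetic. Second, the ``face property'' you use in the last step (a convex combination lying in $F$ forces both summands into $F$) is not the paper's Definition~\ref{def:sd} verbatim, but it follows in one line: if $F=C\cap P$ with $P=\{\x:\mb{a}^T\x=0\}$ supporting and $\x=\lambda\mb{z}_\epsilon+(1-\lambda)\mb{y}$ with $\mb{z}_\epsilon,\mb{y}\in C\subset P^+$, then $0=\mb{a}^T\x=\lambda\mb{a}^T\mb{z}_\epsilon+(1-\lambda)\mb{a}^T\mb{y}$ forces both terms to vanish. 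You might add that sentence so the argument is self-contained relative to the paper's definitions.
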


For convex sets and convex polyhedral cones, readers are referred to
\cite{G03,Z95,R70} for a detailed discussion.

Elemental information inequalities involving $X_\mc{N}$ have the
following two forms:
    \begin{enumerate}
        \item $H(X_\mathcal{N})\geq H(X_{\mathcal{N}\setminus\{i\}}),\
        i\in\mathcal{N}$;
        \item $I(X_i;X_j|X_\mc{K})\geq 0,\quad$ distinct $i,j\in\mc{N},\ \mc{K}\subset
        \mathcal{N}\setminus\{i,j\}$.
    \end{enumerate}
These inequalities are called elemental since every Shannon-type
information inequality can be written as a conic
combination of these inequalities, and they form the minimal set of
inequalities that has this property \cite[Section 14.6]{Y08}. In other
words, the two forms of elemental inequalities give the ``minimal''
characterization of $\pmr$.

Setting the elemental inequalities to equalities and intersecting
the corresponding hyperplanes with $\Gamma_n$, we
obtain the facets of the cone $\Gamma_n$ in the following forms:
    \begin{enumerate}
        \item $E(i)\triangleq\{\h\in \pmr: \h(\mathcal{N}\setminus\{i\})=\h(\mathcal{N})\},\
        i\in\mathcal{N}$;
        \item $E(ij,\mc{K})\triangleq\{\h\in \pmr: \h(\mc{K}\cup\{i\})+\h(\mc{K}\cup\{j\})=\h(\mc{K})+\h(\mc{K}\cup\{i,j\})\},\ \text{ distinct } i,j\in\mc{N},\mc{K}\subset
        \mathcal{N}\setminus\{i,j\}$.
    \end{enumerate}
The set of all facets of $\Gamma_n$ is denoted by $\mc{E}_n$. For
notational convenience, the members of $\mc{E}_n$ are denoted by
$E(\mc{I},\mc{K})$: $E(i)=E(\{i\},\emptyset)$,
$E(ij,\mc{K})=E(\{i,j\},\mc{K})$.

For $\mb{c}\in\SetR^d$ and $r>0$, let
$\ball{\mb{c}}{r}=\{\mb{x}\in\SetR^d:\|\mb{x}-\mb{c}\|_2<r\}$, the open
ball centered at $\mb{c}$ with radius $r$. 

\begin{definition}[Relative interior, relative boundary]
  For a set $A\subset\SetR^d$,
  $\ri{A}\triangleq\{\mb{x}\in\SetR^d:\exists
  \epsilon>0,\ball{\mb{x}}{\epsilon}\cap \aff{A}\subset A\}$ is called
  the \emph{relative interior} of $A$, where $\aff{A}$ is the affine
  hull of $A$, and $\relbd{A}\triangleq
  \overline{A}\setminus\ri{A}$ is called the \emph{relative boundary} of $A$.
\end{definition}

\begin{proposition}
  \label{coidl}
For any polyhedral cone, its relative boundary is the union of all its facets.
\end{proposition}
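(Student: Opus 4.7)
The plan is to prove the equality $\relbd{C} = \bigcup_{F} F$, where $F$ ranges over the facets of the polyhedral cone $C$, by establishing both inclusions. Throughout, I work relative to $\aff{C}$, since $\ri{C}$ and $\relbd{C}$ are defined with respect to this affine hull.

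For the inclusion $\bigcup_{F} F \subset \relbd{C}$, let $F$ be a facet, so by Definition \ref{def:sd} we have $F = C \cap P$ for some supporting hyperplane $P$ of $C$. Since $F$ is a proper face, $F \neq C$, so $P$ cannot contain $\aff{C}$; hence $P \cap \aff{C}$ is a proper hyperplane of $\aff{C}$, with $C$ sitting inside the closure of one of the two open half-flats this hyperplane cuts out of $\aff{C}$. For any $x \in F \subset P$ and any $\epsilon>0$, the set $\ball{x}{\epsilon} \cap \aff{C}$ meets the opposite open half-flat, and no such point lies in $C$. Therefore no $\epsilon$ realises the defining condition of $\ri{C}$ at $x$, so $x \notin \ri{C}$; combined with $x \in C = \overline{C}$ (the cone being closed) this yields $x \in \relbd{C}$.

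For the reverse inclusion $\relbd{C} \subset \bigcup_{F} F$, take $x \in \relbd{C} = C \setminus \ri{C}$. Because $x$ lies outside the relative interior, the supporting hyperplane theorem for closed convex sets, applied inside $\aff{C}$ and then lifted to an ambient hyperplane of $\SetR^d$, produces a supporting hyperplane $P$ of $C$ through $x$ whose intersection with $\aff{C}$ is a proper hyperplane of $\aff{C}$. Then $F' := C \cap P$ is a face of $C$ distinct from $C$ and contains $x$. Proposition \ref{vssld} identifies $F'$ as an intersection of facets of $C$, so $F'$ is contained in at least one facet, and hence $x$ belongs to that facet.

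The main obstacle is the second step: one must guarantee that the supporting hyperplane $P$ produced at an arbitrary relative-boundary point is \emph{properly} separating in $\aff{C}$ (rather than containing $\aff{C}$ entirely), for otherwise $C \cap P = C$ and the argument returns no proper face. This refined form of the supporting hyperplane theorem is exactly the characterization of $\ri{C}$ for closed convex sets, and once it is explicitly invoked the argument combines with the first inclusion and Proposition \ref{vssld} to finish the proof.
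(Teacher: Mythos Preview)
The paper states Proposition~\ref{coidl} without proof, treating it as a standard fact from polytope theory (consistent with the references \cite{G03,Z95,R70} cited just before for background on convex cones). So there is no paper proof to compare against.

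Your argument is correct and follows a standard route: the inclusion $\bigcup_F F\subset\relbd{C}$ is immediate from the supporting-hyperplane description of a proper face, and the reverse inclusion uses the supporting hyperplane theorem at a relative-boundary point together with Proposition~\ref{vssld} to land in a facet. Your explicit flag that the hyperplane produced must intersect $\aff{C}$ properly (so that $C\cap P\neq C$) is exactly the point one must check, and the refined supporting-hyperplane theorem you invoke handles it. One small caveat worth noting: under the paper's conventions a one-dimensional pointed cone has no proper faces and hence no facets, yet its relative boundary is $\{0\}$; so the proposition as literally stated needs $\dim C\ge 2$. This is implicit in the paper's usage (the result is applied only to $\Gamma_n$ and $\Psi_p$, both full-dimensional in their ambient subspaces) and does not affect your proof, which tacitly assumes facets exist when it invokes Proposition~\ref{vssld}.
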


\subsection{Matroid}
\label{sec:matroid-theory} 
There exist various cryptomorphic
definitions of a matroid. Here we discuss matroid theory from the
perspective of rank functions and regard matroids as special cases
of polymatroids. For a detailed treatment of matroid theory,
readers are referred to \cite{W76,O92}.

\begin{definition}
  A \emph{matroid} $M$ is an ordered pair $\{\mc{N}, \mb{r}\}$, where the \emph{ground set}
        $\mc{N}$ and the \emph{rank function} $\mb{r}$ satisfy the conditions
        that: for any
        $\mc{A},\mc{B}\subset\mc{N}$,
     \begin{itemize}
        \item $0\le\mb{r}(\mc{A})\le|\mc{A}|$ and
        $\mb{r}(\mc{A})\in\SetZ$.
        \item $\mb{r}(\mc{A})\leq \mb{r}(\mc{B}),\ \text{if }\mc{A}\subseteq\mc{B}$,
        \item $\mb{r}(\mc{A})+\mb{r}(\mc{B}) \geq \mb{r}({\mc{A}\cup\mc{B}})+\mb{r}({\mc{A}\cap\mc{B}})$.
     \end{itemize}
\end{definition}

For a matroid $M=\{\mc{N}, \mb{r}\}$, let $\mc{N}'\subset\mc{N}$ and let
$\mb{r}'$ be a set function which is the restriction of $\mb{r}$ on the
power set of $\mc{N}'$. Then $\{\mc{N}', \mb{r}'\}$ is called a
\emph{submatroid} of $M$. For $e\in\mc{N}$, if $\mb{r}(\{e\})=0$, $e$
is called a \emph{loop} of $M$.

Note that for a polymatroid $\h\in\pmr$, if
$\h(\mc{A})\in\SetZ$ and $\h(\mc{A})\le|\mc{A}|$, then $\h$ is a matroid.
Therefore, matroids are special cases of polymatroids. With a slight 
abuse of terminology, we do not differentiate a matroid and its
rank function. So $M, \rb(M)$ and $\rb$ may all denote the rank
function of $M$ when there is no ambiguity. 

For a matrix $D$ over a field $\mathbb{F}$, we can define a matroid by
letting the ground set $\N$ be
the set of columns of $D$ and the rank function $\rb(\mc{A})$ for $\mc{A}\subset\N$ be the
rank of the submatrix of $D$ whose columns are those in $\mc{A}$. It
can be checked that $\{\N, \rb\}$ is indeed a matroid. Such a matroid is called
\emph{representable} over $\mathbb{F}$ or $\mathbb{F}$-representable.

\begin{proposition}
\label{prop:rpr_entr}
 A representable matroid is almost entropic.
\end{proposition}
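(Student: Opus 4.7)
The plan is to exhibit, for any representable matroid $M=\{\N,\rb\}$, an entropy function that is a positive scalar multiple of $\rb$; since $\overline{\Gamma^*_n}$ is a convex cone containing $\Gamma^*_n$, this will place $\rb$ in $\overline{\Gamma^*_n}$.

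First I would reduce to the case that the representing field is finite. If $M$ is represented by a matrix $D$ over an arbitrary field $\mathbb{F}$, the rank $\rb(\mc{A})$ is the largest $r$ such that some $r\times r$ minor of the column-submatrix $D_\mc{A}$ is nonzero. Let $R\subset\mathbb{F}$ be the (finitely generated) subring generated by the entries of $D$, and let $S$ be the finite set of non-vanishing minors of $D$ that certify $\rb$. A standard specialization argument (Hilbert's Nullstellensatz, together with the fact that finitely generated rings over $\mathbb{Z}$ are Jacobson) yields a ring homomorphism $\phi\colon R\to\mathbb{F}_q$ to some finite field under which every element of $S$ remains nonzero. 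Applying $\phi$ entrywise to $D$ produces a matrix $D'$ over $\mathbb{F}_q$ whose rank function is still $\rb$.

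Next, for the finite-field case, let $D'$ be a $k\times n$ matrix over $\mathbb{F}_q$ with columns $d_1,\ldots,d_n$, and let $\mb{U}$ be uniformly distributed on $\mathbb{F}_q^k$. Define the random variables $X_i=\mb{U}^\top d_i$ for $i\in\N$, viewed as elements of $\mathbb{F}_q$. For any $\mc{A}\subset\N$, the tuple $X_\mc{A}$ is the image of $\mb{U}$ under the linear map $\mb{u}\mapsto\mb{u}^\top D'_\mc{A}$, and this image is a subspace of $\mathbb{F}_q^{|\mc{A}|}$ of dimension $\mathrm{rank}(D'_\mc{A})=\rb(\mc{A})$; moreover, since the map is linear and $\mb{U}$ is uniform, the pushforward distribution is uniform on the image. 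Hence $H(X_\mc{A})=\rb(\mc{A})\log q$, so the entropy vector of $X_\N$ satisfies $\h=(\log q)\,\rb\in\Gamma^*_n$. Since $\overline{\Gamma^*_n}$ is closed under positive scaling, we conclude $\rb=\h/\log q\in\overline{\Gamma^*_n}$.

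The main obstacle is the specialization step: in characteristic $0$ one must argue carefully that the vanishing/non-vanishing pattern of the relevant minors of $D$ can be preserved when passing to a finite residue field, which is where the commutative-algebra input enters. Once a finite-field representation is in hand, the uniform-vector construction giving $H(X_\mc{A})=\rb(\mc{A})\log q$ is completely elementary.
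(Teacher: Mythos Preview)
Your finite-field construction (take $\mb{U}$ uniform on $\mathbb{F}_q^k$, set $X_i=\mb{U}^\top d_i$, observe $H(X_\mc{A})=\rb(\mc{A})\log q$) and the subsequent scaling by the cone property of $\overline{\Gamma^*_n}$ are exactly what the paper does: its remark says that an $\mathbb{F}$-representable matroid is entropic when the logarithm base is $|\mathbb{F}|$, citing \cite[Theorem~7.3]{YLCZ}, and then invokes the cone property to remove the dependence on the base. So on the main point the two arguments coincide.

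Where you diverge is in generality. The paper's remark tacitly assumes $|\mathbb{F}|<\infty$ (otherwise ``logarithm base $|\mathbb{F}|$'' is meaningless), so it really only proves the proposition for matroids representable over a finite field. You add a genuine reduction step: pass to the finitely generated subring $R$ of $\mathbb{F}$ generated by the matrix entries, invert the finitely many nonzero minors, and use that a finitely generated $\mathbb{Z}$-algebra has a maximal ideal with finite residue field to specialize to some $\mathbb{F}_q$ while preserving the rank function. This is correct (the key input is the generalized Nullstellensatz / Jacobson property of finitely generated $\mathbb{Z}$-algebras), and it closes the gap the paper leaves open. For the purposes of this paper the gap is harmless, since the only representable matroids actually used (uniform matroids, Vandermonde representations) are representable over finite fields, but your version is the more complete statement.
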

\textbf{Remark} It is not difficult to show that an
$\mathbb{F}$-representable matroid is entropic if the base $b$ of the
logarithm defining entropy is taken to be 
$|\mathbb{F}|$ (see for example\cite[Theorem 7.3]{YLCZ}).  If $b$ is not
taken to be $|\mathbb{F}|$, since $\overline{\Gamma^*_n}$ is a
cone,
it follows that a representable matroid is almost entropic.

\begin{definition}[Free expansion, factor \cite{N78},\cite{W86}]
\label{def:fr_ex}
  Let $\h\in \pmr$ be an integer-valued polymatroid. Consider a set
$\mc{M}$ with cardinality $m\triangleq\sum_{i\in\N}\h(\{i\})$ and
any mapping $\phi:\N\rightarrow 2^{\mc{M}}$ such that $\phi(i)$ has
the cardinality $\h(\{i\}),i\in\N$ and
$\phi(i)\cap\phi(j)=\emptyset$ for $i\neq j$. Then the \emph{free
expansion} $\g\in\Gamma_m$ of $\h$ by $\phi$ is defined by
\begin{equation}
  \label{eq:2}
  \g(\mc{A})=\min_{\mc{B}\subset \N}\big(\h(\mc{B})+|\mc{A}\setminus \phi(\mc{B})|\big),\
  \mc{A}\subset \mc{M}.
\end{equation}
It is said that $\g$ \emph{factors to} $\h$ under $\phi$ or $\h$ is a
\emph{factor} of $\g$.
\end{definition}

It can be checked that $\g$ is also an
integer-valued polymatroid and furthermore, can be proved to be a
matroid \cite{N78}.
The fact that the integer-valued polymatroid $\h\in\pmr$ is a factor of $\g\in\Gamma_m$ under
some $\phi$ can be understood as that any $i\in\N$ is split into
$\h(\{i\})$ ``independent'' elements $j\in\phi(i)\subset\mc{M}$. We
now offer an ``information theoretic'' interpretation.
Consider any $\mc{A}\subset\mc{M}$ such that $\mc{A}=\phi(\mc{B}')$
for some $\mc{B}'\subset\mc{N}$. Then for all $\mc{B}\subset\N$,
\begin{align*}
  \h(\mc{B})+|\mc{A}\setminus\phi(\mc{B})|
=\ &\h(\mc{B})+|\phi(\mc{B}')\setminus\phi(\mc{B})|\\
=\ &\h(\mc{B})+|\phi(\mc{B}' \setminus\mc{B})|\\
=\ &\h(\mc{B})+\sum_{i\in \mc{B}' \setminus\mc{B}}\h(\{i\})\\
\ge\ &\h(\mc{B})+\h(\mc{B}' \setminus\mc{B})\\
\ge\ &\h(\mc{B}\cup \mc{B}')\\
\ge\ & \h(\mc{B}'),
\end{align*}
where the inequalities above follow from the polymatroidal axioms
because $\h\in\Gamma_n$. Togethor with \eqref{eq:2}, we see that $\g(\mc{A})=\h(\mc{B}')$.
 So
the ``inverse operation'' of free expansion can be written as
\begin{equation}
  \label{eq:7}
  \h(\mc{B})=\g(\phi(\mc{B})),\
  \mc{B}\subset \mc{N}.
\end{equation}
Note that Definition \ref{def:fr_ex} allows $\h(\{i\})$ to be equal to
zero. 

\begin{proposition}
\label{prop:fr_ex} Let $\h\in\Gamma_n$ be integer-valued and a
factor of $\g\in\Gamma_m$ under some $\phi$. Then
$\g\in\overline{\Gamma^*_m}$ if and only if $\h\in\efc$.
\end{proposition}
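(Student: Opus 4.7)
The plan is to prove the ``if'' and ``only if'' directions separately.

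For the ``only if'' direction ($\g\in\overline{\Gamma^*_m}\Rightarrow\h\in\efc$), I would use the identity \eqref{eq:7}: if $\g$ is entropic via random variables $Y_\mc{M}$, setting $X_i\triangleq Y_{\phi(i)}$ and using the pairwise disjointness of the $\phi(i)$'s gives $H(X_\mc{B})=H(Y_{\phi(\mc{B})})=\g(\phi(\mc{B}))=\h(\mc{B})$, so $\h\in\Gamma^*_n$. Applying this construction pointwise along any sequence $\g_k\to\g$ with $\g_k\in\Gamma^*_m$ produces $\h_k\in\Gamma^*_n$ with $\h_k\to\h$, so $\h\in\efc$.

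For the more substantial ``if'' direction, I plan to construct entropic approximations of positive scalings of $\g$ via an asymptotic bit-splitting scheme. Since $\efc$ is a convex cone, for any large integer $N$ I can select $X_\N$ with $H(X_\mc{B})$ within $o(N)$ of $N\h(\mc{B})$ for every $\mc{B}\subset\N$, so each $X_i$ has entropy approximately $N\h(\{i\})=N|\phi(i)|$. By applying the asymptotic equipartition property to i.i.d.\ blow-ups of $X_\N$, I would encode each $X_i$ (up to a vanishing probability of error) via a bijection $\psi_i$ onto $N|\phi(i)|$ nearly uniform independent bits, then allocate these bits evenly into $|\phi(i)|$ length-$N$ blocks, one per element $j\in\phi(i)$, setting $Y_j$ to be the corresponding block. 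The upper bound $H(Y_\mc{A})\le N\g(\mc{A})+o(N)$ then follows directly from the polymatroid axioms combined with \eqref{eq:2}: for any $\mc{B}\subset\N$, $Y_{\mc{A}\cap\phi(\mc{B})}$ is a function of $X_\mc{B}$, so $H(Y_\mc{A})\le H(X_\mc{B})+H(Y_{\mc{A}\setminus\phi(\mc{B})})\le N\h(\mc{B})+N|\mc{A}\setminus\phi(\mc{B})|+o(N)$, and minimizing over $\mc{B}$ yields the bound.

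The main obstacle is establishing the matching lower bound $H(Y_\mc{A})\ge N\g(\mc{A})-o(N)$. A naive deterministic bit-encoding of each $X_i$ can expose shared information between $X_i$ and $X_{i'}$ at the level of individual bit blocks, forcing some $H(Y_\mc{A})$ below $N\g(\mc{A})$ even when the aggregate entropies $H(Y_{\phi(i)})$ are correct. My plan for resolving this is to randomize the encodings $\psi_i$ (for instance, by drawing a uniformly random bijection on the typical set of $X_i^N$), and then invoke a privacy-amplification / leftover-hash-type argument to show that for every $\mc{A}\subset\mc{M}$ satisfying $|\mc{A}\cap\phi(\mc{B})|\le\h(\mc{B})$ for all $\mc{B}$ (equivalently, every $\mc{A}$ independent in the matroid $\g$), the corresponding $Y_\mc{A}$ is asymptotically uniform so $H(Y_\mc{A})\approx N|\mc{A}|$; the bound for general $\mc{A}$ then follows by monotonicity together with the upper bound. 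An averaging argument extracts a deterministic choice of encodings attaining the desired entropies, and normalizing by $N$ and letting $N\to\infty$ produces the limit $\g\in\overline{\Gamma^*_m}$. The principal difficulty is carefully decoupling the bit-level correlations across different $i$'s while preserving the aggregate joint structure so that $Y_{\phi(i)}$ still realizes a random variable with entropy close to $N\h(\{i\})$.
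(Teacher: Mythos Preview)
Your ``only if'' direction coincides with the paper's: set $X_i=(Y_j)_{j\in\phi(i)}$, use \eqref{eq:7}, and pass to the closure by continuity of this linear map.

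For the ``if'' direction the paper gives no argument at all; it simply cites \cite[Theorem~4]{M07b}. Your proposal instead sketches a rederivation of Mat\'u\v{s}'s construction via randomized bit-splitting. Your upper bound and the reduction of the lower bound to independent sets $\mc{A}$ of $\g$ are both correct. The step you flag as the principal difficulty is indeed the crux, and it is a genuine multi-source extraction statement rather than a single-source leftover-hash lemma: a chain-rule argument conditioning on $X_1,\ldots,X_{i-1}$ would require $a_i\le\h(\{1,\ldots,i\})-\h(\{1,\ldots,i-1\})$ in some fixed ordering, and no single ordering serves all independent sets (different $\mc{A}$'s force incompatible orderings). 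The clean way through is a direct R\'enyi-$2$/collision-probability bound exploiting the independence of the random bijections $\psi_i$: after AEP flattening, the collision probability of $Y_\mc{A}$ is at most $\sum_{S\subset\N}2^{-N\h(S)}\cdot 2^{-N\sum_{i\notin S}a_i}\le 2^{n}\cdot 2^{-N|\mc{A}|}$, the last inequality using exactly the independence condition $\sum_{i\in S}a_i\le\h(S)$. With this calculation your plan goes through, but it is precisely the technical heart of \cite{M07b}; for the purposes of the present paper the authors' citation is both sufficient and far simpler.
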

\begin{proof}
The ``if'' part is proved in \cite[Theorem 4]{M07b}. For the ``only
if'' part, by the continuity of free expansion, it suffices to prove
that $\h\in\ef$ if $\g\in\Gamma^*_m$. Let $\g$ be the entropy
function of the random vector
$Y_\mc{M}=(Y_j)_{j\in\mc{M}}$. Now define $X_\N=(X_i)_{i\in\N}$ by
$X_i=(Y_j)_{j\in\phi(i)}$ for all $i\in\N$. Then it can be checked that $\h$ is the
entropy function of $X_\N$. 
\end{proof}

\begin{definition}[Uniform matroid]
  A \emph{uniform matroid} $U_{m,n}$ is a matroid with ground set $\N$ and
  rank function
  \begin{equation*}
    U_{m,n}(\mc{A})=\min\{m, |\mc{A}|\}
  \end{equation*}
  for any $\mc{A}\subset\N$.
\end{definition}

Since uniform matroids are representable(e.g. by the Vandermonde matrix), they are almost entropic
by Proposition \ref{prop:rpr_entr}. For a matroid with ground set
$\N$, it can be shown it is also almost entropic if its submatroid on
some $\mc{M}\subset\N$ is uniform and elements in $\N\setminus\mc{M}$
are all loops. 
By Proposition \ref{prop:fr_ex},
a polymatroid which is a factor of a uniform matroid is almost
entropic.

\subsection{Partition}
\label{sdsb}
\begin{definition}[Partition of a set]
  For a set $\mc{S}$ and an index set $\mc{I}$, a collection $p=\{\mc{S}_i,i\in\mc{I}\}$ of
  disjoint subsets of $\mc{S}$ such that $\mc{S}=\cup_{i\in\mc{I}}\mc{S}_i$ and
  $\mc{S}_i\neq\emptyset$ is called a \emph{partition} of $\mc{S}$.  The sets $\mc{S}_i$ are called \emph{blocks}
  of the partition $p$.
\end{definition}

For two partitions of $\mc{S}$, $p_1=\{\mc{S}^{(1)}_i,i\in\mc{I}_1\}$ and $p_2=\{\mc{S}^{(2)}_i,i\in\mc{I}_2\}$,
$p_1\le p_2$ if for any $\mc{S}^{(2)}_i,i\in\mc{I}_2$, there
exists $\mc{J}_i\subset\mc{I}_1$ such that
$\mc{S}^{(2)}_i=\cup_{j\in\mc{J}_i}\mc{S}^{(1)}_j$. 
We say that
$p_1$ is a \emph{refinement} of $p_2$, $p_1$
is \emph{finer than} $p_2$ or $p_2$ is \emph{coarser than} $p_1$. For
$p_1,p_2\in\mc{P}_n$, $p_1<p_2$ if $p_1\le p_2$ and $p_1\neq p_2$. 

In this subsection, we discuss some properties of a partition of a
finite set that will be used later in this paper. Consider a partition
of $\N=\{1,\cdots,n\}$ with $t\le n$ blocks. Such a
partition $\{\N_1,\cdots,\N_t\}$ is also called a $t$-partition of $\N$. 
The set of all $t$-partitions of $\N$ is denoted by
$\mc{P}_{t,n}$. The cardinality of $\mc{P}_{t,n}$ is called the
Stirling number (of the second kind) with respect to $t$ and $n$. 
Let $\mc{P}_n=\cup^n_{t=1}\mc{P}_{t,n}$ be the set of
all partitions of $\N$. The cardinality of $\mc{P}_n$ is called
the Bell number with respect to $n$\cite{LW92}. 

For $p\in\mc{P}_{t,n}$ and $\mc{A}\subset\N$,
we call $\bs{\lambda}_{\mc{A},p}\triangleq(|\mc{A}\cap\N_1|,\cdots,|\mc{A}\cap\N_t|)$
the \emph{partition vector} of $\mc{A}$ under $p$. Let
$\bs{\lambda}_{\mc{A},p}(i)=|\mc{A}\cap\N_i|$ denote the $i$-th entry of $\bs{\lambda}_{\mc{A},p}$. 
In particular, when
$\mc{A}=\N$, we call $\bs{\lambda}_{\N,p}$ or simply $\bs{\lambda}_p$ the
partition vector of $p$. 

The set of all partitions of $\N$, denoted by
$\mc{P}_n$, is a partially ordered set with order ``$\le$''. It can be shown
that the partially ordered 
set $\mc{P}_n$ is a lattice with the set of all singletons
$\{\{i\}:i\in\N\}$ being the least element and the 1-partition $\{\N\}$ as the
greatest element. In a lattice $\mc{L}$, for $l_1,l_2\in\mc{L}$, $ l_2$
\emph{covers} $l_1$, if $l_1< l_2$ and for any
$l\in\mc{L}$ such that $l_1\le l\le l_2$, either $l=l_1$ or $l=l_2$. For
$p_1,p_2\in\mc{P}_n$, it can be shown that $p_2$ covers $p_1$
if and only if $p_1\le p_2$ and one of blocks of $p_2$ is the union of two blocks
of $p_1$ and all other blocks of $p_2$ are also blocks of
$p_1$. Therefore, any $p\in\mc{P}_{t,n}$ is covered by some
$p'\in\mc{P}_{t-1,n}$ for $2\le t\le n$ and covers some
$p''\in\mc{P}_{t+1,n}$ for $1\le t\le n-1$.


\begin{definition}[Partition of a number]
  For a positive integer $n$, a vector $\mb{n}=[n_1,\cdots,n_t]$ with
  $0<n_i\le n_j$ for $1\le i<j\le t$ such that $n=\sum^t_{i=1}n_i$ is called a \emph{partition} of $n$. 
\end{definition}

The number of the partitions of $n$ is called the partition function with
respect to $n$
\cite{LW92}. For $p\in\mc{P}_n$, let $\mb{n}_p$ be a vector whose
entries are a nondecreasing rearrangement of $\bsl_p(i),\ i=1,\cdots,t$.
It can be seen that $\mb{n}_p$
is a partition of $n$.

\subsection{Group action}
\label{hidmx}



To study symmetry, group theory is a regular tool. For the
basics of group theory, readers are referred to \cite{Rotman06}. In
this subsection, we discuss group actions and how they can be used to
study the symmetries in the entropy space $\hsp_n$. For a detailed
introduction to group actions, readers are referred to \cite{DM96}.

A bijection $\sigma:\N\rightarrow\N$ is called a \emph{permutation} of
$\N$. The set $\Sigma_n$ of all permutations of $\N$ is a
group with order $n!$ taking composition as its group operation. The group
$\Sigma_n$ is called the \emph{symmetric group} on $\N$.
For $p=\{\N_1,\cdots,\N_t\}\in\mc{P}_{n}$,
define 
\begin{align}
\label{feqbs}
  \Sigma_p= &\{\sigma\in\Sigma_n: \sigma(j)\in\N_i,\  j\in\N_i,\ i=1,\cdots,t\},
\end{align}
the set of permutations that keep the members of a block in the same block. It can be checked that $\Sigma_p$ is a 
subgroup of $\Sigma_n$ with order $\prod^t_{i=1}n_i!$. 
When $p=\{\N\}$, $\Sigma_{\N}$ coincides with $\Sigma_n$. Following the
notation simplification in Section \ref{bafdg}, we write $\Sigma_{\N}$ as $\Sigma_n$.

\begin{definition}[Group action]
  For a set $\mc{S}$, a group $\Sigma$ \emph{acts} on
  $\mc{S}$ if there exists a function $\Sigma\times \mc{S}\rightarrow
  \mc{S}$,  called an action, denoted by $(\sigma, s)\mapsto
  \sigma s$, such that 
  \begin{enumerate}
  \item $(\sigma_1 \sigma_2)s=\sigma_1(\sigma_2 s)$ for all
    $\sigma_1,\sigma_2\in \Sigma$ and $s\in\mc{S}$;
    \item $1 s=s$ for all $s\in \mc{S}$, where $1$ is the identity of $\Sigma$.
  \end{enumerate}
\end{definition}

For any $\sigma\in\Sigma_n$, define
$\sigma':\hsp_n\rightarrow\hsp_n$ by
\begin{equation}
  \label{eq:1}
  \sigma'(\h)(\mc{A})=\h(\sigma(\mc{A})),\ \mc{A}\subset\N.
\end{equation}
It can readily be verified that $\sigma: 2^\N\rightarrow 2^\N$
(defined by $\sigma(\mc{A})=\{\sigma(i):i\in\mc{A}\}$) is a
bijection and so a permutation of $2^\N$. It then follows that
$\sigma'(\h)$ is obtained from $\h$ by permutating the components of
$\h$, and hence $\sigma'$ is a bijection. With a slight abuse of
notation, we write $\sigma'(\h)$ as $\sigma(\h)$. 


It can be checked that $\sigma(\h)$ defines a group action $\Sigma_n$
on $\hsp_n$. By restricting the
action on a subgroup $\Sigma_p$, we obtain the group action $\Sigma_p$
on $\hsp_n$.

\begin{definition}[Orbit]
\label{uisdo}
  If group $\Sigma$ acts on $\mc{S}$, then for $s\in\mc{S}$, the
  \emph{orbit}
  of the action of $s$ is defined by 
  \begin{equation*}
    \mc{O}_{\Sigma}(s)=\{\sigma s:\sigma\in \Sigma\}.
  \end{equation*}
\end{definition}

It can be verified that 1) $s\in\mc{O}_\Sigma(s)$, 2) if $s_2\in\mc{O}_\Sigma(s_1)$, then
$s_1\in\mc{O}_\Sigma(s_2)$ and 3) if $s_2\in\mc{O}_\Sigma(s_1),s_3\in\mc{O}_\Sigma(s_2)$, then
$s_3\in\mc{O}_\Sigma (s_1)$. Therefore, orbits of an action defines an equivalence
relation on $\mc{S}$ which implies the following proposition. 
\begin{proposition}\cite[Proposition 2.142]{Rotman06}
\label{dfoib}
  If group $\Sigma$ acts on a set $\mc{S}$, then the orbits induced by the action
  of $\Sigma$ on $\mc{S}$ form
  a partition of $\mc{S}$.
\end{proposition}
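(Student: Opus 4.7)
The plan is to observe that the three bullet points listed in the paragraph immediately preceding the proposition---namely $s\in\mc{O}_\Sigma(s)$; $s_2\in\mc{O}_\Sigma(s_1)$ implies $s_1\in\mc{O}_\Sigma(s_2)$; and if $s_2\in\mc{O}_\Sigma(s_1)$ and $s_3\in\mc{O}_\Sigma(s_2)$ then $s_3\in\mc{O}_\Sigma(s_1)$---are exactly reflexivity, symmetry, and transitivity of the binary relation ``$s_1\sim s_2$ iff $s_2\in\mc{O}_\Sigma(s_1)$'' on $\mc{S}$. So the proof reduces to verifying these three properties from the group-action axioms and then invoking the standard fact that the equivalence classes of an equivalence relation form a partition of the underlying set.

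First I would verify each of the three properties. Reflexivity is immediate from axiom (2): $s=1\,s\in\mc{O}_\Sigma(s)$. For symmetry, suppose $s_2=\sigma s_1$ for some $\sigma\in\Sigma$; applying $\sigma^{-1}$ and using axioms (1) and (2) yields $\sigma^{-1}s_2=\sigma^{-1}(\sigma s_1)=(\sigma^{-1}\sigma)s_1=1\,s_1=s_1$, so $s_1\in\mc{O}_\Sigma(s_2)$. For transitivity, if $s_2=\sigma s_1$ and $s_3=\tau s_2$, then axiom (1) gives $s_3=\tau(\sigma s_1)=(\tau\sigma)s_1\in\mc{O}_\Sigma(s_1)$.

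Having established that $\sim$ is an equivalence relation whose equivalence class containing $s$ is exactly $\mc{O}_\Sigma(s)$, I would then extract the partition statement in the usual two steps: every $s\in\mc{S}$ belongs to $\mc{O}_\Sigma(s)$, so the orbits cover $\mc{S}$; and whenever two orbits share a common element, symmetry and transitivity force them to coincide, so distinct orbits are disjoint. Finally, orbits are non-empty by construction, which completes the verification that $\{\mc{O}_\Sigma(s):s\in\mc{S}\}$ (with duplicates identified) is a partition of $\mc{S}$.

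The result is entirely standard and there is no genuine obstacle; the only subtlety worth flagging is that the symmetry step crucially uses the existence of inverses in the group $\Sigma$ together with both action axioms, which is why the analogous statement can fail for monoid actions.
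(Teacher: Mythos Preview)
Your proposal is correct and matches the paper's approach exactly: the paper notes the same three properties (reflexivity, symmetry, transitivity) in the paragraph immediately preceding the proposition and then cites Rotman for the standard conclusion that an equivalence relation partitions the set, without giving any further details. Your write-up simply fills in the routine verifications the paper leaves implicit.
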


For $p\in\mc{P}_n$, consider the action of $\Sigma_p$ on
$\hsp_n$. For any $\h\in\hsp_n$, we call the orbit
$\mc{O}_{\Sigma_p}(\h)$, or $\mc{O}_p$ in short,
a \emph{$p$-orbit}. If $\h_1$ and $\h_2$ are in the same
$p$-orbit, i.e., $\h_2\in\mc{O}_p(\h_1)$, we say that $\h_1$ and $\h_2$ are \emph{$p$-equivalent}.

\begin{definition}
  If a group $\Sigma$ acts on $\mc{S}$, for
  $\mc{T}\subset\mc{S}$, the \emph{fixed set} of $\mc{T}$ is defined by
  \begin{equation*}
    \fix_{\Sigma} (\mc{T})=\{s\in\mc{T}:\sigma s=s, \forall \sigma\in\Sigma\}.
  \end{equation*}
When $\mc{T}=\mc{S}$, we call $\fix_{\Sigma} (\mc{S})$
the fixed set of the action.
\end{definition}

If $\mc{T}\subset\mc{S}$, then $\fix_{\Sigma}
(\mc{T})=\fix_{\Sigma}(\mc{S})\cap\mc{T}$. Furthermore, for any $s\in
\fix_{\Sigma}(\mc{S})$, the singleton 
$\{s\}$ forms an orbit of the action.

For the action of $\Sigma_p$ on $\hsp_n$ and $T\subset\hsp_n$,
$\fix_{\Sigma_p} (T)$ is denoted by $\fix_{p} (T)$ for
simplicity. For $p=\{\N_1,\cdots,\N_t\}\in\mc{P}_{n}$, it can be checked that
\begin{equation}
\label{symt_c}
  \fix_{p}(\hsp_n)=\{\h\in\hsp_n:\ \h(\mc{A})=\h(\mc{B})
  \text{ if } \bs{\lambda}_{\mc{A},p}=\bs{\lambda}_{\mc{B},p}\}.
\end{equation}
Therefore, the set $\fix_{p}(\hsp_n)$ is a subspace of $\hsp_n$. We
denote this subspace by $S_p$ and call it the \emph{$p$-symmetrical
subspace}. Then $\fix_{p}(\Gamma_n)=\Gamma_n\cap S_p$ and
$\fix_{p}(\Gamma^*_n)=\Gamma^*_n\cap S_p$. 

\begin{proposition}
\label{o238}
  If a group $\Sigma$ acts on a set $\mc{S}$, then $\Sigma$ also acts on
  $2^{\mc{S}}$, where $\sigma \mc{T}=\{\sigma s: s\in\mc{T}\}$ for
  $\mc{T}\subset\mc{S}$. 
\end{proposition}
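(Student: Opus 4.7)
The plan is to verify directly that the map $\Sigma \times 2^{\mc{S}} \to 2^{\mc{S}}$ defined by $(\sigma, \mc{T}) \mapsto \sigma \mc{T} = \{\sigma s : s \in \mc{T}\}$ satisfies the two axioms in the definition of a group action, using the fact that the original action of $\Sigma$ on $\mc{S}$ already does. This is a routine lifting argument, so I expect no serious obstacle; the main content is just making sure the set-builder notation is unpacked carefully.

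First I would observe that $\sigma \mc{T}$ is indeed an element of $2^{\mc{S}}$, since every $\sigma s$ lies in $\mc{S}$ by hypothesis, so the map is well-defined. Next, for the associativity axiom, I would fix $\sigma_1, \sigma_2 \in \Sigma$ and $\mc{T} \subset \mc{S}$, and compute
\begin{align*}
(\sigma_1 \sigma_2)\mc{T}
 &= \{(\sigma_1 \sigma_2)s : s \in \mc{T}\} \\
 &= \{\sigma_1(\sigma_2 s) : s \in \mc{T}\} \\
 &= \{\sigma_1 s' : s' \in \sigma_2 \mc{T}\} \\
 &= \sigma_1(\sigma_2 \mc{T}),
\end{align*}
where the second equality uses the associativity axiom of the action of $\Sigma$ on $\mc{S}$, and the third is the substitution $s' = \sigma_2 s$.

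For the identity axiom, I would let $1$ denote the identity of $\Sigma$ and compute $1 \cdot \mc{T} = \{1 \cdot s : s \in \mc{T}\} = \{s : s \in \mc{T}\} = \mc{T}$, using that $1 \cdot s = s$ for all $s \in \mc{S}$. Together these two verifications give the proposition. The only thing to be slightly careful about is the distinction between the element-level action and the set-level action sharing the same notation $\sigma(\cdot)$, but since the set-level action is defined pointwise from the element-level one, no confusion arises.
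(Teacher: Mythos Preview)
Your proof is correct and is precisely the verification the paper has in mind; the paper's own proof simply states that the proposition ``can be readily proved by checking the definition of group action,'' and you have carried out that check in full.
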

\begin{proof}
  This proposition can be readily proved by checking the definition of
  group action.
\end{proof}

For the induced group action on the power set as in Proposition \ref{o238}, we call an orbit of the
action a \emph{setwise orbit} of the original action. Specifically,
for any
$\mc{T}\subset\mc{S}$, the setwise orbit of $\mc{T}$ is given by
\begin{equation*}
  \mc{O}_\Sigma(\mc{T})=\{\sigma\mc{T}:\sigma\in\Sigma\}.
\end{equation*}
To distinguish a setwise orbit from an orbit of the
original action (cf. Definition \ref{uisdo}), we will refer to the
latter as a
\emph{pointwise orbit}.
Note that a setwise orbit of the original group action is a pointwise
orbit of the induced group action on the power set.
By Proposition \ref{dfoib}, all of the
setwise orbits form a partition of the power set.

For action $\Sigma_p$ on $\hsp_n$, we also call the setwise orbits of
the action $p$-orbits and denoted them by $\mc{O}_p$ if there is no
ambiguity. 
For $T_1, T_2\subset \hsp_n$, if they are in the same
$p$-orbit, i.e., $T_2\in\mc{O}_p(T_1)$, we also say that they are
$p$-equivalent. 

\begin{definition}[Invariance]
   If a group $\Sigma$ acts on a set $\mc{S}$, a subset
   $\mc{T}\subset\mc{S}$ is called \emph{invariant} if
   \begin{equation*}
     \mc{T}=\sigma(\mc{T}) \text{ for any }\sigma\in\Sigma.
   \end{equation*}
\end{definition}

It can be checked that $\mc{T}$ is invariant if and only if $\mc{T}$ is in the fixed set of the induced action
on the power set. If $\mc{T}$ is invariant, then
$\mc{T}=\sigma(\mc{T})=\{\sigma(s):s\in\mc{T}\}=\cup_{s\in\mc{T}}\sigma(s)$,
i.e., $\mc{T}$ is the union of pointwise orbits. 
 As any $s\in\fix_{\Sigma}(\mc{S})$ itself forms a pointwise
orbit, any subset of the fixed set is invariant. Note that an invariant set
$\mc{T}$ itself forms a setwise orbit.

For the action of $\Sigma_p$ on $\hsp_n$, if $T\subset\hsp_n$ is invariant,
we say that it is \emph{$p$-invariant}. For a $p$-invariant set, the
following propositions are straightforward.

\begin{proposition}
\label{sioab}
  $\h\in\hsp_n$ is $p$-invariant if and only if $\h\in S_p$.
\end{proposition}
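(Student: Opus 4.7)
The plan is to unpack definitions on both sides. By the definition of invariance applied to the singleton $\{\h\}$ (equivalently, $\h$ viewed as a point), $\h$ being $p$-invariant is precisely the statement that $\sigma(\h)=\h$ for every $\sigma\in\Sigma_p$, i.e., $\h\in\fix_p(\hsp_n)$. Since the excerpt establishes $\fix_p(\hsp_n)=S_p$ with the explicit coordinate description \eqref{symt_c}, the proposition reduces to matching this elementwise fixed-point condition with the partition-vector characterization of $S_p$.

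For the ``only if'' direction, I would assume $\sigma(\h)=\h$ for all $\sigma\in\Sigma_p$ and verify the defining condition of $S_p$: given $\mc{A},\mc{B}\subset\N$ with $\bsl_{\mc{A},p}=\bsl_{\mc{B},p}$, I would construct a $\sigma\in\Sigma_p$ such that $\sigma(\mc{A})=\mc{B}$. Since $|\mc{A}\cap\N_i|=|\mc{B}\cap\N_i|$ for each $i=1,\ldots,t$, I can pick any bijection of $\N_i$ that sends $\mc{A}\cap\N_i$ onto $\mc{B}\cap\N_i$, and piece these together over $i$; by \eqref{feqbs} the result lies in $\Sigma_p$. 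Then $\h(\mc{B})=\h(\sigma(\mc{A}))=\sigma(\h)(\mc{A})=\h(\mc{A})$, so $\h\in S_p$.

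For the ``if'' direction, I would assume $\h\in S_p$ and show $\sigma(\h)=\h$ for every $\sigma\in\Sigma_p$. The key observation is that any $\sigma\in\Sigma_p$ permutes the elements within each block $\N_i$, hence $|\sigma(\mc{A})\cap\N_i|=|\mc{A}\cap\N_i|$ for each $i$ and every $\mc{A}\subset\N$; that is, $\bsl_{\sigma(\mc{A}),p}=\bsl_{\mc{A},p}$. Applying \eqref{symt_c}, $\h(\sigma(\mc{A}))=\h(\mc{A})$, i.e., $\sigma(\h)(\mc{A})=\h(\mc{A})$ for all $\mc{A}\subset\N$, which gives $\sigma(\h)=\h$.

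No step looks like a real obstacle; the only mild care is in reading ``$p$-invariant'' applied to a point as the fixed-point condition, and in explicitly constructing the block-preserving permutation witnessing $\bsl_{\mc{A},p}=\bsl_{\mc{B},p}$. Both are essentially bookkeeping, so the entire argument is a short double-inclusion using \eqref{feqbs} and \eqref{symt_c}.
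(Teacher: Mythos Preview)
Your proposal is correct and matches the paper's implicit reasoning: the paper states this proposition as ``straightforward'' without proof, having already established $\fix_p(\hsp_n)=S_p$ in \eqref{symt_c} and noted that a set is invariant precisely when it lies in the fixed set of the induced action on the power set. Your argument simply makes explicit the two routine inclusions that the paper leaves to the reader.
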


\begin{proposition}
\label{badlm}
  If $T\subset S_p$, then $T$ is $p$-invariant.
\end{proposition}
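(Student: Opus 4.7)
The plan is to unpack the definitions and observe that the statement is essentially immediate. Recall that $S_p = \fix_p(\hsp_n)$ consists of those $\h \in \hsp_n$ for which $\sigma(\h) = \h$ holds for every $\sigma \in \Sigma_p$. Recall further that $T \subset \hsp_n$ is $p$-invariant precisely when $\sigma(T) = T$ for every $\sigma \in \Sigma_p$, where the induced action on the power set is given by $\sigma(T) = \{\sigma(\h) : \h \in T\}$.

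First, I would fix an arbitrary $\sigma \in \Sigma_p$ and consider $\sigma(T)$. Since $T \subset S_p$, each $\h \in T$ satisfies $\sigma(\h) = \h$ by the defining property of the fixed set. Consequently,
\begin{equation*}
  \sigma(T) = \{\sigma(\h) : \h \in T\} = \{\h : \h \in T\} = T.
\end{equation*}
Because $\sigma$ was arbitrary, this equality holds for every $\sigma \in \Sigma_p$, which is the definition of $p$-invariance of $T$.

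There is no real obstacle here; the proposition is a direct structural consequence of pointwise fixation implying setwise fixation. One can also view it as a trivial corollary of Proposition \ref{sioab}: each $\h \in T$ is itself $p$-invariant (as a singleton), so $T$ is a union of pointwise orbits of $\Sigma_p$, each of which is a singleton, and thus $T$ is invariant under the induced action on $2^{\hsp_n}$.
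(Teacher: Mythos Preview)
Your proof is correct. The paper does not give an explicit proof of this proposition; it simply remarks that it is ``straightforward,'' and your argument---unpacking the definitions so that pointwise fixation of each $\h\in T$ yields $\sigma(T)=T$---is precisely the intended one-line verification.
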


Obviously, $\Gamma_n$, $\Gamma^*_n$ and $\overline{\Gamma^*_n}$ are all
$p$-invariant for any $p\in\mc{P}_n$.

\section{Problem Formulation}
\label{bafdg}

Let $\Sigma_n$ be the symmetric group on $\N$. For any
$\sigma\in\Sigma_n$ and any $\h\in \hsp_n$, define
\begin{equation}
  \label{eq:1}
  \sigma(\h)(\mc{A})=\h(\sigma(\mc{A})),\ \mc{A}\subset\N,
\end{equation}
i.e., $\sigma(\h)$ is obtained from $\h$ by permuting the components
$\mc{A}$ of
$\h$ to $\sigma(\mc{A})$ for all $\mc{A}\subset\N$. It can be checked that $\sigma(\h)$ defines a group action $\Sigma_n$
on $\hsp_n$. By restricting the
action on any subgroup $\Sigma$ of $\Sigma_n$, we obtain the group action $\Sigma$
on $\hsp_n$. For any $T\subset\hsp_n$, let
\begin{equation*}
  \fix_\Sigma(T)=\{\h\in T: \sigma(\h)=\h,\quad \forall \sigma\in \Sigma\}
\end{equation*}
be the \emph{fixed set} of $T$ for this action.
Let
\begin{align}
\label{feqbs}
  \Sigma_p= &\{\sigma\in\Sigma_n: \sigma(j)\in\N_i,\  j\in\N_i,\ i=1,\cdots,t\}
\end{align}
be the set of permutations that keep the members of a block in the same
block. Obviously, $\Sigma_p$ is a subgroup of $\Sigma_n$. It can be
readily seen that
\begin{equation}
\label{symt_cc}
  \fix_{\Sigma_p}(\hsp_n)=\{\h\in\hsp_n:\ \h(\mc{A})=\h(\mc{B}),
  \text{ if } \bs{\lambda}_{\mc{A},p}=\bs{\lambda}_{\mc{B},p}\}.
\end{equation}
We call $\fix_{\Sigma_p}$ the $p$-symmetrical subspace of $\hsp_n$ and write it as $\fix_p$
for simplicity. A vector $\h\in\fix_p$ is called $p$-symmetrical.
Then naturally, we define \emph{$p$-symmetrical entropy region}
       \begin{equation*}
            \Psi^*_p\triangleq\fix_{p}(\Gamma^*_n)=\Gamma^*_n\cap \fix_p
        \end{equation*}
and \emph{$p$-symmetrical polymatroidal region}
        \begin{equation*}
            \Psi_p\triangleq\fix_{p}(\Gamma_n)=\Gamma_n\cap \fix_p,
        \end{equation*}
respectively.

For $1\le n\le 3$, $\overline{\Psi^*_p}=\Psi_p$ for any
$p\in\mc{P}_n$ since $\overline{\Gamma^*_n}=\Gamma_n$ and
$\overline{\Psi^*_p}=\overline{\Gamma^*_n}\cap \fix_p$ by Theorem
\ref{lem:psi2} (See Subsection \ref{op1512}).

\begin{theorem}
\label{bfdp}
  For $n\ge 4$ and any $p\in\mc{P}_n$, 
  \begin{equation}
    \label{meq:1}
    \overline{\Psi^*_p}=\Psi_p,
  \end{equation}
  if and only if $p=\{\N\}$ or $p=\{\{i\},\N\setminus\{i\}\}$ for some
  $i\in\N$.
\end{theorem}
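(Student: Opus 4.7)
The plan is to reduce both directions to an analysis of the extreme rays of $\Psi_p$. Because $\Psi_p = \Gamma_n \cap S_p$ is the intersection of the polyhedral cone $\Gamma_n$ with a subspace, it is itself polyhedral; consequently $\overline{\Psi^*_p} = \Psi_p$ holds if and only if every extreme ray of $\Psi_p$ is spanned by an almost entropic polymatroid, since $\overline{\Gamma^*_n}$ is a closed convex cone containing $\overline{\Psi^*_p}$ and the reverse inclusion $\overline{\Psi^*_p} \subseteq \Psi_p$ is automatic. To enumerate the extreme rays, I would first use Theorem \ref{bafoi} to describe the facets of $\Psi_p$ as intersections of $S_p$ with $p$-orbits of facets of $\Gamma_n$, and then intersect these to obtain the $1$-dimensional faces.

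For the ``if'' direction, when $p = [n]$ the subspace $S_p$ is $(n+1)$-dimensional and a symmetric polymatroid is encoded by a sequence $(s_0, s_1, \ldots, s_n)$ with $s_0 = 0$; the polymatroidal axioms then collapse to monotonicity and concavity of this sequence, and the extreme rays should turn out to be the scaled uniform matroids $U_{m,n}$ for $m = 1, \ldots, n$, which are almost entropic by Proposition \ref{prop:rpr_entr}. When $p = [1, n-1]$, a symmetric polymatroid is encoded by a $2 \times n$ array $(s_{k_1, k_2})$, and I expect the extreme rays to be scaled factors of uniform matroids obtained by ``splitting'' the distinguished singleton element in the sense of Definition \ref{def:fr_ex}; by Proposition \ref{prop:fr_ex} these factors are almost entropic. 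Taking conic combinations and then closure yields $\Psi_p \subseteq \overline{\Psi^*_p}$.

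For the ``only if'' direction, suppose $p$ is neither $[n]$ nor $[1, n-1]$, so either $p$ has at least three blocks or $p = [n_1, n_2]$ with $n_1, n_2 \ge 2$. I plan to construct a specific $\h \in \Psi_p$ such that, after restricting $\h$ to a suitably chosen subset $\mc{M} \subset \N$, the resulting polymatroid is a factor of the V\'amos matroid. Since the V\'amos matroid is the standard example of a non-almost-entropic matroid, Proposition \ref{prop:fr_ex} implies that this restricted polymatroid is not in $\overline{\Gamma^*_{|\mc{M}|}}$; because restrictions of almost entropic polymatroids are almost entropic, it follows that $\h \notin \overline{\Gamma^*_n}$ and therefore $\h \notin \overline{\Psi^*_p}$, witnessing the strict inclusion $\overline{\Psi^*_p} \subsetneq \Psi_p$. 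Both combinatorial cases for $p$ should provide enough room to host the four parallel pairs of the V\'amos matroid while being compatible with the $p$-invariance needed to lie in $S_p$.

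The main obstacle will be the explicit construction in the ``only if'' direction: one must design a $p$-invariant assignment of rank values on the full ground set $\N$ which (i) satisfies every Shannon-type inequality, so that $\h \in \Gamma_n$, (ii) carries the characteristic V\'amos rank pattern on the chosen restriction $\mc{M}$, and (iii) is compatible with the factor construction of Definition \ref{def:fr_ex}. The ``if'' direction is conceptually cleaner but still relies on the reduction in Theorem \ref{bafoi} to enumerate the facets of $\Psi_p$ tractably inside the low-dimensional subspace $S_p$, followed by the identification of each resulting extreme ray with (a factor of) a uniform matroid.
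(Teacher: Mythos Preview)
Your proposal matches the paper's approach closely: the ``if'' direction enumerates the extreme rays of $\Psi_p$ and identifies them with (factors of) uniform matroids, while the ``only if'' direction exhibits a $p$-symmetric polymatroid whose restriction to four elements is a factor of the V\'amos matroid. One minor difference is that for $t\ge 3$ the paper does not build a fresh witness but instead observes that any such $p$ (with $n\ge 4$) refines some $2$-partition $p'$ with both blocks of size $\ge 2$, so the V\'amos-type witness already constructed for $p'$ lies in $S_{p'}\subset S_p$ and does the job.

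There is, however, a genuine gap in your reduction ``$\overline{\Psi^*_p}=\Psi_p$ holds if and only if every extreme ray of $\Psi_p$ is spanned by an almost entropic polymatroid.'' Knowing that each extreme ray lies in $\overline{\Gamma^*_n}$ only gives $\Psi_p\subset\overline{\Gamma^*_n}\cap S_p$; what you actually need is $\Psi_p\subset\overline{\Gamma^*_n\cap S_p}=\overline{\Psi^*_p}$, and closure does not commute with intersection with a subspace in general. Concretely, the entropic points approximating a uniform matroid need not themselves lie in $S_p$. The paper closes this gap with a separate result, Theorem~\ref{lem:psi2}, which establishes $\overline{\Psi^*_p}=\overline{\Gamma^*_n}\cap S_p$ by combining Mat\'u\v{s}'s theorem that $\ri{\overline{\Gamma^*_n}}\subset\Gamma^*_n$ with the symmetrization map $\psi_p(\h)=\frac{1}{|\Sigma_p|}\sum_{\sigma\in\Sigma_p}\sigma(\h)$, which projects approximants back into $S_p$ without leaving $\Gamma^*_n$. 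You should invoke this lemma explicitly before concluding $\Psi_p\subset\overline{\Psi^*_p}$.
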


Theorem \ref{bfdp} says that $\overline{\Psi^*_p}$,
is completely characterized by
Shannon-type information inequalities if and only if $p$ is the
$1$-partition or a $2$-partition with one of its blocks being a
singleton. This theorem, the main result of this paper, will be established
through Corollary \ref{rel11} and Theorems \ref{qwekl}, \ref{fbad} and
\ref{qpads}. Specifically, 
we will prove \eqref{meq:1}
in Corollary \ref{rel11} and Theorem \ref{qwekl} for the cases $p=\{\N\}$ and $p=\{\{i\},\N\setminus\{i\}\}$,
respectively. In Theorems \ref{fbad} and \ref{qpads}, we will prove
that $\overline{\Psi^*_p}\subsetneq\Psi_p$ for all other cases. 

To facilitate our discussion in the rest of the paper, we now
introduce a simplification of the notations.
For $p\in\mc{P}_n$, let $\mb{n}_p$ be a vector whose
entries are a nondecreasing rearrangement of $\bsl_p(i),\ i=1,\cdots,t$.
It can be seen that $\mb{n}_p$
is a (number) partition of $n$. It can be checked that partitions $p$ with
same $\mb{n}_p$ form a equivalence class.
Therefore, for $p_1,p_2\in\mc{P}_n$, we say they are equivalent if
$\mb{n}_{p_1}=\mb{n}_{p_2}$.  For
each equivalence class determined by $\mb{n}_p=[n_1,\cdots,n_t]$, we choose a 
representative $\{\N_1,\cdots,\N_t\}$ such that
$\N_1=\{1,\cdots,n_1\},\N_2=\{n_1+1,\cdots,n_1+n_2\},\cdots,\N_t=\{\sum^{t-1}_{i=1}n_i+1,\cdots,\sum^{t}_{i=1}n_i\}$
and let $\mc{P}^*_n$ be the set of these representatives.
If $p_1$ and $p_2$ are equivalent, 
the characterization of $\Psi^*_{p_2}$ can be
obtained by the characterization of $\Psi^*_{p_1}$ by permuting
the indices. For example, consider $p_1=\{\{1\},\{2,3\}\}$ and
$p_2=\{\{2\},\{1,3\}\}$. Then $\mb{n}_{p_1}=\mb{n}_{p_2}=[1,2]$. 
If $\h\in \fix_{\{\{1\},\{2,3\}\}}$ is the entropy function of $\{X_1, X_2,
X_3\}$, then there exists $\h'\in \fix_{\{\{2\},\{1,3\}\}}$ which is the
entropy function of $\{X'_1, X'_2,
X'_3\}$ where $X'_1=X_2,X'_2=X_1$ and $X'_3=X_3$. Therefore, for each such
equivalence class of
partitions, we only need to consider one partition in the equivalence
class. Without loss of generality, we consider only those
$p\in\mc{P}^*_n$ and for the purpose of our discussion,
such a $p$ will be represented by $\mb{n}_p$ for simplicity. For example, partition $\{\{1\},\{2,3\}\}$ will be
represented by $[1,2]$. For this spirit, we write
$\Psi_{\{\{1\},\{2,3\}\}}$ as $\Psi_{[1,2]}$, which is further
simplified as $\Psi_{1,2}$. Then Theorem \ref{bfdp} can be restated
as for $n\ge 4$ and any $p\in\mc{P}^*_n$,
$\overline{\Psi^*_p}=\Psi_p$ if and only if $p=[n]$ or $p=[1,n-1]$.





\section{Symmetrical properties of $\Gamma_n$ and $\Gamma^*_n$}
\label{yuber}

\subsection{$p$-equivalent facets of $\Gamma_n$}
\label{aaa31}



For $p\in\mc{P}_n$, the orbit of the action $\Sigma_p$ on
$\hsp_n$, i.e., $\{\sigma(\h): \sigma\in\Sigma_p\}$
for some $\h\in\hsp_n$ is called a \emph{pointwise} $p$-orbit. Similarly,
$\{\sigma(T): \sigma\in\Sigma_p\}$
for some $T\subset\hsp_n$ is called a setwise $p$-orbit. Either of
them is called a $p$-orbit. 
For $\h_1,\h_2\in\hsp_n$ ($T_1, T_2\subset\hsp_n$), they are called
$p$-equivalent if they are in the same $p$-orbit.

\begin{lemma}
\label{iwds}
  For $p\in\mc{P}_n$, let $T_1,T_2\subset\mc{H}_n$ be $p$-equivalent.
  Then $T_1$ is a face (facet) of $\Gamma_n$ if and only
  if $T_2$ is a face (facet) of $\Gamma_n$. 
\end{lemma}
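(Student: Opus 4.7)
The plan is to exploit the fact that each $\sigma\in\Sigma_p$ acts on $\hsp_n$ (via \eqref{eq:1}) as a coordinate-permuting linear bijection, and that $\Gamma_n$ is $p$-invariant (in fact $\Sigma_n$-invariant), as already noted at the end of Section \ref{hidmx}. Since $T_1,T_2$ are $p$-equivalent (setwise), there exists $\sigma\in\Sigma_p$ with $T_2=\sigma(T_1)$, and its inverse $\sigma^{-1}$ again lies in $\Sigma_p$, so both directions of the ``if and only if'' follow once one direction is established.

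First I would record the structural properties of $\sigma$ as a self-map of $\hsp_n$: it is a linear bijection, hence preserves convex sets, affine hulls, dimensions, closures, relative interiors and relative boundaries, and it sends hyperplanes to hyperplanes and closed halfspaces to closed halfspaces. Combined with $\sigma(\Gamma_n)=\Gamma_n$, this says $\sigma$ is a symmetry of the cone $\Gamma_n$.

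Next, assume $T_1$ is a face of $\Gamma_n$. If $T_1=\Gamma_n$, then $T_2=\sigma(\Gamma_n)=\Gamma_n$, a face. Otherwise, by Definition \ref{def:sd}, $T_1=\Gamma_n\cap P$ for some supporting hyperplane $P$ of $\Gamma_n$, with corresponding halfspace $P^+\supset\Gamma_n$ and $\mr{dist}(P,\Gamma_n)=0$. Let $P'=\sigma(P)$; since $\sigma$ is linear, $P'$ is a hyperplane and $\sigma(P^+)$ is one of its closed halfspaces. Applying $\sigma$ to the inclusion $P^+\supset\Gamma_n$ and using $\sigma(\Gamma_n)=\Gamma_n$ gives $\sigma(P^+)\supset\Gamma_n$. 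The distance condition is also preserved under the linear bijection $\sigma$, so $\mr{dist}(P',\Gamma_n)=0$, showing $P'$ is a supporting hyperplane of $\Gamma_n$. Therefore
\begin{equation*}
T_2=\sigma(T_1)=\sigma(\Gamma_n\cap P)=\sigma(\Gamma_n)\cap \sigma(P)=\Gamma_n\cap P',
\end{equation*}
which is a face of $\Gamma_n$.

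Finally, for the ``facet'' version, I would use that $\sigma$ preserves dimension: $\mr{dim}\, T_2=\mr{dim}\,\sigma(T_1)=\mr{dim}\, T_1$. Since facets of the polyhedral cone $\Gamma_n$ are precisely the proper faces of maximal dimension, and the set of faces is invariant under $\sigma$ by the previous paragraph, a face $T_1$ has maximal proper dimension iff $T_2$ does. Thus $T_1$ is a facet iff $T_2$ is a facet. The only potential subtlety is ensuring that $\sigma$ sends a \emph{proper} face to a \emph{proper} face, which is immediate since $\sigma$ fixes $\Gamma_n$ and the origin. This step is the one that will require the most care to phrase cleanly, but there is no real obstacle because everything reduces to the single observation that $\sigma$ is a linear symmetry of the cone.
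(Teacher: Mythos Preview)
Your proposal is correct and follows essentially the same approach as the paper: write the face as $\Gamma_n\cap P$ for a supporting hyperplane $P$, use that $\sigma$ is a linear bijection with $\sigma(\Gamma_n)=\Gamma_n$ to get $\sigma(T_1)=\Gamma_n\cap\sigma(P)$, and invoke dimension preservation for the facet case. The only cosmetic differences are that the paper avoids your separate case $T_1=\Gamma_n$ by first noting $\dim\Gamma_n<\dim\hsp_n$ so every face has the form $\Gamma_n\cap P$, and that the paper simply asserts $\sigma(P)$ is supporting while you spell out the halfspace and distance checks (for the latter, note that $\sigma$ is a coordinate permutation and hence an isometry, which makes your distance remark literally true rather than merely true up to the equivalent condition $P\cap\Gamma_n\neq\emptyset$).
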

\begin{proof}
  Since $\dim{\Gamma_n}=2^n-1<\dim\hsp_n=2^n$, 
each face $F$ of $\Gamma_n$
  can be written as $F=\Gamma_n\cap P$ for some supporting hyperplane
  $P$ of $\Gamma_n$. 

Let $T_1,T_2\subset T$ be $p$-equivalent. Then $T_2=\sigma(T_1)$ for
some $\sigma\in \Sigma_p$.  
Now if $T_1$ is a face of $\Gamma_n$, there exists a supporting hyperplane $P$ of $\Gamma_n$ such that
  $T_1=\Gamma_n\cap P$. Then
  $T_2=\sigma(T_1)=\sigma(P\cap\Gamma_n)
  =\sigma(P)\cap\sigma(\Gamma_n)=\sigma(P)\cap\Gamma_n$. As
  $\sigma(P)$ is also a supporting hyperplane of $\Gamma_n$,
  $T_2$ is a face of $\Gamma_n$. The only if part is also true since $T_1=\sigma^{-1}(T_2)$.
  
  Furthermore, as $T_2=\sigma(T_1)$ has the same dimension as $T_1$,
  $T_2$ is a facet if and only if $T_1$ is a facet.
\end{proof}

By Lemma \ref{iwds}, for $p\in\mc{P}_n$, all the faces of $\Gamma_n$ are partitioned
into $p$-orbits, and some of them are families of facets of
$\Gamma_n$ which play a more important role because they correspond to
the elemental inequalities $E(\mc{I}_i,\mc{K}_i)$ defining $\Gamma_n$. The collection of all
$p$-orbits of facets is denoted by $\mf{E}_p$.

\begin{lemma}
\label{oiabd}
  For $p\in\mc{P}_n$ and $E_i\triangleq E(\mc{I}_i,\mc{K}_i)\in\mc{E}_n,
  i=1,2$, the following three statements are equivalent:
  \begin{enumerate}
    \item $E_1$ and $E_2$ are $p$-equivalent;
    \item there exists $\sigma\in\Sigma_p$ such that
  $\mc{I}_1=\sigma(\mc{I}_2)$ and $\mc{K}_1=\sigma(\mc{K}_2)$;
    \item $\bs{\lambda}_{\mc{I}_1,p}=\bs{\lambda}_{\mc{I}_2,p}$ and $\bs{\lambda}_{\mc{K}_1,p}=\bs{\lambda}_{\mc{K}_2,p}$.
  \end{enumerate}
\end{lemma}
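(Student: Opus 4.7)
The plan is to prove the cyclic chain of implications $(3)\Rightarrow(2)\Rightarrow(1)\Rightarrow(3)$. The central auxiliary fact I will use, obtained from \eqref{eq:1} by direct calculation, is that for any $\sigma\in\Sigma_n$ and any facet $E(\mc{I},\mc{K})\in\mc{E}_n$,
\begin{equation*}
\sigma\bigl(E(\mc{I},\mc{K})\bigr)=E\bigl(\sigma^{-1}(\mc{I}),\sigma^{-1}(\mc{K})\bigr).
\end{equation*}
This identity is verified separately for the two facet types: for $E(\{i\},\emptyset)$ one uses $\sigma(\h)(\N)=\h(\N)$ and $\sigma(\h)(\N\setminus\{i\})=\h(\N\setminus\{\sigma(i)\})$, while for $E(\{i,j\},\mc{K})$ each of the four terms in the defining equation transforms by replacing every index with its $\sigma$-image. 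Combined with the fact that each facet is uniquely encoded by its pair $(\mc{I},\mc{K})$, which follows from the minimality of the elemental inequalities recalled from \cite[Section 14.6]{Y08}, this identity carries most of the argument.

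For $(3)\Rightarrow(2)$, I would build $\sigma\in\Sigma_p$ block by block. Every facet of $\Gamma_n$ satisfies $\mc{I}\cap\mc{K}=\emptyset$, so each block $\N_i$ decomposes disjointly as $\N_i=(\mc{I}\cap\N_i)\cup(\mc{K}\cap\N_i)\cup(\N_i\setminus(\mc{I}\cup\mc{K}))$. Hypothesis (3) says the three corresponding pieces for $(\mc{I}_1,\mc{K}_1)$ and $(\mc{I}_2,\mc{K}_2)$ have matching cardinalities, so on each block I can pick any bijection $\sigma_i:\N_i\to\N_i$ that sends $\mc{I}_2\cap\N_i$ to $\mc{I}_1\cap\N_i$ and $\mc{K}_2\cap\N_i$ to $\mc{K}_1\cap\N_i$. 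The product $\sigma=\sigma_1\cdots\sigma_t$ then lies in $\Sigma_p$ by \eqref{feqbs} and satisfies $\sigma(\mc{I}_2)=\mc{I}_1$, $\sigma(\mc{K}_2)=\mc{K}_1$.

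The remaining two implications follow quickly from the auxiliary identity. For $(2)\Rightarrow(1)$, given $\sigma\in\Sigma_p$ with $\sigma(\mc{I}_2)=\mc{I}_1$ and $\sigma(\mc{K}_2)=\mc{K}_1$, we get $\sigma(E_1)=E(\sigma^{-1}(\mc{I}_1),\sigma^{-1}(\mc{K}_1))=E(\mc{I}_2,\mc{K}_2)=E_2$, so $E_1$ and $E_2$ are $p$-equivalent. For $(1)\Rightarrow(3)$, if $\tau\in\Sigma_p$ satisfies $\tau(E_1)=E_2$, the identity yields $E_2=E(\tau^{-1}(\mc{I}_1),\tau^{-1}(\mc{K}_1))$, so uniqueness of the facet parametrization forces $\mc{I}_2=\tau^{-1}(\mc{I}_1)$ and $\mc{K}_2=\tau^{-1}(\mc{K}_1)$; since $\tau^{-1}\in\Sigma_p$ preserves each block, $|\mc{I}_2\cap\N_i|=|\mc{I}_1\cap\N_i|$ and $|\mc{K}_2\cap\N_i|=|\mc{K}_1\cap\N_i|$ for every $i$, which is exactly (3).

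The main (modest) obstacle is establishing the auxiliary identity uniformly across both facet types and justifying the injectivity of the $(\mc{I},\mc{K})$-parametrization via the minimality of the elemental inequalities; once these are secured, the rest of the argument is combinatorial bookkeeping on block decompositions.
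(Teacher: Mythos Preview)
Your proof is correct and follows essentially the same approach as the paper's. The paper organizes the equivalences as $(1)\Leftrightarrow(2)$ and $(2)\Leftrightarrow(3)$ rather than your cycle $(3)\Rightarrow(2)\Rightarrow(1)\Rightarrow(3)$, but the substance is identical: the same inline computation of how $\sigma$ transforms the defining equation of a facet (your auxiliary identity, which the paper derives case by case without naming), and the same block-by-block construction of $\sigma$ from matching cardinalities of $\mc{I}\cap\N_l$, $\mc{K}\cap\N_l$, and their complement. Your explicit appeal to the injectivity of the $(\mc{I},\mc{K})$-parametrization via minimality of the elemental inequalities is a point the paper leaves implicit.
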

\begin{proof}
  We first prove that the first two statements are equivalent. If $E_1$ and $E_2$ are $p$-equivalent,
  there exists $\sigma\in\Sigma_p$ such that $\sigma(E_1)=E_2$. 
 For $|\mc{I}_1|=1$, i.e., $E_1=E(i)$ and $\mc{I}_1=\{i\}$ for some
 $i\in\N$, since for any $\h\in E_2$, $\sigma^{-1}(\h)\in E_1$, we have 
 \begin{align*}
   E_2&\ =\{\h\in\Gamma_n:\sigma^{-1}(\h)(\N)=\sigma^{-1}(\h)(\N\setminus\{i\})\}\\
   &\ =\{\h\in\Gamma_n:\h(\sigma^{-1}(\N))=\h(\sigma^{-1}(\N\setminus\{i\}))\}\\
   &\ =\{\h\in\Gamma_n:\h(\N)=\h(\N\setminus\sigma^{-1}(\{i\}))\}.
 \end{align*}
Hence, $\mc{I}_2=\sigma^{-1}(\{i\})=\sigma^{-1}(\mc{I}_1)$ or
$\mc{I}_1=\sigma(\mc{I}_2)$. In this case,
$\mc{K}_1=\sigma(\mc{K}_2)=\emptyset$.
For $|\mc{I}_1|=2$, i.e., $E_1=E(ij,\mc{K})$ and $\mc{I}_1=\{ij\}$ for some
 $i,j\in\N$, similarly, we have
\begin{align*}
   E_2&\ =\{\h\in\Gamma_n:\sigma^{-1}(\h)(\{i\}\cup\mc{K}_1)+\sigma^{-1}(\h)(\{j\}\cup\mc{K}_1)\\
&\quad \quad \quad \quad \quad \quad=\sigma^{-1}(\h)(\mc{K}_1)+\sigma^{-1}(\h)(\{ij\}\cup\mc{K}_1)\}\\
   &\ =\{\h\in\Gamma_n:\h(\sigma^{-1}(\{i\})\cup
   \sigma^{-1}(\mc{K}))+\h(\sigma^{-1}(\{j\})\cup
   \sigma^{-1}(\mc{K}))\\
&\ =\h(\sigma^{-1}(\mc{K}))+\h(\sigma^{-1}(\{ij\})\cup
   \sigma^{-1}(\mc{K})).
 \end{align*}
Hence, $\mc{I}_1=\sigma(\mc{I}_2)$ and
$\mc{K}_1=\sigma(\mc{K}_2)$. As each step above is invertible, the
second statement also implies the first one.

We now prove that the second and third statements are equivalent. If there exists $\sigma\in\Sigma_p$ such that
  $\mc{I}_1=\sigma(\mc{I}_2)$ and $\mc{K}_1=\sigma(\mc{K}_2)$,
then for any block $\N_l$ of $p$,
  $|\mc{I}_1\cap\N_l|=|\sigma(\mc{I}_2)\cap\N_l|=|\mc{I}_2\cap\N_l|$,
  where the second equality holds since
  $\sigma\in\Sigma_p$. Therefore,
  $\bs{\lambda}_{\mc{I}_1,p}=\bs{\lambda}_{\mc{I}_2,p}$. Similarly,
  $\bs{\lambda}_{\mc{K}_1,p}=\bs{\lambda}_{\mc{K}_2,p}$.

  On the other hand, assume that
  $\bs{\lambda}_{\mc{I}_1,p}=\bs{\lambda}_{\mc{I}_2,p}$ and
  $\bs{\lambda}_{\mc{K}_1,p}=\bs{\lambda}_{\mc{K}_2,p}$.
Since $\mc{I}_i$ and $\mc{K}_i$ are
  disjoint for $i=1,2$, given any block $\N_l$ of $p$,
  $\N_l\cap\mc{I}_i$, $\N_l\cap\mc{K}_i$ and
  $\N_l\setminus(\mc{I}_i\cup\mc{K}_i)$ are disjoint.
As $|\mc{I}_1\cap\N_l|=\bs{\lambda}_{\mc{I}_1,p}(l)=\bs{\lambda}_{\mc{I}_2,p}(l)=|\mc{I}_2\cap\N_l|$ and
  $|\mc{K}_1\cap\N_l|=\bs{\lambda}_{\mc{K}_1,p}(l)=\bs{\lambda}_{\mc{K}_2,p}(l)=|\mc{K}_2\cap\N_l|$,
  then there exists 
  $\sigma\in\Sigma_p$ such that
  $\mc{I}_1\cap\N_l=\sigma(\mc{I}_2\cap\N_l)$ and
  $\mc{K}_1\cap\N_l=\sigma(\mc{K}_2\cap\N_l)$ for any $l$.
Then it can be seen for such a $\sigma$, $\mc{I}_1=\sigma(\mc{I}_2)$ and $\mc{K}_1=\sigma(\mc{K}_2)$. 
\end{proof}

For a facet $E(\mc{I},\mc{K})\in\mc{E}_n$ of $\Gamma_n$, denote the $p$-orbit it belongs to by
$\mc{E}_p([\bs{\lambda}_{\mc{I},p},\bs{\lambda}_{\mc{K},p}])$. Note that
$\mc{E}_p([\bs{\lambda}_{\mc{I},p},\bs{\lambda}_{\mc{K},p}])$ is well-defined in
light of Lemma \ref{oiabd}.
Let $\mf{N}_p$ be the set
of all possible distinct pairs of
$\bs{\lambda}=[\bs{\lambda}_{\mc{I},p},\bs{\lambda}_{\mc{K},p}]$. Then
$\mf{E}_p=\{\mc{E}_p(\bs{\lambda}):\bs{\lambda}\in\mf{N}_p\}$. Evidently,
$|\mf{E}_p|=|\mf{N}_p|$. Let $\mc{G}_p$ be the collection of all facets of $\Psi_p$.

\begin{example}
\label{jaodm}
  Let $p=[n]$. For any $i\in\N$, facet $E(i)=E(\{i\},\emptyset)$
  with $[\bs{\lambda}_{\mc{I},p},\bs{\lambda}_{\mc{K},p}]=[(1),(0)]$. 
 It follows from Lemma \ref{oiabd} that
  all $E(i)$ are in the $[n]$-orbit $\mc{E}_{[n]}([(1),(0)])$.
  For distinct $i,j\in\N$, $\mc{K}\subset\N\setminus\{i,j\}$
  with the same $k\triangleq|\mc{K}|$, facets $E(ij,\mc{K})=E(\{i,j\},\mc{K})$
  with $[\bs{\lambda}_{\mc{I},p},\bs{\lambda}_{\mc{K},p}]=[(2),(k)]$. So by Lemma
  \ref{oiabd}, all such facets are in the same $[n]$-orbit
  $\mc{E}_{[n]}([(2),(k)])$. It can be seen that 
$\mf{N}_{[n]}=\{[(1),(0)]\}\cup\{[(2),(k)]:k=0,\cdots,n-2\}$ with cardinality $n$
and $\mf{E}_{[n]}=\{\mc{E}_{[n]}(\bs{\lambda}):\bs{\lambda}\in\mf{N}_{[n]}\}$ is a partition
of $\mc{E}_n$.
\hfill\QQQ
\end{example}

\begin{theorem}
\label{bafoi}
  Every $p$-orbit of facets of $\Gamma_n$ corresponds to a
  facet of $\Psi_p$, i.e., mapping $\omega_p:E\mapsto E\cap \fix_p$ satisfies
  \begin{enumerate}
    \item $\omega_p(E_1)=\omega_p(E_2)$ if and only if $E_1$ and
      $E_2$ are $p$-equivalent and
    \item $\omega_p$ is a surjection from $\mc{E}_n$ onto $\mc{G}_p$.
  \end{enumerate}    
\end{theorem}
\begin{proof} See Appendix A. 
\end{proof}

By \eqref{symt_cc}, $\fix_p$ is an $n_p\triangleq\prod^t_{i=1}(n_i+1)$
dimensional subspace of $\hsp_n$, where $n_i\triangleq\bs{\lambda}_p(i)=|\N_i|$.
For notational convenience, in the subsequent discussions, we map
$\fix_p$ to an $n_p$-dimensional Euclidean space as follows.
Let
$\mc{M}_p=\{(k_1,\cdots,k_t):k_i\in\{0,1,\cdots,n_i\},i=1,\cdots,t\}$.
For $\h\in \fix_p$, let $s_{k_1,\cdots,k_t}$ be the common value taken by
$\h(\mc{A})$ for all $\mc{A}$ such that $\bs{\lambda}_{\mc{A},p}=(k_1,\cdots,k_t)$.
For a fixed $(k_1,\cdots,k_t)\in\mc{M}_p$, the total number of $\mc{A}$
such that $\h(\mc{A})=s_{k_1,\cdots,k_t}$ is precisely $\prod^{t}_{i=1}\binom{n_i}{k_i}$.
For every $\h\in \fix_p$, let
\begin{equation}
  \label{eq:3}
  \s(\h,p)=(s_{k_1,\cdots,k_t})_{(k_1,\cdots,k_t)\in\mc{M}_p},
\end{equation}
such that $s_{k_1,\cdots,k_t}=\h(\mc{A})$ if 
$\bs{\lambda}_{\mc{A},p}=(k_1,\cdots,k_t)$. Note that $\s(\h,p)$ is properly
defined because for all $\h\in \fix_p$, $\h(\mc{A})=\h(\mc{B})$ if
$\bs{\lambda}_{\mc{A},p}=\bs{\lambda}_{\mc{B},p}$. When there is no
ambiguity, $\s(\h,p)$ is simply written as $\s$.

For $E(\mc{I},\mc{K})\in\mc{E}_n$,
when
$|\mc{I}|=1$ or $\mc{I}=\{i\}$ for some $i\in\N$, we must have $\mc{K}=\emptyset$ which implies that
$\bs{\lambda}_{\mc{K},p}$ must be equal to $\mb{0}_t$, a zero vector with dimension
$t$. In this case, the number of 
  possible pairs of
  $[\bs{\lambda}_{\mc{I},p},\bs{\lambda}_{\mc{K},p}]$ is equal to
  on the numbers of possible values of $\bs{\lambda}_{\mc{I},p}$,
  because both $p$ and $\mc{K}$ are fixed. If
  $i\in\N_l$, the $l$-th block of $p$, it can
  be seen that $\bs{\lambda}_{\mc{I},p}=\mb{1}_{t}(l)$, a $t$-vector
  with the $l$-th entry equal to 1 and other
entries equal to 0. Hence there are $t$
possible such values. By Theorem \ref{bafoi}, for any
$E\in\mc{E}_p([\mb{1}_{t}(l),\mb{0}_t])$, $E\cap \fix_p$ are all
the same, i.e., 
\begin{align}
  E\cap \fix_p&=\{\h\in\Psi_p: s_{\bs{\lambda}_p}=
  s_{\bs{\lambda}_{\N\setminus\{i\},p}} \}\nonumber\\
\label{soidv}
&=\{\h\in\Psi_p: s_{n_1,\cdots,n_t}=
  s_{n_1,\cdots,n_l-1,\cdots,n_t}. \}
\end{align}
Note that $\h\in \fix_p$ and by \eqref{eq:3}, $s_{n_1,\cdots,n_t}$, $s_{n_1,\cdots,n_l-1,\cdots,n_t}$ are entries of $\s(\h,p)$.

When $|\mc{I}|=2$ or $\mc{I}=\{i,j\}$ for some distinct $i,j\in\N$, if
$i,j$ are in different blocks, i.e., $i\in\N_{l_1}$, $j\in\N_{l_2}$ or
$j\in\N_{l_1}$, $i\in\N_{l_2}$ with some $1\le l_1<l_2\le t$, 
$\bs{\lambda}_{\mc{I},p}=\mb{1}_{t}(l_1,l_2)$, a $t$-vector with the $l_1$-th
and $l_2$-th entries equal to 1 and other
entries equal to 0; else if $i,j$ are in the same $\N_l$,
$\bs{\lambda}_{\mc{I},p}=\mb{2}_{t}(l)$, a $t$-vector with the $l$-th
entry equal to 2 and other
entries equal to 0. For $\bs{\lambda}_{\mc{I},p}=\mb{1}_{t}(l_1,l_2)$,
the number of possible values of $\bs{\lambda}_{\mc{K},p}=(k_1,\cdots,k_t)$
is $\frac{n_{l_1}n_{l_2}}{(n_{l_1+1})(n_{l_2+1})}\prod^t_{m=1}(n_m+1)$.
It is because $k_m$ can be equal to $0,\cdots,n_m-1$ if $m=
l_1,l_2$, and $k_m$ can be equal to $0,\cdots,n_m$,
otherwise. In other words, $\bs{\lambda}_{\mc{K},p}$ can be equal to any
$(k_1,\cdots,k_t)\in\mc{M}_p$ such that $k_{l_1}\neq n_{l_1}$ and $k_{l_2}\neq n_{l_2}$. For $E\in\mc{E}_p([\mb{1}_{t}(l_1,l_2),(k_1,\cdots,k_t)])$,
\begin{align}
  E\cap \fix_p&=\{\h\in\Psi_p:
 s_{\bs{\lambda}_{\mc{K}\cup\{i\},p}}+s_{\bs{\lambda}_{\mc{K}\cup\{j\},p}}=
 s_{\bs{\lambda}_{\mc{K},p}}+s_{\bs{\lambda}_{\mc{K}\cup\{i,j\},p}}\}\nonumber\\
&=\{\h\in\Psi_p:
s_{k_1,\cdots,k_{l_1}+1,\cdots,k_t}+s_{k_1,\cdots,k_{l_2}+1,\cdots,k_t}=
s_{k_1,\cdots,k_t}+s_{k_1,\cdots,k_{l_1}+1,\cdots,k_{l_2}+1,\cdots,k_t}\}.
\label{kobix}
\end{align}
Similarly, for $\bs{\lambda}_{\mc{I},p}=\mb{2}_{t}(l)$, the
number of possible values of $\bs{\lambda}_{\mc{K},p}$ 
is $\frac{n_{l}-1}{n_{l}+1}\prod^t_{m=1}(n_m+1)$. 
It is because $\bs{\lambda}_{\mc{K},p}$ can be any
$(k_1,\cdots,k_t)\in\mc{M}_p$ such that $k_{l}\neq n_l-1,n_l$.
For $E\in\mc{E}_p([\mb{2}_{t}(l),(k_1,\cdots,k_t)])$,
\begin{align}
  E\cap \fix_p&=\{\h\in\Psi_p:
s_{\bs{\lambda}_{\mc{K}\cup\{i\},p}}+s_{\bs{\lambda}_{\mc{K}\cup\{j\},p}}=
 s_{\bs{\lambda}_{\mc{K},p}}+s_{\bs{\lambda}_{\mc{K}\cup\{i,j\},p}}\}\nonumber\\
&=\{\h\in\Psi_p:
 2s_{k_1,\cdots,k_{l}+1,\cdots,k_t}=
 s_{k_1,\cdots,k_t}+s_{k_1,\cdots,k_{l}+2,\cdots,k_t}\}.
\label{xiobd}
\end{align}

Let $\mf{N}_A=\{[\mb{1}_{t}(l),\mb{0}_t]:1\le l\le
  t\}$, $\mf{N}_B=\{[\mb{1}_{t}(l_1,l_2),(k_1,\cdots,k_t)]:(k_1,\cdots,k_t)\in\mc{M}_p,
k_{l_1}\neq n_{l_1}, k_{l_2}\neq n_{l_2}, 1\le l_1<l_2\le t\}$ and $\mf{N}_C=\{[\mb{2}_{t}(l),(k_1,\cdots,k_t)]:(k_1,\cdots,k_t)\in\mc{M}_p,
k_{l}\neq n_{l},1\le l\le t\}$.
From the above discussion, we see that $\mf{N}_p=\mf{N}_A\cup
\mf{N}_B\cup \mf{N}_C$. Since $\mf{E}_p=\{\mc{E}_p(\bs{\lambda}):\bs{\lambda}\in\mf{N}_p\}$,
we have
\begin{equation}
\label{bfkdg}
  |\mf{E}_p|=|\mf{N}_p|=t+\sum_{1\le l_1<l_2\le t}\frac{n_{l_1}n_{l_2}}{(n_{l_1+1})(n_{l_2+1})}\prod^t_{m=1}(n_m+1)+\sum_{1\le l\le t}\frac{n_{l}-1}{n_{l}+1}\prod^t_{m=1}(n_m+1).
\end{equation}

What we have
proved in Theorem \ref{bafoi} implies that there exists a bijection
between $\mf{E}_p$ and $\mc{G}_p$, and so $E\cap \fix_p$ as listed in 
\eqref{soidv}, \eqref{kobix} and \eqref{xiobd} are precisely all the members of
$\mc{G}_p$, and so $|\mc{G}_p|=|\mf{E}_p|$.

\textbf{Remark} When $p=\{\N\}$,  the first term of the right hand side of
\eqref{bfkdg} is 1, the second term vanishes and the last term is
$n-1$, and so $|\mf{E}_p|=n$.
 When $p=\{\{i\}:i\in\N\}$, the first term of the right hand side of
\eqref{bfkdg} is $n$, the
second term is $\binom{n}{2}2^{n-2}$ and the third term
vanishes. Therefore, $|\mf{E}_p|=n+\binom{n}{2}2^{n-2}$, which is equal
to $|\mc{E}_n|$, the
number of facets of $\Gamma_n$ \cite[(14.2)]{Y08}. In this case,
$\Psi_p=\Gamma_n$, so $|\mc{E}_n|=|\mc{G}_p|$ and by Theorem
\ref{bafoi}, $|\mc{G}_p|=|\mf{E}_p|$.
For $p=\{\N_1,\cdots,\N_t\}$, when $t\ll n$, it can be seen from
\eqref{bfkdg} that $|\mf{E}_p|\ll|\mc{E}_n|$, and then by Theorem
\ref{bafoi}, $|\mc{G}_p|\ll|\mc{E}_n|$. Therefore, for an information
theory problem with symmetrical structures induced by such a partition
$p$, the
complexity of this problem can be significantly reduced. 

By Theorem \ref{bafoi}, we denote the facets of $\Psi_p$ by
$G_p([\bs{\lambda}_{\mc{I},p},\bs{\lambda}_{\mc{K},p}])$ and thus
$\mc{G}_p=\{G_p(\bs{\lambda}):\lambda\in\mf{N}_p\}$. Specifically, 
\begin{align*}
  G_p([\bs{\lambda}_{\mc{I},p},\bs{\lambda}_{\mc{K},p}])&=
  E(\mc{I},\mc{K})\cap S_p\\
&=
\begin{cases}
  &G_p([\mb{1}_{t}(l),\mb{0}_t])=E(i)\cap S_p,\text{ if } \mc{I}=\{i\},\mc{K}=\emptyset,\ i\in\N_l,\ l=1,\cdots,t\\
&G_p([\mb{1}_t(l_1,l_2),\bs{\lambda}_{\mc{K},p}])=E(ij,\mc{K})\cap S_p,\text{ if } \mc{I}=\{i,j\},\ i\in\N_{l_1},
j\in\N_{l_2}, 1\le
 l_1<l_2\le t,\\  
&G_p([\mb{2}_t(l),\bs{\lambda}_{\mc{K},p}])=E(ij,\mc{K})\cap S_p,\text{ if } \mc{I}=\{i,j\},\ i,j\in\N_{l},\
l=1,\cdots,t.
\end{cases}
\end{align*}


\subsection{Other properties}
\label{op1512}

For $p=\{\N_1,\cdots,\N_t\}\in\mc{P}_{n}$, define $\psi_p:\hsp_n\rightarrow \hsp_n$ as follows. For any
$\mc{A}\subset\N$, 
\begin{equation*}
  \psi_p(\h)(\mc{A})=\bigg(\prod^t_{i=1}\binom{n_i}{a_i}\bigg)^{-1}\sum_{
      \mc{B}:\ \bs{\lambda}_{\mc{B},p}=\bs{\lambda}_{\mc{A},p}}
\h(\mc{B}).
\end{equation*}
where
$a_i\triangleq\bs{\lambda}_{\mc{A},p}(i)=|\mc{A}\cap\N_i|,i=1,\cdots,t$. It
can be shown that $\psi_p$ is a surjection from $\hsp_n$ onto
$\fix_p$. This is because it can be checked that for any $\h\in\hsp_n$,
$\psi_p(\h)\in \fix_p$, and for any $\h\in \fix_p$, $\h=\psi_p(\h)$.

\begin{theorem}
\label{iobad}
For any $p\in\mc{P}_{n}$,
  \begin{equation*}
  \psi_p(\h)=\frac{1}{|\Sigma_p|}\sum_{\sigma\in\Sigma_p}\sigma(\h).
\end{equation*}
\end{theorem}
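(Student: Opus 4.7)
The plan is to verify the identity componentwise. Fix $\mc{A} \subset \N$ and set $a_i = |\mc{A} \cap \N_i|$ for $i=1,\dots,t$. By the definition of the group action in \eqref{eq:1}, the $\mc{A}$-component of the right-hand side equals $\frac{1}{|\Sigma_p|} \sum_{\sigma \in \Sigma_p} \h(\sigma(\mc{A}))$, so the task reduces to regrouping this sum according to the value $\mc{B} = \sigma(\mc{A})$ and matching the coefficients.

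First, I would identify the orbit of $\mc{A}$ under the induced action of $\Sigma_p$ on $2^\N$ (Proposition \ref{o238}). Because every $\sigma \in \Sigma_p$ preserves each block $\N_i$ setwise, $|\sigma(\mc{A}) \cap \N_i| = |\mc{A} \cap \N_i|$, so any $\mc{B}$ in the orbit satisfies $\bs{\lambda}_{\mc{B},p} = \bs{\lambda}_{\mc{A},p}$. Conversely, given any $\mc{B}$ with the same partition vector, one can build $\sigma \in \Sigma_p$ with $\sigma(\mc{A}) = \mc{B}$ by choosing, block by block, any bijection $\mc{A} \cap \N_i \to \mc{B} \cap \N_i$ together with any bijection $\N_i \setminus \mc{A} \to \N_i \setminus \mc{B}$. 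Hence the orbit of $\mc{A}$ is exactly $\{\mc{B} \subset \N : \bs{\lambda}_{\mc{B},p} = \bs{\lambda}_{\mc{A},p}\}$.

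Next, I would compute the stabilizer of $\mc{A}$ in $\Sigma_p$. A permutation $\sigma \in \Sigma_p$ fixes $\mc{A}$ setwise if and only if, within each block, it maps $\mc{A} \cap \N_i$ to itself; this yields a stabilizer of order $\prod_{i=1}^t a_i!\,(n_i - a_i)!$. Since $|\Sigma_p| = \prod_{i=1}^t n_i!$, the orbit-stabilizer theorem gives that, for every $\mc{B}$ in the orbit, the number of $\sigma \in \Sigma_p$ with $\sigma(\mc{A}) = \mc{B}$ equals $\prod_{i=1}^t a_i!\,(n_i - a_i)!$, independent of $\mc{B}$.

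Finally, I would reindex: $\sum_{\sigma \in \Sigma_p} \h(\sigma(\mc{A})) = \bigl(\prod_{i=1}^t a_i!\,(n_i - a_i)!\bigr) \sum_{\mc{B} : \bs{\lambda}_{\mc{B},p} = \bs{\lambda}_{\mc{A},p}} \h(\mc{B})$, and divide by $|\Sigma_p| = \prod_{i=1}^t n_i!$ to get the prefactor $\prod_{i=1}^t \binom{n_i}{a_i}^{-1}$, which is exactly $\psi_p(\h)(\mc{A})$. Since $\mc{A}$ was arbitrary, the two vectors agree. The argument is a direct bookkeeping application of orbit-stabilizer together with the block-preserving structure of $\Sigma_p$, so there is no genuine obstacle; the only thing to be careful about is the bijection between the orbit of $\mc{A}$ and the set of $\mc{B}$ with matching partition vector.
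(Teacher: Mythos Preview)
Your proof is correct and follows essentially the same route as the paper: both argue componentwise, identify the stabilizer of $\mc{A}$ in $\Sigma_p$ as having order $\prod_i a_i!(n_i-a_i)!$, use that every $\mc{B}$ with the same partition vector is hit by exactly that many $\sigma$'s (the paper phrases this as a coset argument, you as orbit--stabilizer), and then collapse the sum to recover the binomial prefactor in $\psi_p$. The only cosmetic difference is that you spell out explicitly why the orbit of $\mc{A}$ coincides with $\{\mc{B}:\bs{\lambda}_{\mc{B},p}=\bs{\lambda}_{\mc{A},p}\}$, which the paper leaves implicit.
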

\begin{proof}
  Let $p=\{\N_1,\cdots,\N_t\}$. By definition, as discussed above, $\Sigma_p$ is a group with composition as its group
  operation, which is a subgroup of the symmetrical group 
  $\Sigma_n$ and $|\Sigma_p|=\prod^t_{i=1}n_i!$. For a fixed
  $\mc{A}\subset\N$, define
  \begin{equation*}
    \Sigma_{\mc{A},p}=\{\sigma\in\Sigma_p:\sigma(\mc{A})=\mc{A}\}.
  \end{equation*}
It can be checked that $\Sigma_{\mc{A},p}$ is a subgroup of
$\Sigma_p$. Since for $\sigma\in\Sigma_p$, $\sigma_i(j)=j$ for all
$j\notin\N_i,i=1,\cdots,t$ according to \eqref{feqbs}, we have
$|\Sigma_{\mc{A},p}|=\prod^t_{i=1} a_i!(n_i-a_i)!$.
Furthermore, for any $\mc{B}\in\N$ such that $\bs{\lambda}_{\mc{B},p}=\bs{\lambda}_{\mc{A},p}$, 
  \begin{equation*}
    \Sigma_{\mc{B}|\mc{A},p}\triangleq\{\sigma\in\Sigma_p:\sigma(\mc{A})=\mc{B}\}
  \end{equation*}
is a left coset of $\Sigma_{\mc{A},p}$ in $\Sigma_p$. To see this, let $\sigma_b\in
\Sigma_{\mc{B}|\mc{A},p}$. Then it is routine to check that the mapping
$f:\Sigma_{\mc{A},p}\rightarrow\Sigma_{\mc{B}|\mc{A},p}$ defined by
$f(\sigma)=\sigma_b\circ\sigma$ is a bijection. Therefore,
\begin{equation}
  \label{eq:6}
  |\Sigma_{\mc{B}|\mc{A},p}|=|\Sigma_{\mc{A},p}|=\prod^t_ia_i!(n_i-a_i)!.
\end{equation}
Then
\begin{align}
\label{qdavd}
  \left(\frac{1}{|\Sigma_p|}\sum_{\sigma\in\Sigma_p}\sigma(\h)\right)(\mc{A})
=&\frac{1}{|\Sigma_p|}\sum_{\sigma\in\Sigma_p}\sigma(\h)(\mc{A})\\
\label{ladid}
=&\frac{1}{|\Sigma_p|}\sum_{\sigma\in\Sigma_p}\h(\sigma(\mc{A}))\\
\label{hhobr}
=&\frac{1}{|\Sigma_p|}\sum_{
      \mc{B}:\ \bs{\lambda}_{\mc{B},p}=\bs{\lambda}_{\mc{A},p}}\sum_{\sigma\in\Sigma_{\mc{B}|\mc{A},p}}\h(\sigma(\mc{A}))\\
=&\frac{1}{|\Sigma_p|}\sum_{
      \mc{B}:\ \bs{\lambda}_{\mc{B},p}=\bs{\lambda}_{\mc{A},p}}|\Sigma_{\mc{B}|\mc{A},p}|\h(\mc{B})\nonumber\\
\label{ciosd}
=&\frac{|\Sigma_{\mc{A},p}|}{|\Sigma_p|}\sum_{\sigma\in\Sigma_{\mc{B}|\mc{A},p}}\h(\mc{B})\nonumber\\
=&\frac{|\Sigma_{\mc{A},p}|}{|\Sigma_p|}\sum_{
      \mc{B}:\ \bs{\lambda}_{\mc{B},p}=\bs{\lambda}_{\mc{A},p}}\h(\mc{B})\\
\label{dfbio}
=&\frac{\prod^t_ia_i!(n_i-a_i)!}{\prod^t_{i=1}n_i!}\sum_{
      \mc{B}:\ \bs{\lambda}_{\mc{B},p}=\bs{\lambda}_{\mc{A},p}}\h(\mc{B})\\
\label{odfbi}
=&\left(\prod^t_{i=1}\binom{n_i}{a_i}\right)^{-1}\sum_{
      \mc{B}:\
      \bs{\lambda}_{\mc{B},p}=\bs{\lambda}_{\mc{A},p}}\h(\mc{B})\nonumber\\
=&\psi_p(\h)(\mc{A})\nonumber,
\end{align} which proves the theorem.
Eq. \eqref{qdavd} is valid since the summation is component-wise. Eq.
\eqref{ladid} is due to \eqref {eq:1}. For \eqref{hhobr}, we partition
$\Sigma_p$ into cosets $\Sigma_{\mc{B}|\mc{A},p}$
of $\Sigma_{\mc{A},p}$. Eqs. \eqref{ciosd} and \eqref{dfbio}
are due to \eqref{eq:6}.
\end{proof}
Therefore, we have proved that $\psi_p(\h)$ is the average of all $\sigma(\h)$,
$\sigma\in\Sigma_p$, i.e., all elements in $\mc{O}_p(\h)$.

\begin{lemma}
\label{oifdb}
  If $T\subset \hsp_n$ is $p$-invariant and convex, $\psi_p(T)=T\cap S_p$.
\end{lemma}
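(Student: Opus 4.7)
The plan is to prove the two inclusions $\psi_p(T) \subset T \cap S_p$ and $T \cap S_p \subset \psi_p(T)$ separately, leveraging the averaging formula for $\psi_p$ given by Theorem \ref{iobad} together with the two properties of $T$ ($p$-invariance and convexity).

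For the forward inclusion $\psi_p(T) \subset T \cap S_p$, I would take any $\h \in T$ and show $\psi_p(\h) \in T \cap S_p$. The fact that $\psi_p(\h) \in S_p$ is already noted in the paragraph preceding Theorem \ref{iobad} (since $\psi_p$ is stated to be a surjection onto $S_p$). The substantive content is that $\psi_p(\h) \in T$: by Theorem \ref{iobad}, $\psi_p(\h)$ is the average $\frac{1}{|\Sigma_p|} \sum_{\sigma \in \Sigma_p} \sigma(\h)$; since $T$ is $p$-invariant, each $\sigma(\h) \in T$, and since $T$ is convex, any such convex combination remains in $T$.

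For the reverse inclusion $T \cap S_p \subset \psi_p(T)$, take any $\h \in T \cap S_p$. The remark immediately before Theorem \ref{iobad} states that $\h = \psi_p(\h)$ for every $\h \in S_p$, so $\h = \psi_p(\h) \in \psi_p(T)$.

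There is no real obstacle here: both inclusions follow directly once Theorem \ref{iobad} is in hand. Convexity is used exactly once (to keep the average inside $T$), and $p$-invariance is used exactly once (to ensure every summand $\sigma(\h)$ lies in $T$). The only mild care needed is to cite the already-established facts that $\psi_p$ maps $\hsp_n$ onto $S_p$ and acts as the identity on $S_p$, so that the reverse inclusion is immediate without further argument.
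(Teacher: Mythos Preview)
Your proposal is correct and matches the paper's own proof essentially step for step: the paper also proves both inclusions, invoking $p$-invariance and convexity (implicitly via the averaging formula of Theorem~\ref{iobad}) for $\psi_p(T)\subset T\cap S_p$, and the identity $\psi_p(\h)=\h$ for $\h\in S_p$ for the reverse inclusion. Your write-up is simply a more detailed version of the same argument.
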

\begin{proof}
   Since $T$ is $p$-invariant and convex,
$\psi_p(T)\subset T$
and so
$\psi_p(T)\subset T\cap
\fix_p$.
On the other hand, for any $\s\in
T\cap \fix_p$, $\s=\psi_p(\s)\in
\psi_p(T)$ which implies that
$\psi_p(T)\supset T\cap
\fix_p$. 
\end{proof}
So, $\psi_p(\Gamma_p)=\Psi_p$ and
$\psi_p(\overline{\Gamma^*_n})=\overline{\Gamma^*_n}\cap \fix_p$. 
  
For $\mb{h}\in\hsp_n$ and $r>0$, let
$\ball{\mb{h}}{r}=\{\mb{x}\in\hsp_n:\|\mb{x}-\mb{h}\|_2<r\}$, the open
ball centered at $\mb{h}$ with radius $r$. 
For a set $A\subset\hsp_n$,
  $\ri{A}\triangleq\{\mb{x}\in\hsp_n:\exists
  \epsilon>0,\ball{\mb{x}}{\epsilon}\cap \aff{A}\subset A\}$ is called
  the \emph{relative interior} of $A$, where $\aff{A}$ is the affine
  hull of $A$.
Let $\hsp^0_n=\{\h\in\hsp_n:\h(\emptyset)=0\}$ and $\fix^0_p=\{\h\in\fix_p:\h(\emptyset)=0\}$.

\begin{theorem}
\label{lem:psi2}
For any $p\in\mc{P}_n$,
        \begin{equation*}
            \overline{\Psi^*_p}=\overline{\Gamma^*_n}\cap \fix_p.
        \end{equation*}
\end{theorem}
\begin{proof}
  Toward proving this theorem, we first prove that
        \begin{equation}
          \label{L:int}
            \overline{\ri{\Gamma^*_n}}=\overline{\Gamma^*_n}.
        \end{equation}
By \cite[Theorem 1]{M07b},
$\ri{\overline{\Gamma^*_n}}\subset\Gamma^*_n$. Since
$\ri{\overline{\Gamma^*_n}}$ is open,
$\ri{\overline{\Gamma^*_n}}\subset\ri{\Gamma^*_n}$. On the
other hand,
$\ri{\Gamma^*_n}\subset\ri{\overline{\Gamma^*_n}}$. Thus
$\ri{\Gamma^*_n}=\ri{\overline{\Gamma^*_n}}$. Taking the closure
of the both sides, we have
$\overline{\ri{\Gamma^*_n}}=\overline{\ri{\overline{\Gamma^*_n}}}$.
Hence $\overline{\ri{\Gamma^*_n}}=\overline{\Gamma^*_n}$ by
\cite[Theorem 6.3]{R70}.

Now we prove Theorem \ref{lem:psi2}. Since $\overline{\Gamma^*_n}\cap \fix_p\supset\Gamma^*_n\cap
\fix_p$, by taking the closure of both sides, we have $\overline{\Gamma^*_n}\cap
\fix_p\supset\overline{\Gamma^*_n\cap \fix_p}$. Then it suffices to prove
$\overline{\Gamma^*_n}\cap \fix_p\subset\overline{\Gamma^*_n\cap \fix_p}$, or for any
$\h\in\overline{\Gamma^*_n}\cap \fix_p$, $\h\in\overline{\Gamma^*_n\cap \fix_p}$. Let
$\ball{c}{r}$ denote an open ball centered at $c$ with radius $r$.
Given any $\epsilon>0$, let $B=\ball{\h}{\epsilon}\cap\ri{\Gamma^*_n}$.
Since $\h\in\overline{\Gamma^*_n}=\overline{\ri{\Gamma^*_n}}$ by \eqref{L:int},
$B\neq\emptyset$. Let $\g\in B$. Then there exits a set of 
random variables $X_\N$ whose entropy function is $\g$.
Then for $\sigma\in\Sigma_p$, $\sigma(\g)$ is the entropy function of
$X'_\N\triangleq\{X'_i=X_{\sigma(i)}:i\in\N\}$.
Since $\h\in \fix_p$, for
any $\sigma\in\Sigma_p$, $\|\h-$ $\sigma(\g)\|_2=\|\h-\g\|_2<\epsilon$, i.e., $\sigma(\g)\in
\ball{\h}{\epsilon}$. Let $\h^*=\frac{1}{|\Sigma_p|}\sum_{\sigma\in \Sigma_p}\sigma(\g)$.
Note that $\h^*=\psi_p(\h)\in \fix_p$. Due to the convexity of
$\ball{\h}{\epsilon}$ and $\ri{\Gamma^*_n}$, we have $\h^*\in
\ball{\h}{\epsilon}\cap\ri{\Gamma^*_n}$. Hence $\h^*\in
\ball{\h}{\epsilon}\cap\ri{\Gamma^*_n}\cap \fix_p$, that is,
$\ball{\h}{\epsilon}\cap\ri{\Gamma^*_n}\cap \fix_p\neq\emptyset$
for all $\epsilon>0$, which implies $\h\in\overline{\ri{\Gamma^*_n}\cap
\fix_p}\subset\overline{\Gamma^*_n\cap \fix_p}$. Therefore $\overline{\Gamma^*_n}\cap
\fix_p\subset\overline{\Gamma^*_n\cap \fix_p}$.
\end{proof}
Then Theorem
\ref{lem:psi2} implies that $\psi_p(\overline{\Gamma^*_n})=\overline{\Psi^*_p}$.
  For any
$p\in\mc{P}_n$, $\psi_p$ is linear by Theorem \ref{iobad} and so continuous. Hence
$\overline{\psi_p(\Gamma^*_n)}=\psi_p(\overline{\Gamma^*_n})$. 
By Lemma \ref{oifdb}, $\psi_p(\overline{\Gamma^*_n})=\overline{\Gamma^*_n}\cap
\fix_p$.
Together with Theorem \ref{lem:psi2}, we have $\overline{\psi_p(\Gamma^*_n)}=\overline{\Gamma^*_n\cap
  \fix_p}=\overline{\Psi^*_p}$. 

\section{Proof of Theorem 1}
\label{5555}

\subsection{Proof of the ``if'' part}
\label{ifpart}
In this subsection, we prove that $\overline{\Psi^*_p}=\Psi_p$ if
$p=[n]$ or $[1,n-1]$.

\emph{1) The case $p=[n]$:}

From\cite[Theorem 4.1]{Han78}, one can readily obtain
       \begin{equation}
\label{col1}
            \Psi_n=\overline{\Gamma^*_n}\cap\fix_n.
        \end{equation}
A more explicit proof of \eqref{col1} can be find in \cite[Section V]{P03}.  Then the following corollary follows immediately from \eqref{col1} and Theorem
\ref{lem:psi2}.

\begin{corollary}
\label{rel11}
    \begin{equation}
    \overline{\Psi^*_n}=\Psi_n.
  \end{equation}
\end{corollary}

\textbf{Remark}  In the 
proof of \eqref{col1} in \cite[Section V]{P03}, for every extreme ray of
$\Psi_n$, a set of random
variables whose entropy function on this ray is constructed. 
Indeed, these extreme rays of $\Psi_n$ are exactly those
rays containing the uniform matroids $U_{k,n},k\in\N$, where
$U_{k,n}(\mc{A})\triangleq\min\{k,|\mc{A}|\},\mc{A}\subset\N$. 
Since
$U_{k,n}$ are representable, they are also almost entropic, and so are those
rays containing them. The case
$p=[1,n-1]$ will be proved by a similar method.

\emph{2) The case $p=[1,n-1]$:}

According to \eqref{soidv}-\eqref{xiobd},
\begin{align}
\Psi_{1,n-1} =\{\h\in \fix_{1,n-1}:
\label{A6}
&s_{1,n-1}\ge s_{1,n-2},\\
\label{A7}
  &s_{1,n-1}\ge s_{0,n-1},\\
\label{A8}
 & s_{1,j-1}+ s_{0,j}\ge s_{0,j-1}+s_{1,j},\ 1\le j\le n-1,\\
\label{A9}
 & 2s_{i,j}\ge s_{i,j-1}+s_{i,j+1}, i=0,1,1\le j\le n-2
\}.
\end{align}

\begin{definition}[Free expansion, factor \cite{N78},\cite{W86}]
\label{def:fr_ex}
  Let $\h\in \pmr$ be an integer-valued polymatroid. Consider a set
$\mc{M}$ with cardinality $m\triangleq\sum_{i\in\N}\h(\{i\})$ and
any mapping $\phi:\N\rightarrow 2^{\mc{M}}$ such that $\phi(i)$ has
the cardinality $\h(\{i\}),i\in\N$ and
$\phi(i)\cap\phi(j)=\emptyset$ for $i\neq j$. Then the \emph{free
expansion} $\g\in\Gamma_m$ of $\h$ by $\phi$ is defined by
\begin{equation}
  \label{eq:2}
  \g(\mc{A})=\min_{\mc{B}\subset \N}\big(\h(\mc{B})+|\mc{A}\setminus \phi(\mc{B})|\big),\
  \mc{A}\subset \mc{M}.
\end{equation}
It is said that $\g$ \emph{factors to} $\h$ under $\phi$ or $\h$ is a
\emph{factor} of $\g$.
\end{definition}
It can be checked that $\g$ is also an
integer-valued polymatroid and furthermore, can be proved to be a
matroid \cite{N78}.

\begin{theorem}
\label{lem:1,n-1}
For $n\ge2$, the set of all extreme rays of $\Psi_{1,n-1}$ are the
rays containing
the polymatroids
  \begin{align}
\label{r1}
  &U^{\{1\},n}_{1,1},\\
\label{r2}
  &U^n_{1,n-1 },\cdots,U^n_{n-1,n-1 },\\
\label{r3}
  &U^n_{1,n},\cdots,U^n_{n-1,n},\\
\label{r4}
  &U^n_{2,n+1},\cdots,U^n_{n-1,n+1},\\
\label{r5}
  &\cdots,\\
\label{r6}
  &U^n_{n-2,2n-3},U^n_{n-1,2n-3},\\
\label{r7}
  &U^n_{n-1,2n-2}
\end{align}
\end{theorem}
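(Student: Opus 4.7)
The plan is to adapt the proof of Theorem~\ref{th:3} to the higher-dimensional cone $\Psi_{1,n-1}$. Since $\dim S^0_{1,n-1} = 2n - 1$ and $\Psi_{1,n-1}$ is full-dimensional therein, every extreme ray is the intersection of $2n - 2$ linearly independent facets of $\Psi_{1,n-1}$. First, I identify each polymatroid listed in \eqref{r1}--\eqref{r7} as the free-expansion factor (Definition~\ref{def:fr_ex}) of a uniform matroid $U_{m,k}$ under a partition-respecting map $\phi$: for the rays \eqref{r2}--\eqref{r7}, $\phi$ sends $1 \in \N_1$ to a block of size $c = k - (n-1)$ and each element of $\N_2$ to a singleton, while \eqref{r1} corresponds to the degenerate case in which $\N_2$ is mapped to loops. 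By \eqref{eq:7}, each such polymatroid has the closed form
\begin{equation*}
s_{i,j} \;=\; \min\{m,\ ci + j\},\quad i\in\{0,1\},\ 0\le j\le n-1,
\end{equation*}
with $c \in \{0, 1, \ldots, n-1\}$ and $\max(1, c) \le m \le n - 1$. Such an $\h$ is $[1,n-1]$-symmetric and polymatroidal, hence lies in $\Psi_{1,n-1}$.

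Second, for each such $\h$, I substitute the closed form into the $3n-3$ defining inequalities \eqref{A6}--\eqref{A9} and check tightness. Concavity \eqref{A9} at $(i,j)$ is tight exactly when the three values $ci+j-1, ci+j, ci+j+1$ all lie on the same side of the kink at $m$, and submodularity \eqref{A8} is tight exactly when the two relevant increments of $s_{1,\cdot}$ and $s_{0,\cdot}$ agree. A book-keeping argument shows that precisely $2n - 2$ inequalities are tight at $\h$ and that their gradients span a $(2n - 2)$-dimensional subspace of $S^0_{1,n-1}$, so each listed ray is an extreme ray of $\Psi_{1,n-1}$.

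For the converse, I would reparameterize via the increments $u_j = s_{0,j+1} - s_{0,j}$ and $v_j = s_{1,j+1} - s_{1,j}$ for $0 \le j \le n-2$, together with $b_0 = s_{1,0}$. In these $2n-1$ coordinates, $\Psi_{1,n-1}$ becomes the cone defined by $b_0 \ge 0$, $u_0 \ge \cdots \ge u_{n-2} \ge 0$, $v_0 \ge \cdots \ge v_{n-2} \ge 0$, $u_j \ge v_j$ for each $j$, and $b_0 + \sum_j v_j \ge \sum_j u_j$ from \eqref{A7}. An extreme ray corresponds to a $(2n-2)$-dimensional tight sub-system, and a case analysis on the plateau structure of the two nonincreasing sequences $\{u_j\}$ and $\{v_j\}$---parameterized by where each sequence drops to zero---together with whether $b_0 = 0$, shows every extreme ray has the closed form $s_{i,j} = \min\{m, ci + j\}$ for some admissible $(c, m)$.

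The main obstacle is this converse step: in the one-partition case of Theorem~\ref{th:3}, extreme rays are indexed by a single integer $m$, so the analogous case analysis is trivial; here the indexing is by the pair $(c, m)$, and the interaction between submodularity \eqref{A8} and concavity \eqref{A9} in the tight sub-systems makes the combinatorial case analysis considerably more delicate. In particular, one has to rule out spurious candidates such as factors with ``block size'' $d > 1$ on $\N_2$, which do not give new rays since they decompose as convex combinations of the $d = 1$ rays already in the list; the reparameterized case analysis handles this implicitly but one must check it does so completely.
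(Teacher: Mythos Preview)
Your forward direction---showing each listed ray is extreme by exhibiting $2n-2$ independent tight facets---matches the paper's first lemma after the theorem, which carries out exactly this computation case by case for $R^{\{1\},n}_{1,1}$ and $R^n_{k,m}$.

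For the converse, the paper takes a genuinely different route that avoids the plateau case analysis you flag as the obstacle. Rather than reparameterize and classify tight $(2n-2)$-subsystems directly, it argues by \emph{induction on $n$}. The base $\Psi_{1,1}$ is immediate. For the step, given $\h_{n+1}\in\Psi_{1,n}$, its subvector $\s_n$ lies in $\s(\Psi_{1,n-1},[1,n-1])$ and is, by hypothesis, a conic combination over $\mc{U}_n$. The two new entries $s_{0,n},s_{1,n}$ are governed by five boundary inequalities, which the paper rewrites as $s_{1,n}=s_{1,n-1}+e_1$ and $s_{0,n}=s_{0,n-1}+e_1+e_2$ with explicit ranges for $e_1,e_2$ depending only on the last increments of $\s_n$. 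The generators in $\mc{U}_n$ are sorted into four classes according to the triple $L=(s_{1,n-1}-s_{1,n-2},\,s_{1,n-1}-s_{0,n-1},\,s_{0,n-1}-s_{0,n-2})$, and an explicit conic decomposition of $\s_{n+1}$ over $\mc{U}_{n+1}$ is written down term by term, with the slack $(e_1,e_2)$ distributed among the extensions. Each $\ub\in\mc{U}_n$ lifts to one or two members of $\mc{U}_{n+1}$, so no spurious rays can arise.

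Your reparameterization is correct in spirit, but note that $u_{n-2}\ge 0$ and $b_0\ge 0$ are \emph{implied} rather than facet-defining; the actual facet list in your coordinates is $v_{n-2}\ge 0$, $b_0\ge\sum_j(u_j-v_j)$, $u_j\ge v_j$ for each $j$, and the two monotonicity chains. A direct classification along your lines could in principle succeed, but the paper's inductive decomposition is shorter and makes the completeness of the list structurally transparent rather than the outcome of a case enumeration.
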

to be defined next.

In \eqref{r1}, $U^{\{1\},n}_{1,1}$ is the matroid with ground set $\N$
which has submatroid $U_{1,1}$ on $\{1\}$ and loops $2,\cdots,n$. 
Specifically, for any $\mc{A}\subset\mc{N}$,
$U^{\{1\},n}_{1,1}(\mc{A})=|\{1\}\cap\mc{A}|$. Note that $U^{\{1\},n}_{1,1}\in
\fix_{1,n-1}$ and for
$(u_{j_1,j_2})_{(j_1,j_2)\in\N_{1,n-1}}\triangleq\s(U^{\{1\},n}_{1,1},[1,n-1])$, 
\begin{equation}
  \label{u}
  u_{j_1,j_2}=j_1.
\end{equation}
In \eqref{r2}-\eqref{r7}, for $n-1\le m\le 2n-2$ and $\max\{1,m-n+1\}\le k
\le n-1$, $U^n_{k,m}$ denotes an integer-valued polymatroid with ground
set $\N$ which is the factor of the uniform matroid $U_{k,m}$ under
$\phi_{m,n}:\N\rightarrow 2^\mc{M}$ with $\mc{M}=\{1,\cdots,m\}$ defined as follow:
\begin{equation*}
  \phi_{m,n}(i)=
  \begin{cases}
    \{1,\cdots,m-n+1\},&\text{ if } i=1,\\
   \{ i+m-n\}, &\text{ if } i\in\N\setminus\{1\}.
  \end{cases}
\end{equation*}
It can be seen that $U^n_{k,m}(\mc{A})=\min\{k,|\phi_{m,n}(\mc{A})|\}$
for any $\mc{A}\subset\mc{N}$ and $U^n_{k,m}\in \fix_{1,n-1}$. Then for
$(u_{j_1,j_2})_{(j_1,j_2)\in\N_{1,n-1}}\triangleq\s(U^n_{k,m},[1,n-1])$,
\begin{equation}
  \label{u_km}
  u_{j_1,j_2}=\min\{k,(m-n+1)j_1+j_2\}.
\end{equation}
Note that when $m=n-1$,
$\phi_{m,n}(1)=\emptyset$ and $U^n_{k,m}$ is
the matroid which has submatroid $U_{k,n-1}$ on $\{2,\cdots,n\}$ and loop
$1$; when $m=n$, $U^n_{k,m}$ coincides with the uniform matroid $U_{k,n}$.

The set of all polymatroids in \eqref{r1}-\eqref{r7} is denoted by $\mc{U}_n$.
We use $R$ in place of $U$ to denote the ray containing the corresponding
polymatroids, i.e., $R^{\{1\},n}_{1,1}$ is the ray 
containing $U^{\{1\},n}_{1,1}$ and $R^n_{k,m}$ is the ray containing
$U^n_{k,m}$. The set of all rays containing the
polymatroids in $\mc{U}_n$ is denoted by $\mc{R}_n$.

\begin{lemma}
  For $n\ge2$, any ray $R\in\mc{R}_n$ is an extreme ray of $\Psi_{1,n-1}$.
\end{lemma}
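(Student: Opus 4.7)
The plan is, for each distinguished polymatroid $U\in\mc{U}_n$, to verify that $U\in\Psi_{1,n-1}$ and that the facets of $\Psi_{1,n-1}$ on which $U$ lies intersect in a one-dimensional subspace, namely the ray $R$ through $U$. Since $\Psi_{1,n-1}$ is full-dimensional in $S^0_{1,n-1}$ of dimension $2n-1$, a one-dimensional face of $\Psi_{1,n-1}$ must be the intersection of $2n-2$ linearly independent tight instances of \eqref{A6}--\eqref{A9}; conversely, any nonzero ray on which this many independent facet equalities hold is extreme.

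For $U^{\{1\},n}_{1,1}$, the coordinates $u_{j_1,j_2}=j_1$ depend only on $j_1$, so each instance of \eqref{A6}, \eqref{A8}, and \eqref{A9} is satisfied with equality while only \eqref{A7} is strict. Imposing these equalities on a generic $\s\in S^0_{1,n-1}$: tight \eqref{A9} forces the two row sequences $(s_{0,j})_j$ and $(s_{1,j})_j$ to be arithmetic with respective common differences $d_0,d_1$; tight \eqref{A8} then gives $d_0=d_1=:d$; tight \eqref{A6} forces $d=0$; and $s_{0,0}=0$ leaves $s_{1,0}$ as the only free parameter. So the solution space is exactly $R^{\{1\},n}_{1,1}$, which confirms that this ray is extreme.

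For $U^n_{k,m}$, set $\alpha\triangleq m-n+1$ so that by \eqref{u_km} we have $s_{j_1,j_2}=\min\{k,\alpha j_1+j_2\}$. The map $j_2\mapsto s_{j_1,j_2}$ is piecewise linear with one slope-change at the breakpoint $j_2=k-\alpha j_1$: thus \eqref{A9} is tight away from the breakpoint, and \eqref{A8} is tight away from the narrow band in which exactly one of $\alpha j_1+j_2$ and $j_2$ has crossed $k$. My plan is to enumerate the active facets for each $(k,m)$ in \eqref{r2}--\eqref{r7}, recover each row from the tight \eqref{A9} equalities as two affine pieces joined at the breakpoint (controlled by a starting value and two slopes), use the tight \eqref{A8} equalities to glue the two rows together, and finally use the tight saturation facets \eqref{A6} and \eqref{A7} (which always hold because $s_{1,n-1}=s_{1,n-2}=s_{0,n-1}=k$ under these parameter ranges) to pin down the remaining slope, yielding precisely the ray $R^n_{k,m}$.

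The main obstacle will be the bookkeeping across the full range of $(k,m)$, especially the boundary subcases $\alpha=0$ (where element $1$ is a loop and both rows coincide), $\alpha=1$ (where $U^n_{k,m}$ reduces to the uniform matroid $U_{k,n}$), and the endpoint values $k=\max\{1,m-n+1\}$ or $k=n-1$ (where the breakpoint hits the boundary of the domain). In each such subcase, the pattern of tight facets shifts slightly and one has to verify that the rank of the active-facet system is still exactly $2n-2$. The unifying principle, however, is always the same: tight \eqref{A9} constraints determine each row up to an affine piecewise description, tight \eqref{A8} constraints couple the two rows, and the tight boundary facets \eqref{A6}, \eqref{A7} eliminate the last slope parameter, leaving a single scalar degree of freedom that parametrizes the ray $R$.
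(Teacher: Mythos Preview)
Your approach is essentially the paper's: for each $U\in\mc{U}_n$ you exhibit a subcollection of tight facets whose intersection is one-dimensional, hence equal to the ray through $U$. The paper organizes the $U^n_{k,m}$ case by the dichotomy $k=m-n+1$ versus $k>m-n+1$ (your $k=\alpha$ versus $k>\alpha$), which absorbs most of your boundary bookkeeping cleanly; one small correction to your general claim is that \eqref{A6} is \emph{not} tight for $U^n_{n-1,n-1}$ (there $s_{1,n-2}=n-2<n-1=s_{1,n-1}$), so in that instance it is \eqref{A7} together with the full set of \eqref{A8} and \eqref{A9} equalities that pins down the ray---but this is precisely one of the endpoint subcases you already flag.
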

\begin{proof}
To prove this lemma, it suffices to prove that any $R\in\mc{R}_n$ is a
face of $\Psi_{1,n-1}$. Let $\mc{G}(R)=\{G\in\mc{G}_{1,n-1}:R\subset
G\}$. Specifically, we will prove that 
\begin{equation}
\label{eq:01}
  R=\bigcap_{G\in\mc{G}(R)} G,
\end{equation}
where the right hand side above is a face of $\Psi_{1,n-1}$
which by
definition contains $R$.

We now prove that $\bigcap_{G\in\mc{G}(R)} G$, the right hand side of \eqref{eq:01}, is a subset of
$R$. Since $\bigcap_{G\in\mc{G}(R)} G$ is a
face of $\Psi_{1,n-1}$, we only need to prove that it is $1$-dimensional
to conclude that $R$ is an extreme ray of $\Psi_{1,n-1}$. Toward this end,
for a specific $R$, we will consider a suitably chosen subset 
$\mc{G}'(R)$ of $\mc{G}(R)$ such that
$\bigcap_{G\in\mc{G}'(R)} G$ is $1$-dimensional. It then follows that
$\bigcap_{G\in\mc{G}(R)} G$ is also $1$-dimensional because $R\subset\bigcap_{G\in\mc{G}(R)} G\subset\bigcap_{G\in\mc{G}'(R)} G$.

First, we consider $R^{\{1\},n}_{1,1}$. It can be checked that \eqref{u}
implies that
\begin{align*}
  2u_{j_1,j_2+1}& =u_{j_1,j_2}+u_{j_1,j_2+2},\quad j_1=0,1,\text{ and
  } j_2=0,\cdots,n-3,\\
u_{1,n-1}& =u_{1,n-2},\\
u_{1,1}& =u_{1,0}+u_{0,1}.
\end{align*}
Then $R^{\{1\},n}_{1,1}$ is contained in the following three types of
facets with the respective constraints.
\begin{itemize}
\item $G_{[1,n-1]}([(0,2),(j_1,j_2)],\ j_1=0,1,$ and
  $j_2=0,\cdots,n-3$ (cf. \eqref{xiobd}):
  \begin{equation}
    \label{eqf91}
    2s_{j_1,j_2+1} =s_{j_1,j_2}+s_{j_1,j_2+2};
  \end{equation}
\item $G_{[1,n-1]}([(0,1),(0,0)])$ (cf. \eqref{soidv}):
  \begin{equation}
    \label{eqf92}
    s_{1,n-1} =s_{1,n-2};
  \end{equation}
\item $G_{[1,n-1]}([(1,1),(0,0)])$ (cf. \eqref{kobix}):
  \begin{equation}
    \label{eqf93}
    s_{1,1} =s_{1,0}+s_{0,1}.
  \end{equation}
\end{itemize}
Let $\mc{G}'(R^{\{1\},n}_{1,1})$ be the set of all the above facets, and we
now prove that $\bigcap_{G\in\mc{G}'(R^{\{1\},n}_{1,1})} G$ is $1$-dimensional.
By setting $j_1=1$ in \eqref{eqf91} and \eqref{eqf92}, we can show by
induction that $s_{1,n-1}=s_{1,n-2}=\cdots=s_{1,0}$, or
\begin{equation}
\label{eq0la}
  s_{1,j_2}=s_{1,0},\quad j_2=1,\cdots,n-1.
\end{equation}
Since $s_{1,1}=s_{1,0}$, together with \eqref{eqf93}, we have
\begin{equation}
\label{eq0lb}
  s_{0,1}=0.
\end{equation}
Using \eqref{eqf91} for $j_1=0$, \eqref{eq0lb}, and the fact that $s_{0,0}=\h(\emptyset)=0$, 
we can
show by induction that 
\begin{equation}
\label{eq0lc}
  s_{0,j_2}=0,\quad j_2=1,\cdots,n-1.
\end{equation}
Finally, by combining \eqref{eq0la} and \eqref{eq0lc}, we have 
\begin{equation*}
  s_{j_1,j_2}=s_{1,0}j_1,\quad (j_1,j_2)\in\N_{1,n-1},
\end{equation*}
 which implies that the right hand side of \eqref{eq:01} is
1-dimensional and so \eqref{eq:01} is valid for $R=R^{\{1\},n}_{1,1}$.

For $R^n_{k,m}$, we prove \eqref{eq:01} in the same way as we have proved
the case for $R^{\{1\},n}_{1,1}$.
For the convenience of discussion, we will specify the elements of 
$\mc{G}'(R^n_{k,m})$ as we
progress. Let $\N'=\{1,\cdots,n-1\}$. We first consider the facets and the
face (an intersection of two facets) below which contain $R^n_{k,m}$
for particular values of the indices (depending on the values of $k$
and $m$):
\begin{itemize}
\item  $G^n_{j_1,j_2}\triangleq G_{[1,n-1]}([(0,2),(j_1,j_2-1)])$, $ j_1=0,1$ and
 $ j_2=1,\cdots,n-2$ (cf. \eqref{xiobd}):
\begin{equation}
\label{eqg91}
  2s_{j_1,j_2}= s_{j_1,j_2-1}+s_{j_1,j_2+1};
\end{equation}
\item $G^n_{0,n-1}\triangleq G_{[1,n-1]}([(0,1),(0,0)])\cap
G_{[1,n-1]}([(1,1),(0,n-2)])$ (cf. \eqref{soidv},\eqref{kobix}):
\begin{equation}
\label{eqg93}
  s_{0,n-1}=s_{0,n-2};
\end{equation}
\item $G^n_{1,n-1}\triangleq G_{[1,n-1]}([(0,1),(0,0)])$ (cf. \eqref{soidv}):
  \begin{equation}
    \label{eqg92}
    s_{1,n-1} =s_{1,n-2};
  \end{equation}
\item  $G_{[1,n-1]}([(1,0),(0,0)])$ (cf. \eqref{soidv}):
\begin{equation}
  \label{eqg94}
   s_{1,n-1}=s_{0,n-1};
\end{equation}
\item $G_{[1,n-1]}([(1,1),(0,0)])$(cf. \eqref{kobix}):
\begin{equation}
  \label{eqg95}
   s_{1,1}=s_{0,1}+s_{1,0}.
\end{equation}
\end{itemize}

Using \eqref{u_km}, it can be checked that $R^n_{k,m}\subset
\bigcap_{j_1\in\N',j_2\neq k}G_{0,j_2}$. By letting $s_{0,n}=s_{0,n-1}$, we
can combine 
\eqref{eqg91} for $j_1=0$ and \eqref{eqg93} to obtain
\begin{equation*}
  2s_{0,j_2}=s_{0,j_2-1}+s_{0,j_2+1},\quad j_2\in\N',\ j_2\neq k.
\end{equation*}
Then we can readily show that
\begin{equation}
\label{eq0l}
  s_{0,j_2}=s_{0,1}\min\{k,j_2\},\quad j_2\in\N'.
\end{equation}

\begin{itemize}
\item If $k= m-n+1$, by \eqref{u_km},
$R^n_{k,m}\subset\bigcap_{j_2\in\N'}G^n_{1,j_2}$.
By letting $s_{1,n}=s_{1,n-1}$, we can combine 
\eqref{eqg91} for $j_1=1$ and \eqref{eqg92} to obtain
\begin{equation*}
  2s_{1,j_2}=s_{1,j_2-1}+s_{1,j_2+1},\quad j_2\in\N'.
\end{equation*}
Then we can show by induction that
$s_{1,n-1}=s_{1,n-2}=\cdots=s_{1,0}$, or
\begin{equation}
\label{eq1la}
  s_{1,j_2}=s_{1,n-1},\quad j_2\in\N'.
\end{equation}

Consider \eqref{eq0l} for $j_2=n-1$. Since $k\le n-1$, 
\begin{equation}
  \label{add1}
s_{0,n-1}=ks_{0,1}.
\end{equation}
Since $R^n_{k,m}\subset G_{[1,n-1]}([(1,0),(0,0)])$, we have
\eqref{eqg94}. Together with \eqref{add1}, 
\begin{equation}
  \label{add2}
s_{1,n-1}=ks_{0,1}.
\end{equation}
Then by \eqref{add2} and \eqref{eq1la}, 
\begin{equation}
  \label{add3}
s_{1,j_2}=ks_{0,1},\quad j_2\in\N'.
\end{equation}
Combining \eqref{eq0l} and \eqref{add3}, we have
\begin{equation}
  \label{eqla}
  s_{j_1,j_2}=s_{0,1}\min\{k,j_1k+j_2\},\quad(j_1,j_2)\in\mc{M}_{1,n-1}.
\end{equation}

\item If $k> m-n+1$, by \eqref{u_km},
$R^n_{k,m}\subset \bigcap_{j_2\in\N',j_2\neq k+n-m-1}G^n_{1,j_2}$. By letting $s_{1,n}=s_{1,n-1}$, we can combine 
\eqref{eqg91} for $j_1=1$ and \eqref{eqg92} to obtain
\begin{equation*}
  2s_{1,j_2} =s_{1,j_2-1}+s_{1,j_2+1},\quad j_2\in\N',\ j_2\neq
  k+n-m-1,
\end{equation*}
Note that $R^n_{k,m} \subset G_{[1,n-1]}([(1,1),(0,0)])$. Together with
\eqref{eqg95}, we can show that 
\begin{equation}
\label{eq1lb}
  s_{1,j_2}=s_{1,0}+s_{0,1}\min\{k+n-m-1,j_2\},\quad j_2\in\N'.
\end{equation}
Consider \eqref{eq1lb} for $j_2=n-1$. Since $k\le n-1\le m$, we have
\begin{equation}
  \label{bdd1}
s_{1,n-1}=s_{1,0}+s_{0,1}(k+n-m-1).
\end{equation}
Since $R^n_{k,m}\subset G_{[1,n-1]}([(1,0),(0,0)])$, we have
\eqref{eqg94}. Together with \eqref{bdd1}, 
\begin{equation}
  \label{bdd2}
s_{0,n-1}=s_{1,0}+s_{0,1}(k+n-m-1).
\end{equation}
By \eqref{add1} and \eqref{bdd2}, we have $s_{1,0}=s_{0,1}(m-n+1)$,
and so by \eqref{eq1lb},
\begin{equation}
\label{bdd3}
  s_{1,j_2}=s_{0,1}\min\{k,m-n+1+j_2\},\quad j_2\in\N'.
\end{equation}
Combining \eqref{eq0l} and \eqref{bdd3}, we have 
\begin{equation}
  \label{eqlb}
  s_{j_1,j_2}=s_{0,1}(m-n+1)j_1+s_{0,1}\min\{k+(n-m-1)j_1,j_2\}, \quad(j_1,j_2)\in\mc{M}_{1,n-1}.
\end{equation}

\end{itemize}

It can be seen that both \eqref{eqla} and \eqref{eqlb} can be written as
\begin{equation*}
  s_{j_1,j_2}=s_{0,1}\min\{k,(m-n+1)j_1+j_2\}, \quad(j_1,j_2)\in\mc{M}_{1,n-1},
\end{equation*}
which implies that the right hand side of \eqref{eq:01} is
1-dimensional and then \eqref{eq:01} is valid for $R=R^n_{k,m}$.
\end{proof}
\textbf{Remark} Note that upon deleting the loops, $U^{\{1\},n}_{1,1}$, $U^{n}_{k,n-1},k=1,\cdots,n-2$ and $U_{k,n},k=1,\cdots,n-1$ become connected
matroids. By \cite[Theorem 2.1.5]{N78}, the rays containing these matroids
are extreme rays of $\Gamma_n$ and so extreme rays of $\Psi_{1,n-1}$. 
This is an alternative method to prove that rays containing these
matroids are extreme rays of $\Psi_{1,n-1}$. However, this method
cannot handle the remaining polymatroids in Theorem \ref{lem:1,n-1}.

\begin{lemma}
  For $n\ge 2$, $\mc{R}_n$ contains all the extreme rays of $\Psi_{1,n-1}$.
\end{lemma}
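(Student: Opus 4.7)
The plan is to prove the converse to the previous lemma: every extreme ray of $\Psi_{1,n-1}$ is a member of $\mc{R}_n$. The cleanest route, I believe, is to show that $\mc{U}_n$ itself conically generates $\Psi_{1,n-1}$; combined with the previous lemma (each $R\in\mc{R}_n$ is extreme), this forces $\mc{R}_n$ to be the complete list.

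First I would re-parametrize $\Psi_{1,n-1}$ by discrete slopes. For $\h\in\Psi_{1,n-1}$, let $c=s_{1,0}$, $a_j=s_{0,j}-s_{0,j-1}$ and $b_j=s_{1,j}-s_{1,j-1}$ for $j=1,\dots,n-1$. A direct substitution into \eqref{A6}--\eqref{A9} turns the defining inequalities into
\begin{align*}
& a_1\ge a_2\ge\cdots\ge a_{n-1},\quad b_1\ge b_2\ge\cdots\ge b_{n-1}\ge 0,\\
& a_j\ge b_j\ \text{for } j=1,\dots,n-1,\ \text{ and }\ c\ge\sum_{j=1}^{n-1}(a_j-b_j),
\end{align*}
which implicitly forces $a_j\ge 0$ and $c\ge 0$. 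Using \eqref{u} and \eqref{u_km}, one verifies that $U^{\{1\},n}_{1,1}$ corresponds to $(c,a,b)=(1,\mb{0},\mb{0})$, and that $U^n_{k,\,n-1+k-l}$ with $0\le l\le k$, $1\le k\le n-1$, corresponds to $(c,a,b)=\bigl(k-l,\;(\mathbf{1}_{\{j\le k\}})_j,\;(\mathbf{1}_{\{j\le l\}})_j\bigr)$. Enumerating $(k,l)$ recovers exactly the $n(n+1)/2$ polymatroids in \eqref{r1}--\eqref{r7}.

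Next I would explicitly decompose an arbitrary $\h\in\Psi_{1,n-1}$ into a nonnegative combination of these atoms. Set $d=c-\sum_j(a_j-b_j)\ge 0$; subtracting $d\cdot U^{\{1\},n}_{1,1}$ reduces to $c=\sum_j(a_j-b_j)$. Introduce the jumps $\alpha_k=a_k-a_{k+1}$ and $\beta_l=b_l-b_{l+1}$ (with $a_n=b_n=0$), both nonnegative, and observe that the dominance $a_j\ge b_j$ is equivalent to the right-tail inequality $\sum_{k\ge j}\alpha_k\ge\sum_{k\ge j}\beta_k$ for every $j$. A greedy transportation argument---sweeping $l=n-1,n-2,\dots,1$ and drawing $\beta_l$ from the residual $\alpha$-mass supported on indices $\ge l$, then routing whatever mass remains at each $k$ to a ``null'' column $l=0$---produces coefficients $\mu_{k,l}\ge 0$ indexed by $0\le l\le k\le n-1$ with $\sum_{l=0}^{k}\mu_{k,l}=\alpha_k$ for each $k$ and $\sum_{k\ge l}\mu_{k,l}=\beta_l$ for each $l\ge 1$; feasibility at each step follows from the right-tail dominance. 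A short component-wise check in $(c,a,b)$-coordinates then yields
\[
\h \;=\; d\cdot U^{\{1\},n}_{1,1}\;+\;\sum_{0\le l\le k\le n-1}\mu_{k,l}\, U^n_{k,\,n-1+k-l},
\]
where the $c$-coordinate matches automatically since $\sum_{k,l}\mu_{k,l}(k-l)=\sum_k k\alpha_k-\sum_l l\beta_l=\sum_j(a_j-b_j)$.

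Finally, because $\mc{U}_n$ conically generates $\Psi_{1,n-1}$, no extreme ray can lie outside $\mc{R}_n$; combined with the previous lemma this identifies $\mc{R}_n$ as the full set of extreme rays. The one substantive step is the transportation/matching construction producing $\mu_{k,l}$, and even that is routine once the dominance condition has been extracted; the rest is bookkeeping in the slope coordinates.
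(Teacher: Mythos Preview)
Your proof is correct and genuinely different from the paper's. The paper argues by induction on $n$: assuming $\mc{U}_n$ conically generates $\Psi_{1,n-1}$, it shows $\mc{U}_{n+1}$ generates $\Psi_{1,n}$ by analysing how the two new coordinates $s_{0,n},s_{1,n}$ (constrained by \eqref{l1}--\eqref{l5}) can be absorbed, splitting the atoms of $\mc{U}_n$ into four classes $\mc{A},\mc{B},\mc{C},\mc{D}$ according to which of the relevant slack quantities are $0$ or $1$, and then exhibiting an explicit rewriting \eqref{dec}. Your approach is a direct global decomposition: the change to slope coordinates $(c,a_j,b_j)$ turns $\Psi_{1,n-1}$ into a product of a half-line and the cone of pairs of nonincreasing nonnegative sequences with $a_j\ge b_j$, the atoms become staircase indicators, and the conic decomposition reduces to a transportation problem whose feasibility is exactly the right-tail dominance $\sum_{k\ge j}\alpha_k\ge\sum_{k\ge j}\beta_k$. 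Your argument is cleaner and avoids both the induction and the four-class case analysis; the paper's incremental proof, on the other hand, stays closer to the original $(s_{i,j})$ coordinates and makes visible how the extreme-ray set grows with $n$.

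One cosmetic slip: you once write the index range as $0\le l\le k\le n-1$, which would admit $k=0$, but the atoms $U^n_{k,n-1+k-l}$ and the jumps $\alpha_k$ are only defined for $k\ge 1$. Since the transportation argument and the final identity only ever use $1\le k\le n-1$, this is harmless; just tighten the index range when you write it up.
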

\emph{Proof} It can be readily checked that $\mc{R}_2$ contains all
the extreme rays of $\Psi_{1,1}$. So it suffices to prove that if
$\mc{R}_n$ contains all the extreme rays of $\Psi_{1,n-1}$, then
$\mc{R}_{n+1}$ contains all the extreme rays of $\Psi_{1,n}$.
To this end, we will prove that for an arbitrary 
$\h_{n+1}\in\Psi_{1,n}$, it
can be written as a conic combination of the polymatroids in
$\mc{U}_{n+1}$.

Consider $\h_n\in\Psi_{1,n-1}$. Let
$\s_n\triangleq(s_{j_1,j_2})_{(j_1,j_2)\in\N_{1,n-1}}=\s(\h_n,[1,n-1])$. Let
$\h_{n+1}\in\Psi_{1,n}$ such that
$\s(\h_{n+1},[1,n])=\s_{n+1}\triangleq(s_{j_1,j_2})_{(j_1,j_2)\in\N_{1,n}}$,
i.e., $\s_{n+1}$ contains $\s_n$ as a subvector with two additional entries
$s_{0,n}$ and $s_{1,n}$. These two entries satisfy the following five constraints:
\begin{align}
\label{l1}
&s_{1,n}\ge s_{1,n-1},\\
\label{l2}
  &s_{1,n}\ge s_{0,n},\\
\label{l3}
 & s_{1,n-1}+ s_{0,n}\ge s_{0,n-1}+s_{1,n},\\
\label{l4}
 & 2s_{1,n-1}\ge s_{1,n-2}+s_{1,n},\\
\label{l5}
 & 2s_{0,n-1}\ge s_{0,n-2}+s_{0,n},
\end{align}
which can be rewritten as
\begin{align}
\label{t1}
  s_{1,n-1}\le &\ s_{1,n}\le 2s_{1,n-1}-s_{1,n-2},\\
\label{t2}
  s_{1,n}-s_{1,1-n}+s_{0,n-1}\le &\ s_{0,n}\le \min \{s_{1,n},2s_{0,n-1}-s_{0,n-2}\},
\end{align}
or
\begin{align}
\label{e1}
  s_{1,n}=&\ s_{1,n-1}+e_1,\\
\label{e2}
  s_{0,n}=&\ s_{0,n-1}+e_1+e_2,
\end{align}
where $0\le e_1\le s_{1,n-1}-s_{1,n-2}$ and $0\le e_2\le
\min\{s_{1,1-n}-s_{0,n-1},s_{0,n-1}-s_{0,n-2}-e_1\}$ which can be seen
by substituting \eqref{e1} into \eqref{t2}. Then $L\triangleq
(s_{1,n-1}-s_{1,n-2},s_{1,1-n}-s_{0,n-1},s_{0,n-1}-s_{0,n-2})$
bounds the range of auxiliary variables $e_1,e_2$, where the first
component is the upper bound of $e_1$ and the second and third
components together with $e_1$ define the upper bound on $e_2$. If $\s_{n}\in\mc{U}_n$, the entries of $L$ can only be 0 or
1. We classify the members of $\mc{U}_n$ into four classes according
to $L$ as follow.
\begin{enumerate}
\item $L=(0,1,0)$, $\mc{A}=\{U^{\{1\},n}_{1,1}\}$,
\item $L=(1,0,1)$, $\mc{B}=\{U^n_{n-1,n-1}\}$,
\item $L=(0,0,1)$, $\mc{C}=\{U^n_{n-1,m}:n\le m\le 2n-2\}$,
\item $L=(0,0,0)$, $\mc{D}=\mc{U}_n\setminus(\mc{A}\cup\mc{B}\cup\mc{C})$.
\end{enumerate}
These can be verified from \eqref{u} and \eqref{u_km}.

For $\tilde{\s}_{n+1} \in\s(\Psi_{1,n},[1,n])$ with subvector
$\tilde{\s}_{n}\in\s(\Psi_{1,n-1},[1,n-1])$ and two additional entries
$\tilde{s}_{0,n},\tilde{s}_{1,n}$, we write it as $(\tilde{\s}_{n}, \tilde{s}_{0,n},
\tilde{s}_{1,n})$. 
For
\begin{equation*}
  \h_n=a\ub_A+b\ub_B+\sum_{\ub_C\in\mc{C}} c_{\ub_C}\ub_C+\sum_{\ub_D\in\mc{D}} d_{\ub_D}\ub_D,
\end{equation*}
where $\ub_A\in\mc{A},\ub_B\in\mc{B}$, $a, b,
c_{\ub_C},d_{\ub_D}\ge0$, and $s_{1,n}=
s_{1,n-1}+e_1,s_{0,n}=s_{0,n-1}+e_1+e_2$, it can be checked that
\begin{align}
  \s_{n+1}=\ &(a-e_2)(\s(\ub_A,[1,n-1]),0,1)\nonumber\\
&+\left( b-e_1-e_2+\sum_{\ub_C\in\mc{C}} c'_{\ub_C}\right) (\s(\ub_B,[1,n-1]),n-1,n-1) \nonumber\\
&+e_1(\s(\ub_B,[1,n-1]),n,n) \nonumber\\
&+\left(e_2-\sum_{\ub_C\in\mc{C}} c'_{\ub_C}\right)(\s(\ub_A+\ub_B,[1,n-1]),n,n) \nonumber\\
&+\sum_{\ub_C\in\mc{C}} (c_{\ub_C}-c'_{\ub_C})(\s(\ub_C,[1,n-1]),n-1,n-1) \nonumber\\
&+\sum_{\ub_C\in\mc{C}} c'_{\ub_C}(\s(\ub_A+\ub_C,[1,n-1]),n,n) \nonumber\\
&+\sum_{\ub_D\in\mc{D}} d_{\ub_D}(\s(\ub_D,[1,n-1]),k_{\ub_D},k_{\ub_D}),\label{dec}
\end{align}
where $0\le c'_{\ub_C}\le \min\{c_{\ub_C},e_2\}$ and $\sum
c'_{\ub_C}\ge e_1+e_2$. For $\ub_A=U^{\{1\},n}_{1,1}$,
$\ub_B=U^n_{n-1,n-1}$, $\ub_C=U^n_{n-1,m_{\ub_C}}$ and
$\ub_D=U^n_{k_{\ub_D},m_{\ub_D}}$. It can be checked that
\begin{align*}
  (\s(\ub_A ,[1,n-1]),0,1)& =\s(U^{\{1\},n+1}_{1,1},[1,n]),\\
(\s(\ub_B ,[1,n-1]),n-1,n-1)& =\s(U^{n+1}_{n-1,n},[1,n]),\\
(\s(\ub_B ,[1,n-1]),n,n)& =\s(U^{n+1}_{n,n},[1,n]), \\
(\s(\ub_A+\ub_B ,[1,n]),n,n)& =\s(U^{n+1}_{n,n+1},[1,n]),\\
(\s(\ub_C ,[1,n-1]),n-1,n-1)& =\s(U^{n+1}_{n-1,m_{\ub_C}+1},[1,n]),\\ 
(\s(\ub_A+\ub_C ,[1,n]),n,n)& =\s(U^{n+1}_{n,,m_{\ub_C}+1},[1,n]),\\
(\s(\ub_D),k_{\ub_D},m_{\ub_D})& =\s(U^{n+1}_{k_{\ub_D},m_{\ub_D}+1},[1,n]). 
\end{align*}

Then by taking
$\s^{-1}$ on both sides of \eqref{dec}, $\h_{n+1}$ is a conic
combination of polymatroids in
$\mc{U}_{n+1}$. 

By exhausting all $\h_{n}\in\Psi_{1,n-1}$ and possible $e_1,e_2$, we
can write an arbitrary $\h_{n+1}\in\Psi_{1,n}$ as a conic combination of the
polymatroids in $\mc{U}_{n+1}$, which implies that there exist no other
extreme rays of $\Psi_{1,n}$. \hfill\QQQ

\begin{theorem}
\label{prop:fr_ex} Let $\h\in\Gamma_n$ be integer-valued and a
factor of $\g\in\Gamma_m$ under some $\phi$. Then
$\g\in\overline{\Gamma^*_m}$ if and only if $\h\in\overline{\Gamma^*_n}$.
\end{theorem}
\begin{proof}
The ``if'' part is proved in \cite[Theorem 4]{M07b}. For the ``only
if'' part, by the continuity of free expansion, it suffices to prove
that $\h\in\ef$ if $\g\in\Gamma^*_m$. Let $\g$ be the entropy
function of the random vector
$Y_\mc{M}=(Y_j)_{j\in\mc{M}}$. Now define $X_\N=(X_i)_{i\in\N}$ by
$X_i=(Y_j)_{j\in\phi(i)}$ for all $i\in\N$. Then it can be checked that $\h$ is the
entropy function of $X_\N$. 
\end{proof}

\begin{theorem}
  \label{qwekl}
  \begin{equation*}
    \overline{\Psi^*_{1,n-1}}=\Psi_{1,n-1}.
  \end{equation*}
\end{theorem}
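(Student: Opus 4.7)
The plan is to carry out the three-step strategy that established Theorem \ref{oibsa}. The easy inclusion $\overline{\Psi^*_{1,n-1}}\subset\Psi_{1,n-1}$ follows from $\Psi^*_{1,n-1}\subset\Gamma^*_n\cap S_{1,n-1}\subset\Gamma_n\cap S_{1,n-1}=\Psi_{1,n-1}$ together with the closedness of $\Psi_{1,n-1}$, so the real content is the reverse inclusion $\Psi_{1,n-1}\subset\overline{\Psi^*_{1,n-1}}$.

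First, I would use Theorem \ref{bafoi} and the explicit inequality description of $\Psi_{1,n-1}$ already displayed above to list $\mc{G}_{1,n-1}$; the count \eqref{bfkdg} specializes to $3n-3$ facets in a polyhedral cone of dimension $2n-1$. Second, I would determine all the extreme rays of $\Psi_{1,n-1}$, each being the intersection of $2n-2$ facets, and show that every one of them is, up to a positive scalar, a factor of a uniform matroid. The natural family of candidates is
\begin{equation*}
  \h_{a,b,m}(\mc{A})=\min\bigl(m,\,a\,|\mc{A}\cap\N_1|+b\,|\mc{A}\cap\N_2|\bigr),
\end{equation*}
for non-negative integers $a,b,m$ with $1\le m\le a+(n-1)b$. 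Each such $\h_{a,b,m}$ is $[1,n-1]$-symmetric, and a direct computation against Definition \ref{def:fr_ex} shows that its free expansion under the map $\phi$ sending the element of $\N_1$ to a set of $a$ labels and each element of $\N_2$ to a disjoint set of $b$ labels is exactly the uniform matroid $U_{m,\,a+(n-1)b}$. Hence by Proposition \ref{prop:fr_ex} every $\h_{a,b,m}$ is almost entropic.

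Once the extreme rays have been exhibited as such factors, the conclusion is immediate: since $\overline{\Gamma^*_n}$ is a convex cone, any conic combination of almost entropic vectors is almost entropic, giving $\Psi_{1,n-1}\subset\overline{\Gamma^*_n}$. Intersecting with $S_{1,n-1}$ and applying Theorem \ref{lem:psi2} then yields $\Psi_{1,n-1}\subset\overline{\Gamma^*_n}\cap S_{1,n-1}=\overline{\Psi^*_{1,n-1}}$, closing the argument.

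The main obstacle is the combinatorial enumeration of the extreme rays. In the one-partition case the rays $R_{m,n}$ fell out of a single one-variable recursion on $s_k$, but here the rank vector is indexed by a pair $(k_1,k_2)$ with $k_1\in\{0,1\}$ and $0\le k_2\le n-1$, and one must simultaneously juggle within-$\N_2$ submodularity, the cross-block submodularity tying the rows $(s_{0,\cdot})$ and $(s_{1,\cdot})$ together, and the two monotonicity facets. The task is to verify that every choice of $2n-2$ tight facets whose intersection with $\Psi_{1,n-1}$ is one-dimensional yields a ray already generated by some $\h_{a,b,m}$ (possibly with $a=0$ or $b=0$, the specific $(a,b,m)$ being dictated by the unique slack facet). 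This bookkeeping is what does not transfer verbatim from the proof of Theorem \ref{oibsa} and will require the bulk of the work.
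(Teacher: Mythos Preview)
Your plan is correct and follows essentially the same three-step strategy as the paper: identify the extreme rays of $\Psi_{1,n-1}$, show each is (up to scalar) a factor of a uniform matroid and hence almost entropic by Proposition~\ref{prop:fr_ex}, then conclude via convexity and Theorem~\ref{lem:psi2}. Your facet count $3n-3$ and dimension $2n-1$ are correct, and your candidate family $\h_{a,b,m}$ does contain all of the paper's extreme rays, which correspond to $b\in\{0,1\}$ only: the paper's list is $U^{\{1\},n}_{1,1}=\h_{1,0,1}$ together with $U^n_{k,m}=\h_{m-n+1,1,k}$ for $n-1\le m\le 2n-2$ and $\max\{1,m-n+1\}\le k\le n-1$.

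The one substantive difference is in how completeness of the extreme-ray list is established. You propose a direct case analysis over all $\binom{3n-3}{2n-2}$ facet subsets whose intersection is one-dimensional; this is feasible but the bookkeeping grows quickly. The paper instead proceeds by induction on $n$: it first verifies directly that each listed ray is extreme (by exhibiting it as the intersection of a specific collection of facets), and then shows that any $\h_{n+1}\in\Psi_{1,n}$ decomposes as a conic combination of the listed $(n+1)$-rays by writing $\s_{n+1}=(\s_n,s_{0,n},s_{1,n})$, invoking the inductive decomposition of $\s_n$, and absorbing the two new coordinates via explicit auxiliary parameters $e_1,e_2$. The inductive route sidesteps the full facet-intersection enumeration and may be cleaner to execute; you might consider it if the direct enumeration becomes unwieldy.
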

\begin{proof}
As $\overline{\Psi^*_{1,n-1}}\subset\Psi_{1,n-1}$, we only need to
prove anther direction of the inclusion.
  Uniform matroids are representable and so almost entropic.
  According to Theorem \ref{lem:1,n-1}, all extreme rays of $\Psi_{1,n-1}$
  contains polymatroids which are
  factors of some uniform matroids. It follows from Theorem
  \ref{prop:fr_ex} that all of these polymatroids are almost
  entropic. Since $\Psi_{1,n-1}$ is a convex cone,
  $\Psi_{1,n-1}\subset \overline{\Gamma^*_n}$ and
  so $\Psi_{1,n-1}\subset \overline{\Gamma^*_n}\cap
  \fix_{1,n-1}$. By Theorem \ref{lem:psi2}, $\overline{\Psi^*_{1,n-1}}=\overline{\Gamma^*_n}\cap
  \fix_{1,n-1}$ and theorem follows.
\end{proof}

\subsection{Proof of the ``only if'' part}
\emph{1) $p=[n_1,n_2]$ with $n_1,n_2\ge 2$:}

\label{B999}
Consider $\wh\in \hsp^0_4$, where
\begin{equation}
  \label{eq:13}
  \wh(\mc{A})=
  \begin{cases}
    2\quad \text{ if }|\mc{A}|=1,\\
    3\quad \text{ if }|\mc{A}|=2 \text{ and } \mc{A}\neq\{1,2\},\\
    4\quad \text{ if } \mc{A}=\{1,2\} \text{ or }|\mc{A}|\ge 3.
  \end{cases}
\end{equation}
It can be checked that $\wh\in\Gamma_4$ and $\wh\in S_{2,2}$, so
$\wh\in\Psi_{2,2}$. On the other hand, $\wh$ violates the Zhang-Yeung
inequality \cite{ZY98} which implies that
$\wh\notin\overline{\Gamma^*_4}$, and so $\wh\notin\overline{\Psi^*_{2,2}}$.
It follows that 
\begin{equation}
  \label{eq:004}
  \overline{\Psi^*_{2,2}}\subsetneq\Psi_{2,2}.
\end{equation}

\textbf{Remark} The free expansion of $\wh$ is the V\'amos matroid, a well-known
non-representable matroid.

\begin{theorem}
\label{fbad}
  For any $p=[n_1,n_2]\in\mc{P}^*_{n}$ such that $n_1,n_2\ge2$, 
  \begin{equation*}
    \overline{\Psi^*_{n_1,n_2}}\subsetneq \Psi_{n_1,n_2}.
  \end{equation*}
\end{theorem}
\begin{proof}
  Consider $\wh_{n_1,n_2}\in \hsp^0_n$, where
\begin{equation}
  \label{eq:14}
  \wh_{n_1,n_2}(\mc{A})=
  \begin{cases}
    2\quad \text{ if }|\mc{A}|=1,\\
    3\quad \text{ if }|\mc{A}|=2\text{ and } \mc{A}\not\subset\N_1,\\
    4\quad \text{ if } |\mc{A}|=2\text{ and } \mc{A}\subset\N_1,\\
    4\quad \text{ if } |\mc{A}|\ge 3.
  \end{cases}
\end{equation}
It can be checked that $\wh_{n_1,n_2}\in\Gamma_n$ and $\wh\in \fix_{n_1,n_2}$, so
$\wh\in\Psi_{n_1,n_2}$. We claim that
$\wh_{n_1,n_2}\notin\overline{\Gamma^*_n}$. Assume otherwise. Let
$\N'_i\subset\N_i$, $i=1,2$ such that $|\N'_i|=2$. Let $\wh'$ be a
polymatroid with ground set $\N'_1\cup\N'_2$ such that for any $\mc{A}\subset \N'_1\cup\N'_2$,
\begin{equation*}
 \wh'(\mc{A})=\wh_{n_1,n_2}(\mc{A}).
\end{equation*}
Then
\begin{equation*}
  \wh'(\mc{A})=
  \begin{cases}
    2\quad \text{ if }|\mc{A}|=1,\\
    3\quad \text{ if }|\mc{A}|=2 \text{ and } \mc{A}\neq \N'_1\\
    4\quad \text{ if } \mc{A}=\N'_1\text{ or }|\mc{A}|\ge 3.
  \end{cases}
\end{equation*}
Since $\wh'$ is the restriction of $\wh_{n_1,n_2}$ on $\N_1\cup\N_2$, 
$\wh'$ is almost entropic. On the other hand, it is seen that 
$\wh'$ is not almost entropic because it violates the Zhang-Yeung
inequality (cf.\eqref{eq:13}). This leads to a contradiction.
 Therefore
$\wh_{n_1,n_2}\notin\overline{\Gamma^*_n}$ and so
$\wh_{n_1,n_2}\notin\overline{\Psi^*_{n_1,n_2}}$. Hence, $\overline{\Psi^*_{n_1,n_2}}\subsetneq \Psi_{n_1,n_2}$. 
\end{proof}

\emph{2) Multi-partition entropy functions:}

Now we consider the cases $p\in\mc{P}_{n}$ with more
than two partitions.
\begin{lemma}
\label{sdbpo}
  For $p_1,p_2\in\mc{P}_n$ such that
$p_1\le p_2$, $\fix_{p_2}\subset \fix_{p_1}$. Furthermore, if $p_1<p_2$, $\fix_{p_2}\subsetneq \fix_{p_1}$.
\end{lemma}
\begin{proof}
Let $p_1=\{\N^{(1)}_1,\cdots,\N^{(1)}_{t_1}\}$ and $p_2=\{\N^{(2)}_1,\cdots,\N^{(2)}_{t_2}\}$,
 where $p_1\le p_2$. Note that for any $\mc{A}\subset\N$ and any
 $i=1,\cdots,t_2$,
 $\bs{\lambda}_{\mc{A},p_2}(i)=|\mc{A}\cap\N^{(2)}_i|=|\mc{A}\cap(\cup_{j\in\mc{J}_i}\N^{(1)}_j)|=\sum_{j\in\mc{J}_i}|\mc{A}\cap
 \N^{(1)}_j|=\sum_{j\in\mc{J}_i}\bs{\lambda}_{\mc{A},p_1}(j)$.
That is, each entry of $\bs{\lambda}_{\mc{A},p_2}$ is the
summation of some entries of $\bs{\lambda}_{\mc{A},p_1}$. Hence, for any
$\mc{A},\mc{B}\subset\N$,
$\bs{\lambda}_{\mc{A},p_2}=\bs{\lambda}_{\mc{B},p_2}$ if
 $\bs{\lambda}_{\mc{A},p_1}=\bs{\lambda}_{\mc{B},p_1}$. Then for $\h\in
 \fix_{p_2}$, for any $\mc{A},\mc{B}\subset\N$,
 $\h(\mc{A})=\h(\mc{B})$ if
 $\bs{\lambda}_{\mc{A},p_2}=\bs{\lambda}_{\mc{B},p_2}$. It follows that $\h(\mc{A})=\h(\mc{B})$ if
 $\bs{\lambda}_{\mc{A},p_1}=\bs{\lambda}_{\mc{B},p_1}$. Hence, $\h\in \fix_{p_1}$.

For $p_1<p_2$, there exists $i\in\{1,\cdots,t_2\}$ with
$\N^{(2)}_i=\cup_{j\in\mc{I}_i}\N^{(1)}_j$ such that $|\mc{J}_i|\ge 2$. Let $\mc{A}=\{l_1\},\mc{B}=\{l_2\}$ with
$l_1\in\N^{(1)}_{j_1}$ and $l_2\in\N^{(1)}_{j_2}$ such that
$j_1,j_2\in\mc{J}_i$. Let $\h\in \fix_{p_1}$ be such that $\h(\mc{A})\neq
\h(\mc{B})$. Then $\h\notin \fix_ {p_2}$.
\end{proof}

\begin{theorem}
\label{qpads}
  For $n\ge4,t\ge3$ and any $t$-partition $p\in\mc{P}_{n}$, $\overline{\Psi^*_p}\subsetneq\Psi_p$.  
\end{theorem}
\begin{proof}
   When $n\ge4,t\ge3$, for any $t$-partition $p\in\mc{P}_{n}$, there
exists a $2$-partition $p'\in\mc{P}_{n}$ such that $p\le p'$ and the cardinality of each
block of
$p'$ is at least 2.  Then by Lemma \ref{sdbpo}, $\fix_{p}\supset
\fix_{p'}$ which implies that $\Psi_p\supset\Psi_{p'}$.  

Now by Theorem
\ref{fbad}, $\overline{\Psi^*_{p'}}\subsetneq\Psi_{p'}$, i.e., there
exists $\h\in\Psi_{p'}$ but $\h\notin
\overline{\Psi^*_{p'}}$. Consider
$\h\neq\overline{\Psi^*_{p'}}=\overline{\Gamma^*_n\cap
  \fix_{p'}}=\overline{\Gamma^*_p}\cap \fix_{p'}$ (cf. Theorem \ref{lem:psi2}). Since
$\h\in\Psi_{p'}=\Gamma_n\cap \fix_{p'}$ implies $\h\in \fix_{p'}$, we see that
$\h\notin \overline{\Gamma^*_n}$ which in turn implies that $\h\notin
\overline{\Gamma^*_n}\cap \fix_{p}=\overline{\Psi^*_p}$. On the other
hand, $\h\in\Psi_{p'}\subset\Psi_p$. Therefore
$\h\in\Psi_p\setminus \overline{\Psi^*_p}$ or $\overline{\Psi^*_{p}}\subsetneq\Psi_{p}$.
\end{proof}

\section{Discussion}
\label{baddf}
\subsection{Applications to secret-sharing}

In this section, we discuss the application of our results to secret sharing. Consider
$p=[1,n-1]$. A \emph{secret-sharing} problem involves a dealer who has a secret, indexed by $\N_1=\{1\}$,
a set of $n-1$ parties, indexed by $\N_2=\{2,\cdots,n\}$, and
a collection $\mf{A}$ of sets of parties, i.e., $\mf{A}\subset 2^{\N_2}$,
called the access structure. Let random variable $X_1$ be the secret and $X_i,i\in\N_2$ be
the share distributed to party $i$.
For a \emph{perfect} secret-sharing problem, any authorized set $\mc{A}\in\mf{A}$ of parities can
reconstruct the secret, i.e.,
\begin{equation*}
H(X_1,X_{\mc{A}})=H(X_{\mc{A}}) \quad (\text{correctness criteria})
\end{equation*}
 and any unauthorized set $\mc{A}\notin\mf{A}$ of parities cannot
 reconstruct any information, i.e.,
 \begin{equation*}
 H(X_1,X_{\mc{A}})=H(X_1)+H(X_{\mc{A}}) \quad (\text{privacy criteria} ).
 \end{equation*}
Note that $\mf{A}$ is monotone: if $\mc{A}\subset\mc{B}$ and
$\mc{A}\in\mf{A}$, then $\mc{B}\in\mf{A}$. We say the set of random
variables $X_\N$ is a secret-sharing scheme realizing $\mf{A}$ if it satisfies both the correctness criteria and the privacy
criteria. For a secret-sharing scheme $X_\N$ realizing $\mf{A}$, let
$\h\in\hsp_n$ be its entropy function. The information ratio of $X_\N$ is defined
by 
\begin{equation*}
  \rho_{\mf{A}} (\h)=\frac{\max_{i\in\N_2}\h(\{i\})}{\h(\{1\})}.
\end{equation*}

Our goal is to determine the optimal information ratio of $\mf{A}$, i.e., the infimum
of $\rho_{\mf{A}}$ over all possible scheme
$X_{\N}$ realizing $\mf{A}$. In other words, we need to solve the following
optimization problem:
\begin{equation}
\label{66}
  \rho^{\mr opt}_{\mf{A}}\triangleq\inf_{\h\in\Gamma^*_n\cap C({\mf{A}})} \rho_{\mf{A}} (\h)=\inf_{\h\in\Gamma^*_n\cap C({\mf{A}})} \frac{\max_{i\in\N_2}\h(\{i\})}{\h(\{1\})},
\end{equation}
where 
\begin{align*}
  C({\mf{A}})=\{\h\in\hsp_n:\ &\h(\{1\}\cup\mc{A})=\h(\mc{A}),\quad\mc{A}\in\mf{A}\\
&\h(\{1\}\cup\mc{A})=\h(\{1\})+\h(\mc{A}),\quad\mc{A}\notin\mf{A}\}.
\end{align*}

In Shamir's threshold secret-sharing with threshold $1\le t\le
n-1$, the  access structure is
\begin{equation*}
  \mf{A}_t=\{\mc{A}\subset\N_2: |\mc{A}|\ge t\}.
\end{equation*}
It is also called the $t$-threshold secret-sharing.
Then
\begin{equation*}
  \rho^{\mr opt}_{\mf{A}_t}=\inf_{\h\in\Gamma^*_n\cap C(\mf{A}_t)} \frac{\max_{i\in\N_2} \h(\{i\})}{\h(\{1\})}.
\end{equation*}

\begin{lemma}
\label{l1}
  \begin{equation*}
  \rho^{\mr opt}_{\mf{A}_t}=\inf_{\h\in\Psi^*_{1,n-1}\cap C(\mf{A}_t)} \frac{\max_{i\in\N_2} \h(\{i\})}{\h(\{1\})}.
\end{equation*}
\end{lemma}
\begin{proof}
Consider the permutation group $\Sigma_{1,n-1}=\{\sigma\in\Sigma_n: \sigma(1)=1, \sigma(i)\in\N_2, i\in\N_2\}$.
Note that for any $\sigma\in\Sigma_{1,n-1}$,
$\sigma(\Psi^*_{1,n-1}\cap C(\mf{A}_t))=\Psi^*_{1,n-1}\cap C(\mf{A}_t)$
and $\frac{\max_{i\in\N_2} \sigma(\h)(\{i\})}{\sigma(\h)(\{1\})}=\frac{\max_{i\in\N_2} \h(\{i\})}{\h(\{1\})}$.
Suppose the sequence $\h^{(k)}\in\Gamma^*_n\cap C(\mf{A}_t)$
achieves the infimum. Then $\sigma(\h^{(k)})\in\Gamma^*_n\cap
C(\mf{A}_t)$ also achieves the infimum. 
By \cite[Corollary
15.4]{Y08},  $\sum_{\sigma\in\Sigma_{1,n-1}}\sigma(\h^{(k)})\in \Gamma^*_n\cap C(\mf{A}_t)$.
Now for $\h_j\in\hsp_n, j=1,\cdots,t$, if $\frac{\max_{i\in\N_2}
  \h_j(\{i\})}{\h_j(\{1\})}=c$ for all $j$, then it can be seen that $\frac{\sum^t_{j=1}\max_{i\in\N_2} \h_j(\{i\})}{\sum^t_{j=1}\h_j(\{1\})}=c$.
Therefore $\sum_{\sigma\in\Sigma_{1,n-1}}\sigma(\h^{(k)})$ also achieves the infimum. Note that $\sum_{\sigma\in\Sigma_{1,n-1}}\sigma(\h^{(k)})\in \fix_{1,n-1}$.
Hence the optimization problems over $\Gamma^*_n \cap C(\mf{A}(t))$ and $\Gamma^*_n\cap \fix_{1,n-1}\cap C(\mf{A}_t)=\Psi^*_{1,n-1}\cap(\mf{A}_t)$ give the same value. The lemma is proved.
\end{proof}

By Lemma \ref{l1}, the computation of the optimal information ratio can be confined to $\Psi^*_{1,n-1}\cap C(\mf{A}_t)$. Let
\begin{align}
    \fix(\mf{A}_t)=C(\mf{A}_t)\cap \fix_{1,n-1}=\{\h\in\hsp_n: \ & s_{1,k}=s_{0,k}, \quad\text{ if } k\ge t;\label{a2}\\
   &s_{1,k}=s_{1,0}+s_{0,k}, \quad \text{ if }  k< t\}.\label{a3}
\end{align}
Then 
\begin{equation*}
  \rho^{\mr opt}_{\mf{A}_t}=\inf_{\h\in\Psi^*_{1,n-1}\cap\fix(\mf{A}_t)}\frac{s_{0,1}}{s_{1,0}}.
\end{equation*}

\begin{lemma}
\label{l10}
  \begin{equation*}
    \inf_{\h\in\Psi_{1,n-1}\cap \fix(\mf{A}_t)} \frac{s_{0,1}}{s_{1,0}}=\min_{\h\in\Psi_{1,n-1}\cap \fix(\mf{A}_t)} \frac{s_{0,1}}{s_{1,0}}=1.
  \end{equation*}
This lemma asserts that the infimum of the problem can be achieved by any non-origin $\h\in R_{t,n}$,
i.e., the ray containing the uniform matroid $U_{t,n}$.
\end{lemma}
\begin{proof}
By \eqref{A6}-\eqref{A9} and \eqref{a2}-\eqref{a3}, it can be seen
that $\Psi_{1,n-1}\cap \fix(\mf{A}_t)$ is a face of
$\Psi_{1,n-1}$. Checking \eqref{r1}-\eqref{r7}, it follows that
$\Psi_{1,n-1}\cap \fix(\mf{A}_t)$ is the convex hull of the rays
$R_{t,n}$ and $R^{n}_{i,n-1},i=1,\cdots,n-1$, i.e., the rays 
containing the polymatroids $U_{t,n}$ and $U^{n}_{i,n-1},i=1,\cdots,n-1$.
Note that $\Psi_{1,n-1}\cap \fix(\mf{A}_t)$ is a convex cone. For a ray $R\subset \Psi_{1,n-1}\cap \fix(\mf{A}_t)$, for any non-origin $\h\in R$, $s_{0,1}/s_{1,0}$ has the same value. 
Furthermore, for $\h_1,\h_2\in \hsp_n$ and $\h_3=\h_1+\h_2$, if
$s^{(1)}_{0,1}/s^{(1)}_{1,0}\le s^{(2)}_{0,1}/s^{(2)}_{1,0}$, then
$s^{(1)}_{0,1}/s^{(1)}_{1,0}\le s^{(3)}_{0,1}/s^{(3)}_{1,0}\le
s^{(2)}_{0,1}/s^{(2)}_{1,0}$. Hence the infimum of
$s_{0,1}/s_{1,0}$ must be on the extreme rays $R_{t,n}$ or 
$R^{n}_{i,n-1}$ for some $i\le i\le n-1$. By checking that $s_{0,1}/s_{1,0}=1$
for any non-origin $\h\in
R_{t,n}$, and $s_{0,1}/s_{1,0}$ goes to infinity for any non-origin $\h\in
R^n_{i,n-1},i=1,\cdots,n-1$, we
conclude that 
\begin{equation*}
    \inf_{\h\in\Psi_{1,n-1}\cap \fix(\mf{A}_t)} \frac{s_{0,1}}{s_{1,0}}=\min_{\h\in\Psi_{1,n-1}\cap \fix(\mf{A}_t)} \frac{s_{0,1}}{s_{1,0}}=1,
  \end{equation*}
and the minimum of the problem can be achieved by any non-origin $\h\in R_{t,n}$.
\end{proof}
\begin{theorem}
\label{t1}
  \begin{equation*}
 \rho^{\mr opt}_{\mf{A}_t}=\min_{\h\in\Psi^*_{1,n-1}\cap C(\mf{A}_t)} \frac{s_{1,0}}{s_{0,1}}=\min_{\h\in\Psi_{1,n-1}\cap C(\mf{A}_t)} \frac{s_{1,0}}{s_{0,1}}=1.
\end{equation*}
\end{theorem}
\begin{proof}
  Since $\Psi^*_{1,n-1}\subset\Psi_{1,n-1}$, $ \rho^{\mr opt}_{\mf{A}_t}\ge
  \min_{\h\in\Psi_{1,n-1}\cap C(\mf{A}_t)}
  \frac{s_{1,0}}{s_{0,1}}$. By Lemma \ref{l10}, we know that the
  optimal value of $\min_{\h\in\Psi_{1,n-1}\cap C(\mf{A}_t)}
  \frac{s_{1,0}}{s_{0,1}}$ can be achieved by any non-origin $\h\in
  R_{t,n}$. When $t=1$, $R_{1,n}\subset \Psi^*_n$. When $2\le t\le
  n-1$, There exists non-origin $\h\in R_{t,n}$ such that $\h\in \Psi^*_n$ \cite{M94a}. 
Hence,
  \begin{equation*}
  \rho^{\mr opt}_{\mf{A}_t}=\min_{\h\in\Psi_{1,n-1}\cap C(\mf{A}_t)} \frac{s_{1,0}}{s_{0,1}}=1.
\end{equation*}
The theorem is proved.\hfill
\end{proof}

The result $\rho^{\mr
  opt}_{\mf{A}_t}=1$ in Theorem \ref{t1} first appeared in Shamir's seminal
paper\cite{S79}. Here, we recover this result by meanings of
a systematic approach that makes use of the fundamental structure of
the problem.
With our approach, the above result can be generalized to the \emph{non-perfect}
secret-sharing problem. For this problem, in addition to the authorized sets of parties that
can recover the secret and the unauthorized sets of parties that know
nothing about the secret, there are some sets $\mc{A}\subset\N_2$
that know partial information about the secret, specifically,
$H(X_1,X_{\mc{A}})=H(X_1)+aH(X_{\mc{A}})$ for some $0<a<1$. Now for the
uniform secret-sharing problem\cite{IKS13}\cite{FHKP14}, the amount
information known by a set of parties depends only on the cardinality
of the set. To determine the optimal ratio of the problem,
we replace
$C(\mf{A}_t)$ in the $t$-threshold secret-sharing schemes with the
following region,
\begin{align*}
  C(\mb{a})=\{\h\in\hsp:\h(\{1\}\cup\mc{A})=\h(\{i\})+a_{|\mc{A}|}\h(\mc{A}),
  \mc{A}\subset\N_2\},
\end{align*}
where $\mb{a}\triangleq(a_0,\cdots,a_n-1)$ with $0\le
a_{n-1}\le\cdots\le a_0=0$. Note that $t$-threshold secret-sharing
is a special case of uniform secret-sharing with $a_i=1$
for $i\ge t$ and $a_i=0$ otherwise. The optimal information ratio for
uniform secret-sharing can be computed similarly as we do for
the case of $t$-threshold secret-sharing in Theorem \ref{t1}. The main result of
this paper, i.e.,  $\Psi_{1,n-1}=\overline{\Psi^*_{1,n-1}}$ asserts
that for the purpose of computing the optimal information ratio for
uniform secret-sharing,
Shannon-type information inequalities suffice.

\subsection{Further research}

We have proved in Theorem \ref{bfdp}, the main theorem, that $\overline{\Psi^*_p}=\Psi_p$ 
if and only if $p=[n]$ or $p=[1,n-1]$ for $p\in\mc{P}_n$, $n\ge 4$, i.e.,
$\overline{\Psi^*_p}$
is completely characterized by
Shannon-type information inequalities if and only if $p$ is the
$1$-partition or a $2$-partition with one of its blocks being a
singleton.
For those $p\in\mc{P}_n$ such
that $\overline{\Psi^*_p}=\Psi_p$, the characterization of 
$\overline{\Psi^*_p}$ is complete. However, further work is needed to
characterize 
$\Psi^*_p$ for those $p\in\mc{P}_n$ such that
$\overline{\Psi^*_p}\subsetneq\Psi_p$. For example, Theorem \ref{bfdp}
asserts that $\N=\{1,2,3,4\}$ together with
$p=\{\{1,2\},\{3,4\}\}$ gives the smallest example for which
$\overline{\Psi^*_p}\subsetneq\Psi_p$. Here, both $\overline{\Psi^*_p}$
and $\overline{\Gamma^*_4}$, where
$\overline{\Psi^*_p}\subset\overline{\Gamma^*_4}$, cannot be
completely characterized. Nevertheless, characterizing $\overline{\Psi^*_p}$ can be
regarded as an intermediate step toward characterizing $\overline{\Gamma^*_4}$.
The characterizations of $\overline{\Psi^*_p}$ may also be
useful for tackling other information theory problems with symmetrical
structures\cite{TCS13,AW15}. 

In the definition of $\Psi^*_p=\Gamma^*_n\cap \fix_p$ and
$\Psi_p=\Gamma_n\cap \fix_p$, the constraint $\fix_p$ is the fixed set
of a group action of a subgroup $\Sigma_p$ of symmetric subgroup
$\Sigma_n$ induced by the partition $p$. For further research, we can also
define for
any subgroup $\Sigma$ of $\Sigma_n$, the corresponding group action and its fixed set $\fix_\Sigma$, and then study
whether $\overline{\Psi^*_\Sigma}=\Psi_\Sigma$, where $\Psi^*_\Sigma=\Gamma^*_n\cap \fix_\Sigma$ and
$\Psi_\Sigma=\Gamma_n\cap \fix_\Sigma$.

For the entropy function $\h\in\Gamma^*_n$ of a secret-sharing scheme, according to
\cite[Propositions 2.1-2.3]{Cm97},
by the operation $f(\mc{A})\triangleq\h(\mc{A})/\h(\{1\})$ for any $\mc{A}\subset\N$, one
may absorb the random variable representing the secret to obtain a
polymatroid $f$ with ground set $\N_2$, and apply
Theorem 1 to the $[1,n-1]$-bipartite random variable, e.g. \cite{MBjr08}. 
In general, though $\Psi_p\supsetneq\overline{\Psi^*_p}$, when $p=[n_1,\cdots,n_t]$
with $n_1=1$, $t\ge3$ and $n\ge4$, the characterization of
$\overline{\Psi^*_p}$, which is easier than the characterization of
$\overline{\Gamma^*_n}$, would be useful for obtaining a tighter bound on the
information ratio for
an arbitrary multipartite secret-sharing\cite{PS00,FFP07}. 

\section*{Appendix}
\subsection*{A. Proof of Theorem \ref{bafoi}}

\begin{lemma}
\label{lfdbp}
  For $E_1,E_2\in\mc{E}_n$, $E_1\cap \fix_p=E_2\cap \fix_p$ if
  they are $p$-equivalent.
\end{lemma}
\begin{proof}
If $E_1$ and $E_2$ are $p$-equivalent, there exists $\sigma\in
\Sigma_p$ such that $E_2=\sigma(E_1)$. Then $E_2\cap 
\fix_p=\sigma(E_1)\cap \fix_p=\sigma(E_1)\cap\sigma(\fix_p)=\sigma(E_1\cap
\fix_p)=E_1\cap \fix_p$.
\end{proof}
It can be seen from Lemma \ref{vysid} that the converse of Lemma
\ref{lfdbp} is also true.
\begin{lemma}
  \label{vysid}
For $E_1,E_2\in\mc{E}_n$, if they
  are not $p$-equivalent, then neither $E_1\cap \fix_p\subset E_2\cap
  \fix_p$ nor $E_2\cap \fix_p\subset E_1\cap \fix_p$.
\end{lemma}
\begin{proof}
 See Appendix B.
\end{proof}

For $A\subset\hsp_n$, let $\relbd{A}\triangleq
  \overline{A}\setminus\ri{A}$ be the \emph{relative boundary} of $A$.
\begin{lemma}
\label{josd}
 For any facet $G\in\mc{G}_p$ of $\Psi_p$, there exists a facet $E\in\mc{E}_n$ of
  $\Gamma_n$ such that $G\subset E$.
\end{lemma}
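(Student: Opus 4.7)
The plan is to pass from the facet $G$ of $\Psi_p$ to a facet $E$ of $\Gamma_n$ containing it, by picking a relative interior point of $G$ and identifying which defining inequality of $\Gamma_n$ becomes tight there. Recall that $\Psi_p = \Gamma_n \cap S_p$ is a polyhedral cone sitting inside $S^0_p \subset \hsp^0_n = \aff{\Gamma_n}$, and intuitively its facets should come from the elemental inequalities of $\Gamma_n$ restricted to $S_p$.

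First I would pick some $\h \in \ri{G}$. Since $G$ is a facet, hence a proper face of $\Psi_p$, we have $\h \in \relbd{\Psi_p}$. The next step is to argue that $\h \in \relbd{\Gamma_n}$ as well: if instead $\h \in \ri{\Gamma_n}$, then $\ball{\h}{\epsilon} \cap \hsp^0_n \subset \Gamma_n$ for some $\epsilon > 0$, whence $\ball{\h}{\epsilon} \cap S^0_p \subset \Gamma_n \cap S_p = \Psi_p$; this open set in $S^0_p$ would force $\aff{\Psi_p} = S^0_p$ and $\h \in \ri{\Psi_p}$, contradicting $\h \in \relbd{\Psi_p}$. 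By Proposition \ref{coidl}, some facet $E \in \mc{E}_n$ of $\Gamma_n$ contains $\h$. Now write $E = \Gamma_n \cap P$ for a supporting hyperplane $P$ of $\Gamma_n$; since $\Psi_p \subset \Gamma_n$ lies in one closed halfspace of $P$ while $P \cap \Psi_p \ni \h$, $P$ also supports $\Psi_p$, so $E \cap \Psi_p = P \cap \Psi_p$ is itself a face of $\Psi_p$. Because each point of a polyhedral cone lies in the relative interior of a unique face, $G$ is the smallest face of $\Psi_p$ containing $\h$, and therefore $G \subseteq E \cap \Psi_p \subseteq E$.

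The main obstacle is the subtlety in the second step, where the implication ``$\h \in \ri{\Gamma_n} \Rightarrow \h \in \ri{\Psi_p}$'' needs justification: one must check that an open neighborhood of $\h$ in $S^0_p$ contained in $\Psi_p$ actually certifies relative interiority, rather than $\h$ possibly lying in a lower-dimensional stratum of $\Psi_p$. This is cleared by observing that $\ball{\h}{\epsilon} \cap S^0_p \subset \Psi_p \subset \aff{\Psi_p}$ forces $\aff{\Psi_p} = S^0_p$; alternatively one may exhibit a $p$-symmetric polymatroid in the interior of $\Gamma_n$ up front (for instance $\h(\mc{A}) = f(|\mc{A}|)$ with $f$ strictly concave, strictly increasing, and $f(0) = 0$), which shows $\dim \Psi_p = \dim S^0_p$ and in particular $\aff{\Psi_p} = S^0_p$. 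Once this dimensional issue is out of the way, the remainder is a routine application of standard polyhedral face theory.
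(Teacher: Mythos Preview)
Your argument is correct and follows essentially the same route as the paper's proof: both establish that points of $\relbd{\Psi_p}$ lie in $\relbd{\Gamma_n}$ via the same neighborhood-in-$S^0_p$ contradiction, then reach a facet $E$ of $\Gamma_n$ through a supporting-hyperplane argument. The only cosmetic difference is in the final step: you pick a single $\h\in\ri{G}$ and invoke the minimal-face property to get $G\subseteq E\cap\Psi_p\subseteq E$, whereas the paper shows each $G\cap E$ is a face of $G$ and runs a covering/contradiction argument to find one with $G\cap E=G$.
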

\begin{proof}
We claim that $\relbd{\Psi_p}\subset \relbd{\Gamma_n}$. 
Assume
the contrary. Then there exists $\h\in \relbd{\Psi_p}$ but $\h\in
\ri{\Gamma_n}$ because $\Psi_p\subset\Gamma_n$. Since $\h\in
\ri{\Gamma_n}$, there exists $\epsilon>0$, such that
$\ball{\h}{\epsilon}\cap \fix^0_p\subset \ball{\h}{\epsilon}\cap
\hsp^0_n\subset \ri{\Gamma_n}\subset\Gamma_n$. 
Then $\ball{\h}{\epsilon}\cap \fix^0_p\subset\Gamma_n\cap \fix^0_p=\Psi_p$.
On the other hand, since $\h\in \relbd{\Psi_p}$,
$\ball{\h}{\epsilon}\cap \fix^0_p\not\subset\Psi_p$, a contradiction.

Note that
$\relbd{\Gamma_n}=\cup_{E\in\mc{E}_n}E$ and
$\relbd{\Psi_p}=\cup_{G\in\mc{G}_p}G$. As $\relbd{\Psi_p}\subset
\relbd{\Gamma_n}$, $G\subset
\relbd{\Gamma_n}$ for any $G\in\mc{G}_p$. 
For a fixed $G\in\mc{G}_p$, we now prove that for any $E\in\mc{E}_n$,
$G\cap E$ is a face of $G$. Let $P$ be the supporting hyperplane of
$\Gamma_n$ such that $E=\Gamma_n\cap P$. As the origin $O\in G\cap P$ and
$P^+\supset\Gamma_n\supset G$, $P$ is also a supporting
hyperplane of $G$. It follows that $G\cap P$ is a face of $G$. Since
$G\subset\Gamma_n$, $G\cap P=(G\cap\Gamma_n)\cap P=G\cap(\Gamma_n\cap
P)=G\cap E$. Therefore $G\cap E$ is a face of $G$. 

Finally, we prove that for any $G\in\mc{G}_p$, there must exist $E\in\mc{E}_n$ such that $G\cap
E=G$. Assume the contrary. Then for all $E\in\mc{E}_n$, $G\cap
E\subsetneq G$. Since we have shown that $G\cap E$ is a face of
$G$, it is a proper face of $G$ and $G\cap E\subset\relbd{G}$. It
follows that for any $E\in\mc{E}_n$ and
for any 
$\h\in\ri{G}$,  $\h\not\in G\cap E
$. Therefore $\h\not\in \cup_{E\in\mc{E}_n}(G\cap E)= G\cap
(\cup_{E\in\mc{E}_n}E)=G\cap\relbd{\Gamma_n}=G$, a contradiction. 
Hence there must exist $E$ such that $G\cap
E=G$ or $G\subset E$. 
\end{proof}

\begin{lemma}
\label{hd9db}
  For any $E\in\mc{E}_n$, $E\cap \fix_p$ is a facet of
      $\Psi_p$,
\end{lemma}
\begin{proof}
To prove $E\cap \fix_p$ is a facet of $\Psi_p$, we first prove that it is
a face of $\Psi_p$.  As $E$ is a face of $\Gamma_n$, there
exists a supporting hyperplane $P$ of $\Gamma_n$ such that
$E=\Gamma_n\cap P$. So
$E\cap \fix_p=\Gamma_n\cap P\cap \fix_p=\Psi_p\cap
P$. Since $P^+\supset\Psi_p$, $P$ is also a
supporting hyperplane of $\Psi_p$. Then $E\cap \fix_p$ is a
face of $\Psi_p$. We now prove that the face $E\cap \fix_p$ of $\Psi_p$ is indeed a facet. 
Assume $E\cap \fix_p$ is not a facet of $\Psi_p$, i.e., there exists a facet
$G$ of $\Psi_p$ such that $G\supsetneq(E\cap \fix_p)$. By Lemma~\ref{josd}, there exists a facet $E'$ of $\Gamma_n$ such that
$G\subset E'$. As $G\subset \fix_p$, $G\subset E'\cap \fix_p$ which implies
that $E\cap \fix_p\subset E'\cap \fix_p$. By Lemma \ref{vysid}, $E$ and $E'$
are $p$-equivalent. Hence by Lemma \ref{lfdbp}, $E\cap \fix_p=
E'\cap \fix_p$. It follows that
$G=E\cap \fix_p$ which contradicts the fact that $G\supsetneq(E\cap
\fix_p)$. Therefore $E\cap \fix_p$ is facet of $\Psi_p$. 
\end{proof}

\emph{Proof of Theorem \ref{bafoi}}
1) follows immediately from Lemma \ref{lfdbp} and  Lemma
  \ref{vysid}. We now prove 2).
By Lemma \ref{hd9db}, $\omega_p$ is a mapping from $\mc{E}_n$ to $\mc{G}_p$. For a facet $G\in\mc{G}_p$ of $\Psi_p$, by Lemma \ref{josd}, there exists a facet
$E$ of $\Gamma_n$ such that $G\subset E$. As $G\subset \fix_p$, $G\subset
E\cap \fix_p$. Since $E\cap \fix_p$ is a facet of $\Psi_p$, $G=E\cap \fix_p$
(because for a polyhedral cone, no facet can contain another facet), 
i.e., for each $G\in\mc{G}_p$, there exists $E\in\mc{E}_n$ such that
$\omega_p(E)=E\cap \fix_p=G$. Therefore $\omega_p$ is surjective.
\hfill\QQQ

\subsection*{B. Proof of Lemma \ref{vysid}}
For $p\in\mc{P}_n$, the set $\mathfrak{E}_p$ of all $p$-orbits of
facets of $\Gamma_n$ is a partition of $\mc{E}_n$, the set of all facets of
$\Gamma_n$. Therefore, there exists a partial order on $\{\mf{E}_p:p\in\mc{P}_n\}$.
\begin{lemma}
\label{joasd}
  For $p,p'\in\mc{P}_n$, if $p\le p'$, then 
  $\mf{E}_p\le\mf{E}_{p'}$.
\end{lemma}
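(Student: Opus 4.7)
The plan is to trace the inclusion down through the layers of definitions: from refinement of set partitions, to inclusion of the corresponding subgroups of $\Sigma_n$, to containment of the resulting orbits, and finally to refinement of the orbit partitions of $\mc{E}_n$. The key intermediate fact is that $p \le p'$ forces $\Sigma_p \subset \Sigma_{p'}$, and once this is in hand the orbit containment is essentially tautological.

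First I would verify the subgroup inclusion. Write $p = \{\N^{(1)}_1, \dots, \N^{(1)}_{t_1}\}$ and $p' = \{\N^{(2)}_1, \dots, \N^{(2)}_{t_2}\}$ with $p \le p'$. By the definition of refinement in Subsection \ref{sdsb}, for each block $\N^{(2)}_i$ of $p'$ there is an index set $\mc{J}_i \subset \{1,\dots,t_1\}$ such that $\N^{(2)}_i = \cup_{j \in \mc{J}_i}\N^{(1)}_j$. Now for any $\sigma \in \Sigma_p$ and any $k \in \N^{(2)}_i$, we have $k \in \N^{(1)}_j$ for some $j \in \mc{J}_i$, so $\sigma(k) \in \N^{(1)}_j \subset \N^{(2)}_i$ by \eqref{feqbs}. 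Hence $\sigma \in \Sigma_{p'}$, giving $\Sigma_p \subset \Sigma_{p'}$.

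Next, I would lift this to orbits of facets. Pick any facet $E \in \mc{E}_n$ and consider its $p$-orbit $\mc{O}_{\Sigma_p}(E) = \{\sigma(E): \sigma \in \Sigma_p\}$ and its $p'$-orbit $\mc{O}_{\Sigma_{p'}}(E) = \{\sigma(E): \sigma \in \Sigma_{p'}\}$. Since $\Sigma_p \subset \Sigma_{p'}$, the former is immediately contained in the latter. Thus every $p$-orbit is entirely contained in a single $p'$-orbit.

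Finally, I would assemble the refinement statement. By Proposition \ref{dfoib} (applied to the induced action on $2^{\hsp_n}$), both $\mf{E}_p$ and $\mf{E}_{p'}$ are partitions of $\mc{E}_n$. The previous paragraph shows that each block of $\mf{E}_p$ (a $p$-orbit) lies inside a block of $\mf{E}_{p'}$ (a $p'$-orbit), so each block of $\mf{E}_{p'}$ is the union of those $p$-orbits it contains. By the definition of refinement, $\mf{E}_p \le \mf{E}_{p'}$. No real obstacle arises; the lemma is essentially a formal consequence of the definitions, and the only substantive point is the subgroup inclusion $\Sigma_p \subset \Sigma_{p'}$, which is where the hypothesis $p \le p'$ is actually used.
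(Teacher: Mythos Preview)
Your proof is correct. You take a different route from the paper: the paper invokes Lemma~\ref{oiabd} to translate $p$-equivalence of facets $E(\mc{I}_i,\mc{K}_i)$ into equality of partition vectors $\bs{\lambda}_{\mc{I}_i,p}$ and $\bs{\lambda}_{\mc{K}_i,p}$, and then uses (as in the proof of Proposition~\ref{sdbpo}) that each entry of a $p'$-partition vector is a sum of entries of the corresponding $p$-partition vector, so equal $p$-vectors force equal $p'$-vectors. You instead go straight through the group action, proving the subgroup inclusion $\Sigma_p\subset\Sigma_{p'}$ and deducing orbit containment from the definition of an orbit. Your argument is more elementary and more general, since it never touches the specific parametrization of the facets of $\Gamma_n$ and would apply verbatim to any $\Sigma_n$-set; the paper's argument, by contrast, stays within the combinatorial bookkeeping already set up for facets and ties in neatly with the $\bs{\lambda}$-notation used throughout the appendix.
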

\begin{proof}
  To prove this lemma, it suffices to prove that two $p$-equivalent
  facets are $p'$-equivalent if $p\le p'$. Let facets $E_i=E(\mc{I}_i,\mc{K}_i), i=1,2$ of $\Gamma_n$ be
  $p$-equivalent. By
  Lemma \ref{oiabd},
$\bs{\lambda}_{\mc{I}_1,p}=\bs{\lambda}_{\mc{I}_2,p}$ and $\bs{\lambda}_{\mc{K}_1,p}=\bs{\lambda}_{\mc{K}_2,p}$.
 Since $p\le p'$, any entry of  $\bs{\lambda}_{\mc{I}_i,p'},i=1,2$ is the
 summation of the entries of  $\bs{\lambda}_{\mc{I},p}$ as we discussed
 in Lemma \ref{sdbpo}. Therefore
 $\bs{\lambda}_{\mc{I}_1,p'}=\bs{\lambda}_{\mc{I}_2,p'}$. Similarly, $\bs{\lambda}_{\mc{K}_1,p'}=\bs{\lambda}_{\mc{K}_2,p'}$.
 It follows that $E_1$ and $E_2$ are $p'$-equivalent. Hence $\mf{E}_{p}\le\mathfrak{E}_{p'}$.
\end{proof}

For notational convenience, let $p_0$ be a virtual partiton such that
$\{\N\}\le p_0$ and $\mc{E}_n$ be the only $p_0$-orbit. 
Hence all facets of $\Gamma_n$ are $p_0$-equivalent.
For $p,p'\in\mc{P}_n\cup\{p_0\}$ such that $p\le p'$, by Lemma
\ref{joasd}, each $\mc{E}\in
\mf{E}_{p'}$ can be partitioned into some $p$-orbits. For a particular
$\mc{E}\in\mf{E}_{p'}$, let $\mf{E}_{\mc{E},p}$ be the family
of all such $p$-orbits. Note that $\mf{E}_{\mc{E},p}$ is a subset of
$\mf{E}_{p}$ and the union of all members of $\mf{E}_{\mc{E},p}$ is $\mc{E}$. 
\begin{definition}[Isolation]
\label{kodf8}
Let $p, p'\in\mc{P}_n\cup\{p_0\}$ such that $p\le p'$. For
$\mc{F}\in\mf{E}_{p'}$ and $\mc{E}\in\mf{E}_{\mc{F},p}$, 
$\mb{i}\in\Psi_p$ is called an \emph{isolation} of $\mc{E}$ in $\mf{E}_{\mc{F},p}$ if for any
$E\in\mc{E}$, $\mb{i}\notin E\cap \fix_p$ but for all
$E'\in(\mc{F}\setminus\mc{E})$, $\mb{i}\in E'\cap \fix_p$. We also say
$\mc{E}$ has an isolation $\mb{i}$ in $\mf{E}_{\mc{F},p}$.
\end{definition}

\begin{lemma}
\label{pr0e8}
If each $\mc{E}\in\mf{E}_{\mc{F},p}$ has an isolation in
$\mf{E}_{\mc{F},p}$, for any non $p$-equivalent $E_1, E_2\in\mc{F}$, 
neither $E_1\cap \fix_p\subset
E_2\cap \fix_p$ nor $E_2\cap \fix_p\subset E_1\cap \fix_p$.
\end{lemma}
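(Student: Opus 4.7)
The plan is to observe that Lemma \ref{pr0e8} follows almost immediately by unwinding the definition of isolation, applied twice (once for each of the two $p$-orbits containing $E_1$ and $E_2$). The hypothesis that $E_1$ and $E_2$ are not $p$-equivalent means they lie in distinct $p$-orbits $\mc{E}_1,\mc{E}_2\in\mf{E}_{\mc{F},p}$. Since both $\mc{E}_1$ and $\mc{E}_2$ are elements of $\mf{E}_{\mc{F},p}$, the assumption of the lemma provides isolations $\mb{i}_1,\mb{i}_2\in\Psi_p$ for $\mc{E}_1$ and $\mc{E}_2$, respectively.

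First I would use $\mb{i}_1$. By Definition \ref{kodf8}, $\mb{i}_1\notin E\cap S_p$ for every $E\in\mc{E}_1$; in particular, since $E_1\in\mc{E}_1$, we have $\mb{i}_1\notin E_1\cap S_p$. At the same time, since $E_2\in\mc{E}_2\subset\mc{F}\setminus\mc{E}_1$, the isolation property forces $\mb{i}_1\in E_2\cap S_p$. (Note that $\mb{i}_1\in\Psi_p\subset S_p$, so membership in $E_2$ is what is being asserted here.) This exhibits a point of $E_2\cap S_p$ that is not in $E_1\cap S_p$, giving $E_2\cap S_p\not\subset E_1\cap S_p$.

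Next I would run the symmetric argument with $\mb{i}_2$: it lies in $E_1\cap S_p$ but not in $E_2\cap S_p$, which yields the opposite non-inclusion $E_1\cap S_p\not\subset E_2\cap S_p$. Combining the two non-inclusions gives the conclusion of the lemma.

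The step I expect to be the main obstacle is not inside this proof at all, but in the supporting work needed later to apply Lemma \ref{pr0e8} toward Lemma \ref{vysid}: namely, producing an isolation in $\mf{E}_{\mc{F},p}$ for each $p$-orbit $\mc{E}\in\mf{E}_{\mc{F},p}$ when one ascends through the chain of partitions $p\le p'\le\cdots\le p_0$. The present lemma itself is a purely definitional consequence and requires no extra machinery; the real combinatorial effort will be in constructing the isolating polymatroids $\mb{i}\in\Psi_p$ concretely, which I would address in subsequent lemmas in the appendix.
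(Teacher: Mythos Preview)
Your proof is correct and takes essentially the same approach as the paper, which also observes that the isolation $\mb{i}$ of an orbit $\mc{E}$ witnesses $E'\cap S_p\not\subset E\cap S_p$ for every $E\in\mc{E}$ and $E'\in\mc{F}\setminus\mc{E}$. Your version is simply a more explicit unwinding of this one-line argument, and your remark that the substantive work lies in constructing the isolations (done in Lemmas \ref{u89sd}, \ref{xlkds}, \ref{fd802}) is accurate.
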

\begin{proof}
  By the definition, if $\mc{E}$ has an isolation in
$\mf{E}_{\mc{F},p}$, for all $E\in\mc{E}$ and all
$E'\in\mc{F}\setminus\mc{E}$, $E'\cap \fix_p\not\subset E\cap
\fix_p$. Then the lemma follows.
\end{proof}

\begin{example}
  \label{nlas9} 
Let $p'=p_0, p=\{\N\}$. By Example \ref{jaodm}, 
$\mf{N}_p=\{[(1),(0)]\}\cup\{[(2),(k)]:k=0,\cdots,n-2\}$.
It can be checked that for the uniform matroid $U_{n,n}$, whose rank
function is $U_{n,n}(\mc{A})=|\mc{A}|,\mc{A}\subset\N$, $\s(U_{n,n},p)$ satisfies,
\begin{align*}
  s_n&>s_{n-1},\\
  2s_{i+1}&=s_{i}+s_{i+2},\ i=0,\cdots,n-2.
\end{align*}
Hence, 
for any $E\in\mc{E}_p([(1),(0)])$, the
uniform matroid $U_{n,n}\notin E\cap \fix_{[n]}$, but for any $E\in\mc{E}_p(\bs{\lambda})$
with $\bs{\lambda}\in\mf{N}_p$ and $\bs{\lambda}\neq [(1),(0)]$,
$U_{n,n}\in E\cap \fix_p$. Therefore $U_{n,n}$ is an isolation of $\mc{E}_p([(1),(0)])$
in $\mf{E}_{\mc{E}_n,n}$. 

Similarly, for a particular
$k=0,\cdots,n-2$, uniform matroid $U_{k+1,n}$ with rank function $U_{k+1,n}(\mc{A})=\min\{k+1,
|\mc{A}|\},\mc{A}\subset\N$, $\s(U_{k+1,n},p)$ satisfies,
\begin{align*}
 2 s_{k+1}&>s_{k}+s_{k+2},\\
  s_n&=s_{n-1},\\
  2s_{i+1}&=s_{i}+s_{i+2},\ i=0,\cdots,n-2,\ i\neq k.
\end{align*}
So $U_{k+1,n}\notin
E\cap \fix_p$, 
for any $E\in\mc{E}_p([(2),(k)])$ but for any $E\in\mc{E}_p(\bs{\lambda})$
with $\bs{\lambda}\in\mf{N}_p$ and $\bs{\lambda}\neq [(2),(k)]$,
$U_{k+1,n}\in E\cap \fix_p$. Therefore $U_{k+1,n}$ is an isolation of $\mc{E}_p([(2),(k)])$
in $\mf{E}_{\mc{E}_n,n}$.\hfill\QQQ
\end{example}

Example \ref{nlas9} discussed the case when $p$ is the one-partition. The
following lemma studies the case when $p$ is a two-partition which is
covered by the one-partition $p'=\N$.
\begin{lemma}
\label{u89sd}
  Let $p'=\N$ and $p=\{\N_1,\N_2\}\in\mc{P}_{n}$. 
  For a particular $\bs{\lambda}'\in\mf{N}_{p'}$ and each
  $\bs{\lambda}\in\mf{N}_p$ such that
  $\mc{E}_p(\bs{\lambda})\in\mf{E}_{\mc{E}_{p'}(\bs{\lambda}'),p}$, $\mc{E}_p(\bs{\lambda})$ has
  an isolation in $\mf{E}_{\mc{E}_{p'}(\bs{\lambda}'),p}$.
\end{lemma}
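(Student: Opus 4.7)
The plan is to construct, for each $\bs{\lambda}\in\mf{N}_p$ with $\mc{E}_p(\bs{\lambda})\in\mf{E}_{\mc{E}_{p'}(\bs{\lambda}'),p}$, an explicit $p$-symmetric polymatroid in $\Psi_p$ that serves as the required isolation. Since $\mf{N}_{p'}=\{[(1),(0)]\}\cup\{[(2),(k)]:0\le k\le n-2\}$ by Example \ref{jaodm}, I would split into two cases according to $\bs{\lambda}'$. For $\bs{\lambda}'=[(1),(0)]$ the facets in $\mc{E}_{p'}(\bs{\lambda}')$ are exactly $\{E(i):i\in\N\}$, and they split into the two $p$-orbits $\mc{E}_p([\mb{1}_2(l),\mb{0}_2])$, $l=1,2$ (whenever $\N_l\neq\emptyset$). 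As the isolation for $\mc{E}_p([\mb{1}_2(l),\mb{0}_2])$ I would take the modular matroid $\mi_l(\mc{A})=|\mc{A}\cap\N_l|$, i.e., a uniform matroid on $\N_l$ with loops on $\N\setminus\N_l$. This is clearly in $\Psi_p$, and $\mi_l(\N)-\mi_l(\N\setminus\{i\})$ equals $1$ if $i\in\N_l$ and $0$ otherwise, so by \eqref{soidv} it lies outside $E(i)\cap S_p$ exactly when $i\in\N_l$, as required.

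For $\bs{\lambda}'=[(2),(k)]$ with $0\le k\le n-2$, the contained $p$-orbits are indexed by pairs $[\bs{\lambda}_{\mc{I},p},(k_1,k_2)]$ with $k_1+k_2=k$ and $\bs{\lambda}_{\mc{I},p}\in\{\mb{2}_2(1),\mb{2}_2(2),\mb{1}_2(1,2)\}$, subject to the disjointness/size constraints on the blocks. For each such pair I would build the isolation out of ``capped-modular'' $p$-symmetric polymatroids of the form
\[\mi_{\alpha,\beta,c}(\mc{A})=\min\bigl(\alpha|\mc{A}\cap\N_1|+\beta|\mc{A}\cap\N_2|,\,c\bigr),\]
each of which is a factor of a uniform matroid and hence almost entropic, so in particular lies in $\Psi_p$ (via Proposition \ref{prop:fr_ex}). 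Using the explicit facet descriptions \eqref{kobix}--\eqref{xiobd}, the submodularity inequality attached to a box of orientation $\bs{\lambda}_{\mc{I},p}$ anchored at $(k_1,k_2)$ is strict for $\mi_{\alpha,\beta,c}$ precisely when the break line $\alpha x+\beta y=c$ cuts the corresponding three-point or four-point configuration in the integer lattice. By tuning $(\alpha,\beta,c)$ so that this break lies in exactly the target cell, and, if necessary, taking a conic combination with the block-restricted matroids $\mi_1,\mi_2$ from the first case to cancel unwanted strictness at other cells on the anti-diagonal $k_1+k_2=k$, one can produce an isolation in every sub-case.

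The main obstacle will be the combinatorial bookkeeping inside Case $[(2),(k)]$: ensuring that along the anti-diagonal $k_1+k_2=k$ the chosen break line creates strict submodularity at exactly one cell of exactly one orientation. The key observation managing this is that by choosing the slope $\beta/\alpha$ slightly smaller or slightly larger than $1$ one can separate the horizontal orientation $\mb{2}_2(1)$ from the vertical orientation $\mb{2}_2(2)$, and by choosing the intercept $c$ to land between two consecutive lattice values one isolates a single anchor $(k_1^{\ast},k_2^{\ast})$; the mixed orientation $\mb{1}_2(1,2)$ can then be handled by a small perturbation of the slope together with a corrective subtraction of one of $\mi_1,\mi_2$. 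Since the anti-diagonal carries only finitely many cells, the whole argument reduces to a finite case check that can be verified directly from \eqref{soidv}--\eqref{xiobd}, which is the essential computation I would carry out to complete the proof.
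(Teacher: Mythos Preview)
Your treatment of the case $\bs{\lambda}'=[(1),(0)]$ and of the $p$-orbits $\mc{E}_p([\mb{2}_2(l),(k_1,k_2)])$ inside $\mc{E}_{p'}([(2),(k)])$ is correct and in fact coincides with the paper's argument: for $\mc{E}_p([\mb{1}_2(l),\mb{0}_2])$ the paper also uses $\mi(\mc{A})=|\mc{A}\cap\N_l|$, and for $\mc{E}_p([\mb{2}_2(1),(k_1,k_2)])$ it uses $\mi(\mc{A})=\min\{|\mc{A}\cap\N_1|,k_1+1\}$, which is your $\mi_{1,0,k_1+1}$.

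The gap is in the mixed orientation $\mc{E}_p([\mb{1}_2(1,2),(k_1,k_2)])$. Your building blocks $\mi_{\alpha,\beta,c}$ together with $\mi_1,\mi_2$ are not rich enough to produce an isolation here, and the ``corrective subtraction'' idea cannot work in principle: $\mi_1,\mi_2$ are modular, so every submodularity defect of $\mi_l$ vanishes, and adding or subtracting a multiple of $\mi_l$ changes no defect whatsoever. More concretely, take $n_1=n_2=2$ and $(k_1,k_2)=(0,0)$. By \eqref{eq:10} an isolation $\mi$ must satisfy
\[
s_{1,0}+s_{0,1}>s_{0,0}+s_{1,1},\qquad 2s_{1,0}=s_{0,0}+s_{2,0},\qquad 2s_{0,1}=s_{0,0}+s_{0,2}.
\]
For $\mi_{\alpha,\beta,c}$ with $0<\alpha,\beta<c$ these become $\alpha+\beta>c$, $2\alpha\le c$, $2\beta\le c$, which are jointly infeasible; the degenerate choices $\alpha=0$ or $\beta=0$ kill the $(1,1)$-defect entirely. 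Since every defect is linear and nonnegative on $\Gamma_n$, in any conic combination $\h=\sum_j a_j\h_j$ of polymatroids one has $D(\h)=0$ iff $D(\h_j)=0$ for all $j$; hence a summand with $D_{(1,1)}>0$ is needed, and I just showed no $\mi_{\alpha,\beta,c}$ has $D_{(1,1)}>0$ while $D_{(2,0)}=D_{(0,2)}=0$. So no conic combination of your building blocks is an isolation for this orbit.

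The paper handles exactly this case by an entirely different construction (its Lemma~\ref{xlkds}): a genuinely two-dimensional piecewise-bilinear $p$-symmetric polymatroid given by the explicit formula \eqref{eq:8}, which in the example above yields $s_{0,0}=0$, $s_{1,0}=s_{0,1}=2$, $s_{1,1}=3$, $s_{2,0}=s_{0,2}=s_{2,1}=s_{1,2}=s_{2,2}=4$. This is not of the capped-modular form (the value $s_{1,1}=3$ already rules that out), and the verification that it lies in $J_p([(1,1),(k_1,k_2)])$ occupies most of the appendix. Your proposal would go through if you replace the $\mb{1}_2(1,2)$ sub-case by this construction; the remaining sub-cases are fine as you wrote them.
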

The proof of Lemma \ref{u89sd} is given after Lemmas \ref{xlkds}.

The following discussion facilitates the proof of Lemma \ref{u89sd}. For $p'=\N$ and $p=\{\N_1,\N_2\}\in\mc{P}_{n}$, for
$\mc{E}_p(\bs{\lambda})\in\mf{E}_p$ with $\bs{\lambda}\in\mf{N}_p$, let
$J_p(\bs{\lambda})\subset\Gamma_n$ be the set of all isolations of
$\mc{E}_p(\bs{\lambda})$ in $\mf{E}_{\mc{E}_{p'}(\bs{\lambda}'),p}$ for some
$\bs{\lambda}'\in\mf{N}_{p'}$. Note that because each 
$\mc{E}_p(\bs{\lambda})$ belongs to a unique
$\mf{E}_{\mc{E}_{p'}(\bs{\lambda}'),p}$, we do not need to specify 
which $\mf{E}_{\mc{E}_{p'}(\bs{\lambda}'),p}$ the $p$-orbit
$\mc{E}_p(\bs{\lambda})$ belongs to in the notation $J_p(\bs{\lambda})$.

For $\mc{E}_{p'}([(1),(0)])$, it can be seen that
$\mf{E}_{\mc{E}_{p'}([(1),(0)]),p}$ contains two members, 
\begin{itemize}
\item $\mc{E}_p([(1,0),(0,0)])$, the $p$-orbit of all $E(i)$ such that $i\in\N_1$
and 
\item $\mc{E}_p([(0,1),$ $(0,0)])$, the $p$-orbit of all $E(i)$ such that
$i\in\N_2$.
\end{itemize}
For $\mi\in  J_p([(1,0),(0,0)])$, by the definition of an isolation,
$\mi\not\in E\cap S_p$ for $E\in\mc{E}_p([(1,0),(0,0)])$. Then by
\eqref{soidv}, $\s(\mi,p)$ satisfies
$s_{n_1,n_2}>s_{n_1-1,n_2}$. Similarly, as $\mi\in E\cap S_p$ for
$E\in\mc{E}_p([(0,1),(0,0)])$, $\ s_{n_1,n_2}=s_{n_1,n_2-1}$. Therefore,
\begin{align}
  J_p([(1,0),(0,0)])=&\{\mi\in\Psi_p:
s_{n_1,n_2}>s_{n_1-1,n_2},\ s_{n_1,n_2}=s_{n_1,n_2-1}\}.\label{eq:5}
\end{align}
Region $ J_p([(0,1),(0,0)])$ can be obtained from \eqref{eq:5} by symmetry.
For $\mc{E}_p([(2),(k)]),k=0,\cdots,n-2$, the members in
$\mf{E}_{\mc{E}_p([(2),(k)]),p}$ are 
\begin{itemize}
\item $\mc{E}_p([(1,1),(k_1,k_2)])$, $(k_1,k_2)\in\mc{M}_p$ with
$k_1+k_2=k$ and $k_i\neq
n_i,i=1,2$; 
\item $\mc{E}_p([(2,0),(k_1,k_2)])$, $(k_1,k_2)\in\mc{M}_p$ with
$k_1+k_2=k$ and $k_1\neq
n_1-1,n_1$;
\item  $\mc{E}_p([(0,2),(k_1,k_2)])$, $(k_1,k_2)\in\mc{M}_p$ with
$k_1+k_2=k$, $k_2\neq
n_2-1,n_2$. 
\end{itemize}
Therefore, by the definition of an isolation and \eqref{kobix}
and \eqref{xiobd}, for fixed $(k_1,k_2)$, we have
\begin{align}
J_p&([(1,1),(k_1,k_2)])\nonumber\\
&=\{\h\in\Psi_p:
s_{k_1+1,k_2}+s_{k_1,k_2+1}>s_{k_1,k_2}+s_{k_1+1,k_2+1},\nonumber\\
    &s_{i+1,j}+s_{i,j+1}=s_{i,j}+s_{i+1,j+1},\ (i,j)\in\mc{M}_p,i\neq
    n_1,j\neq n_2, i+j=k_1+k_2,\ (i,j)\neq
    (k_1,k_2),\nonumber\\
    &2s_{i+1,j}=s_{i,j}+s_{i+2,j},\ (i,j)\in\mc{M}_p, i\neq n_1-1,n_1,i+j=k_1+k_2,\nonumber\\
    &2s_{i,j+1}=s_{i,j}+s_{i,j+2},\ (i,j)\in\mc{M}_p,j\neq n_2-1,n_2, i+j=k_1+k_2\}.\label{eq:10}
\end{align}
Similarly, for fixed $(k_1,k_2)$, we have
\begin{align}
J_p&([(2,0),(k_1,k_2)]) \nonumber\\
 &=\{\h\in\Psi_p:
2s_{k_1+1,k_2}>s_{k_1,k_2}+s_{k_1+2,k_2},\nonumber\\
    &s_{i+1,j}+s_{i,j+1}=s_{i,j}+s_{i+1,j+1},\ (i,j)\in\mc{M}_p,i\neq
    n_1, j\neq n_2, i+j=k_1+k_2,\nonumber\\
    &2s_{i+1,j}=s_{i,j}+s_{i+2,j},\ (i,j)\in\mc{M}_p,i\neq n_1-1,n_1, i+j=k_1+k_2,(i,j)\neq(k_1,k_2),\nonumber\\
    &2s_{i,j+1}=s_{i,j}+s_{i,j+2},\ (i,j)\in\mc{M}_p,j\neq n_2-1,n_2, i+j=k_1+k_2.\label{eq:81}\}
\end{align}
Region $ J_p([(0,2),(k_1,k_2)])$ can be obtained from
\eqref{eq:81} by symmetry. 

Thus, to prove Lemma \ref{u89sd} is indeed to prove
that  $J_p(\bs{\lambda})$ for all $\bs{\lambda}\in\mf{N}_p$ are
nonempty. Before we prove Lemma \ref{u89sd}, we first present a
technical lemma.  
\begin{lemma}
\label{xlkds}
Let $p=\{\N_1,\N_2\}\in\mc{P}_{n}$. For a fixed $(k_1,k_2)\in\mc{M}_p$ such that
$k_i\neq n_i$, let $l_m=k_m+1,m=1,2$. Then $\mi\in \fix_p$ with
$\s(\mi,p)$ satisfying 
\begin{equation}
  \label{eq:8}
  s_{i,j}=
  \begin{cases}
    &in_2+jn_1-ij\quad 0\le i\le l_1, 0\le j\le l_2 \text{ or } l_1+1\le
    i\le n_1, l_2+1\le
    j\le n_2\\
    &jl_1-(j-l_2)\max\{0,l_1-i-1\}+i(n_2-j)+j(n_1-l_1)\quad 0\le i\le
    l_1,l_2+1\le j\le n_2\\
    &il_2-(i-l_1)\max\{0,l_2-j-1\}+j(n_1-i)+i(n_2-l_2)\quad l_1+1\le i\le
    n_1,0\le j\le l_2
  \end{cases}
\end{equation}
is in $ J_p([(1,1),(k_1,k_2)])$.
\end{lemma}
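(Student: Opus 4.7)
The plan is to verify that the vector $\mi$ defined by \eqref{eq:8} satisfies each of the defining conditions of $J_p([(1,1),(k_1,k_2)])$ listed in \eqref{eq:10}: (a) the strict submodularity at $(k_1,k_2)$, (b) the modularity equality on each $2\times 2$ block with $i+j = k_1+k_2$ and $(i,j)\neq (k_1,k_2)$, (c) the two families of second-difference equalities along the antidiagonal, and (d) the polymatroidal constraint $\mi \in \Psi_p$. That $\mi \in S_p$ is immediate since \eqref{eq:8} defines $s_{i,j}$ purely as a function of $(i,j)\in\N_p$.

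I would first dispose of (a). All four of $(k_1,k_2)$, $(k_1+1,k_2)$, $(k_1,k_2+1)$, and $(l_1,l_2) = (k_1+1,k_2+1)$ lie in the first region, where we can rewrite $s_{i,j} = in_2+jn_1-ij = n_1n_2 - (n_1-i)(n_2-j)$. Setting $a = n_1-i$ and $b = n_2-j$, a direct computation gives
\begin{equation*}
s_{i+1,j} + s_{i,j+1} - s_{i,j} - s_{i+1,j+1} = -(a-1)b - a(b-1) + ab + (a-1)(b-1) = 1,
\end{equation*}
which establishes (a). This $+1$ residual is the single defect that places $\mi$ on the required side of $G_p([(1,1),(k_1,k_2)])$.

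For (b) and (c) I would proceed by a case analysis based on the location of the block or triple relative to the region boundaries $i = l_1$ and $j = l_2$. The key observation is that the term $(j-l_2)\max\{0,l_1-i-1\}$ appearing in region $3$ and $(i-l_1)\max\{0,l_2-j-1\}$ in region $4$ are piecewise-linear corrections designed to contribute exactly a $-1$ across each region transition. This cancels the $+1$ modular residual of the region-$1$ formula on every $2\times 2$ block anchored on the antidiagonal except the one at $(k_1,k_2)$, giving (b); as a representative check, at $(k_1-1,k_2+1)$ the two region-$3$ cells $(k_1-1,k_2+2)$ and $(k_1,k_2+2)$ differ by $n_2-k_2-1$ rather than $n_2-k_2-2$ precisely because the $\max$ term is active at the first cell but not the second. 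For (c), within each region the formula is affine in $i$ (resp.\ $j$) once the $\max$ terms are expanded, so the second differences vanish there; at transitions a short check on the few cross-boundary configurations confirms that the piecewise corrections still give vanishing second differences along the antidiagonal.

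The main obstacle is (d): verifying that $\mi \in \Psi_p$ requires checking monotonicity and both kinds of submodularity at every valid index in $\N_p$, not just on the antidiagonal. My approach is to split the verification into intra-region and cross-boundary cases. Intra-region, the formulas are explicit and the required inequalities reduce to elementary arithmetic in $n_1,n_2,l_1,l_2$; in particular, the modular residual in region $1$ is the constant $+1 \ge 0$, and analogous residual computations in regions $2$, $3$, $4$ give nonnegative values (with the precise value depending on whether the $\max$ term is active). Across the boundaries $i=l_1$ and $j=l_2$, one must check that the $\max\{0,\cdot\}$ adjustments preserve monotonicity and submodularity in the finitely many cross-boundary $2\times 2$ configurations that arise, parameterized by the position of the block relative to the two boundary lines. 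This bookkeeping step, while routine in principle, is where the bulk of the proof effort lies.
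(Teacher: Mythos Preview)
Your proposal is correct and follows essentially the same strategy as the paper: both verify $\mi\in\Psi_p$ together with the isolation conditions \eqref{eq:10} by a region-based case analysis, with the paper partitioning $\N_p$ into four blocks $\mc{A},\mc{B},\mc{C},\mc{D}$ (corresponding to your regions) and checking each inequality within blocks and across each pair of adjacent blocks. The paper's proof is simply a fully written-out version of the bookkeeping you outline in (d), and your computation of the $+1$ residual in (a) matches the paper's exactly.
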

The proof of Lemma \ref{xlkds} will be deferred to the end of this
appendix. With this lemma, we are now ready to prove Lemma \ref{u89sd}.

\emph{Proof of Lemma \ref{u89sd}}
  We first consider $\mi\in J_p([(1,0),(0,0)])$. Let 
  \begin{equation}
\label{d037h}
    \mi(\mc{A})=|\mc{A}\cap\N_1|,\ \mc{A}\subset\N.
  \end{equation}
  Then $\mi$ is a matroid on $\N$ with $U_{n_1,n_1}$ as its submatroid on
  $\N_1$ and elements in $\N_2$ as loops.
As the entries of
  $\s(\mi,p)$, by \eqref{d037h}, $s_{i,j}=i$ for all $(i,j)\in\N_p$.
 It can be checked by
  \eqref{eq:5} that $\mi\in J_p([(1,0),(0,0)])$. Similarly,
  we can prove $ J_p([(0,1),(0,0)])$ is also nonempty.

Now we consider $J_p([(2,0),(k_1,k_2)])$ with $(k_1,k_2)\in\mc{M}_p$ and
$k_1\neq n_1-1,n_1$.
Let 
  \begin{equation}
\label{siud}
    \mi(\mc{A})=\min\{|\mc{A}\cap\N_1|,k_1+1\},\ \mc{A}\subset\N.
  \end{equation}
  Then $\mi$ is a matroid on $\N$ with $U_{k_1+1,n_1}$ as its submatroid on
  $\N_1$ and elements in $\N_2$ as loops. As the entries of
  $\s(\mi,p)$, by \eqref{siud}, $s_{i,j}=\min\{i,k_1+1\}$ for all $(i,j)\in\mc{M}_p$.
It can be checked by \eqref{eq:81}
  that
  $\mi\in J_p([(2,0),(k_1,k_2)])$. Similarly,
  we can prove $ J_p([(0,2),(k_1,k_2)])$ with $(k_1,k_2)\in\mc{M}_p$ and
$k_2\neq n_2-1,n_2$ is also nonempty.

Finally, by Lemma \ref{xlkds}, for any $(k_1,k_2)\in\mc{M}_p$ with
$k_i\neq n_i,i=1,2$, $J_p([(1,1),(k_1,k_2)])$ is nonempty. 
\hfill\QQQ

Lemma \ref{u89sd} can be generalized from the two-partition case to the
multi-partition case, which will be stated in 
the next lemma whose
proof will also be deferred to the end of this appendix.

\begin{lemma} 
\label{fd802}
Let $p,p'\in\mc{P}_n$ and $p'$ covers $p$. 
For any
$\mc{E}_{p'}(\bs{\lambda}')\in\mf{E}_{p'}$ with
$\bs{\lambda}'\in\mf{N}_{p'}$, for each $\bs{\lambda}\in\mf{N}_{p}$,
$\mc{E}_{p}(\bs{\lambda})\in
\mf{E}_{\mc{E}_{p'}(\bs{\lambda}'),p}$ has
an isolation in $\mathfrak{E}_{\mc{E}_{p'}(\bs{\lambda}'),p}$. 
\end{lemma}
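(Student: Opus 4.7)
The plan is to prove Lemma \ref{fd802} by generalizing the two-partition argument of Lemma \ref{u89sd} to the multi-partition setting, with the merged block of $p'$ playing the role of the entire ground set in the two-partition case. Since $p'$ covers $p$, without loss of generality I can take $p=\{\N_1,\ldots,\N_t\}$ with $p'=\{\N_1,\ldots,\N_{t-2},\N''\}$ where $\N''=\N_{t-1}\cup\N_t$, and set $p^{*}=\{\N_{t-1},\N_{t}\}$. The key structural observation is that inside a fixed $p'$-orbit $\mc{E}_{p'}(\bs{\lambda}')$, the distribution of $\mc{I}$ and $\mc{K}$ among $\N_1,\ldots,\N_{t-2}$ is constant, and the distinct $p$-orbits inside $\mf{E}_{\mc{E}_{p'}(\bs{\lambda}'),p}$ are parametrized exactly by how $\mc{I}\cap\N''$ and $\mc{K}\cap\N''$ split between $\N_{t-1}$ and $\N_t$. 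Hence the sub-problem is combinatorially the two-partition problem on $\N''$ that Lemma \ref{u89sd} already resolves.

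My first step would be to handle those $p$-orbits whose representatives $E(\mc{I},\mc{K})$ satisfy $\mc{I}\subset\N''$. Here, Lemma \ref{u89sd} applied on the ground set $\N''$ with respect to $p^{*}$ produces an isolation $\mi^{*}$ for the corresponding sub-problem on $\N''$, which I lift to $\N$ by $\mi(\mc{A})=\mi^{*}(\mc{A}\cap\N'')$. A direct check shows that $\mi$ is a $p$-symmetric polymatroid (polymatroid axioms are inherited from $\mi^{*}$, and $p$-symmetry follows because $\mi^{*}$ is $p^{*}$-symmetric and elements of $\N\setminus\N''$ are ignored), and that the facet equality for $\mi$ at every $E(\mc{I},\mc{K})\in\mc{E}_{p'}(\bs{\lambda}')$ reduces to the corresponding facet equality for $\mi^{*}$ at $E(\mc{I},\mc{K}\cap\N'')$ on $\N''$, so the isolation property of $\mi^{*}$ transfers verbatim. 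When $\mc{E}_{p'}(\bs{\lambda}')$ happens to contain only one $p$-orbit (for instance when $\bs{\lambda}_{\mc{I},p'}$ is supported entirely on $\{1,\ldots,t-2\}$ and $\mc{K}=\emptyset$), the isolation condition of Definition \ref{kodf8} is vacuous on the empty complement, and any $\mi\in\Psi_p$ off the unique facet suffices.

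The substantive cases are $|\mc{I}|=2$ with $\mc{I}\not\subset\N''$, i.e., $\mc{I}$ either straddling $\N''$ and $\N\setminus\N''$ or lying entirely in $\N\setminus\N''$. Here the lift trick above fails because every facet equality in $\mc{E}_{p'}(\bs{\lambda}')$ becomes trivially satisfied by $\mi$, so sibling $p$-orbits cannot be separated. A polymatroid that genuinely couples the inside and the outside of $\N''$ is required. My proposal is to use truncated uniform matroids whose parallel class unites a block of $\N\setminus\N''$ with one of $\N_{t-1},\N_t$: for the $p'$-orbit with $\bs{\lambda}_{\mc{I},p'}=\mb{2}_{t-1}(l)$ with $l\le t-2$ and a target split with $|\mc{K}\cap\N_t|=k_t^{*}$, the candidate $\mi(\mc{A})=\min\{|\mc{A}\cap(\N_l\cup\N_t)|,r\}+|\mc{A}\setminus(\N_l\cup\N_t)|$ with truncation parameter $r$ chosen so that $|\mc{K}\cap(\N_l\cup\N_t)|=r-1$ exactly at the target orbit has strict submodular defect at precisely the target facet and vanishing defect elsewhere in the $p'$-orbit, as a short computation on the partition vector shows; analogous constructions with parallel classes $\N_{l_1}\cup\N_{l_2}\cup\N_t$ (for $\bs{\lambda}_{\mc{I},p'}=\mb{1}_{t-1}(l_1,l_2)$) and $\N_l\cup\N_{t-1}$ (for the straddling case $\bs{\lambda}_{\mc{I},p'}=\mb{1}_{t-1}(l,t-1)$) are intended to handle the remaining possibilities. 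The main obstacle is to furnish such a construction for every shape of $\bs{\lambda}_{\mc{I},p'}$ and every target split $(k_{t-1},k_t)$, and to verify case-by-case that the resulting polymatroid is $p$-symmetric, genuinely polymatroidal, and has exactly the correct strict/equal pattern on $\mf{E}_{\mc{E}_{p'}(\bs{\lambda}'),p}$. This amounts to a finite but laborious computation on the partition vectors and is the direct multi-partition analogue of the work behind Lemma \ref{xlkds}.
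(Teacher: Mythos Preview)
Your proposal is correct and follows essentially the same approach as the paper: the same five-way case split on how $\mc{I}$ intersects the merged block, the same lifting of Lemma~\ref{u89sd} via $\mi(\mc{A})=\mi^{*}(\mc{A}\cap\N'')$ for the case $\mc{I}\subset\N''$, and for the remaining cases the same idea of a truncated uniform matroid supported on one half of the split block together with the block(s) meeting $\mc{I}$, with the truncation parameter tuned to the target value of $k_t$. The only cosmetic difference is that you add a free matroid on the complement of the support whereas the paper uses loops there; since this extra term is modular it cancels in every submodular defect computation and does not affect the isolation argument.
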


\emph{Proof of Lemma \ref{vysid}} We prove the lemma by induction
on the lattice $\mc{P}_n$. First, we show that this lemma is
true for $p=\{\N\}$. 
From Example \ref{nlas9}, we see that for any $\bs{\lambda}\in\mf{N}_n$, $\mc{E}_n(\bs{\lambda})$ has an isolation
$U_{k,n}$ in $\mathfrak{E}_n$. Then the lemma follows immediately from
Lemma \ref{pr0e8}. 

Then it is sufficient to prove that if the lemma is true for $(t-1)$-partition
$p'\in\mc{P}_{n}$, it is also true for any $t$-partition $p\in\mc{P}_{n}$ covered by
$p'$. Now assume that
the lemma is true for a fixed $(t-1)$-partition
$p'$, and consider any $t$-partition
$p$ covered by $p'$.
Let $E_1,E_2$ be two facets of $\Gamma_n$ which are not
$p$-equivalent. If $E_1,E_2$ are not $p'$-equivalent, by
the induction hypothesis, $E_1\cap \fix_{p'}\not\subset E_2\cap \fix_{p'}$. As $p\le
p'$, by Proposition \ref{sdbpo}, $\fix_{p'}\subset \fix_{p}$. So if
$E_1\cap \fix_{p}\subset E_2\cap \fix_{p}$, then $E_1\cap \fix_{p}\cap
\fix_{p'}\subset E_2\cap \fix_{p}\cap \fix_{p'}$, which implies that
$E_1\cap \fix_{p'}\subset E_2\cap \fix_{p'}$, a contradiction. Therefore
$E_1\cap \fix_{p}\not\subset E_2\cap \fix_{p}$. Similarly, $E_2\cap
\fix_{p}\not\subset E_1\cap \fix_{p}$. 

Then it remains to prove the case that
$E_1$ and $E_2$ are $p'$-equivalent but not $p$-equivalent. 
To this end, in light of Lemma \ref{pr0e8}, it suffices to prove that, for any
$\mc{E}_{p'}(\bs{\lambda}')\in\mf{E}_{p'}$ with
$\bs{\lambda}'\in\mf{N}_{p'}$, for each $\bs{\lambda}\in\mf{N}_{p}$,
$\mc{E}_{p}(\bs{\lambda})\in
\mf{E}_{\mc{E}_{p'}(\bs{\lambda}'),p}$ has
an isolation in $\mathfrak{E}_{\mc{E}_{p'}(\bs{\lambda}'),p}$. This is
what we have proved in Lemma \ref{fd802}. Then this lemma follows.
\hfill\QQQ

\emph{Proof of Lemma \ref{xlkds}}
To prove that $\mi\in \fix_p$ with $\s(\mi,q)$ satisfying \eqref{eq:8} is
in $J_p([(1,1),(k_1,k_2)])$, we need to check that, first
$\mi\in\Gamma_n$, i.e., $\s(\mi,q)$ satisfies
\begin{align}
  \label{eq:12}
  &s_{n_1,n_2}\ge s_{n_1-1,n_2}\\
 &s_{n_1,n_2}\ge s_{n_1,n_2-1}\label{eq:121}\\
&2s_{i+1,j}\ge s_{i,j}+s_{i+2,j}\quad 0\le i\le n_1-2, 0\le j\le n_2 \label{eq:122}\\
&2s_{i,j+1}\ge s_{i,j}+s_{i,j+2}\quad 0\le i\le n_1, 0\le j\le n_2-2 \label{eq:123}\\
&s_{i+1,j}+s_{i,j+1}\ge s_{i,j}+s_{i+1,j+1}\quad 0\le i\le n_1, 0\le j\le n_2-2 \label{eq:124}
\end{align}
(cf. \eqref{soidv}-\eqref{xiobd}) and second, it satisfies \eqref{eq:10}, i.e.,
\begin{align}
\label{eq:101}
&s_{l_1,l_2-1}+s_{l_1-1,l_2}>s_{l_1-1,l_2-1}+s_{l_1,l_2}\\
 \label{eq:102}   &s_{i+1,j}+s_{i,j+1}=s_{i,j}+s_{i+1,j+1}\quad (i,j)\in\mc{M}_p;i\neq
    n_1;j\neq n_2; i+j=l_1+l_2-2;\ (i,j)\neq
    (l_1-1,l_2-1)\\
    \label{eq:103}&2s_{i+1,j}=s_{i,j}+s_{i+2,j}\quad (i,j)\in\mc{M}_p; i\neq n_1-1,n_1;i+j=l_1+l_2-2\\
 \label{eq:104}  &2s_{i,j+1}=s_{i,j}+s_{i,j+2}\quad (i,j)\in\mc{M}_p; j\neq n_2-1,n_2;i+j=l_1+l_2-2
\end{align}
Note that the above are obtained by replacing $k_m$ by $l_m-1,m=1,2$
in \eqref{eq:10}.

Now we do the checking. According to \eqref{eq:8}, let 
\begin{itemize}
\item $\mc{A}=\{(i,j): 0\le i\le l_1, 0\le  j\le
l_2\}$, 
\item $\mc{B}=\{(i,j):0\le i\le
    l_1,l_2+1\le j\le n_2\}$,
\item $\mc{C}=\{(i,j): l_1+1\le i\le
    n_1,0\le j\le l_2\}$ and
\item $\mc{D}=\{(i,j):l_1+1\le
    i\le n_1, l_2+1\le
    j\le n_2\}$
\end{itemize}
be the four blocks of a partition of $\mc{M}_p$.

For an inequality in \eqref{eq:12}-\eqref{eq:124}, if the indices of
all the terms are in a particular block, say $\mc{A}$, we say that the
inequality is in $\mc{A}$; otherwise, if the indices are in more than one
block, say $\mc{A}$ and $\mc{B}$, we say the inequality is between
$\mc{A}$ and $\mc{B}$. If the condition that guarantees the existence
of the inequalities does not hold, then the inequality does not need
to be considered for the chosen set of parameter.

We will check that \eqref{eq:12}-\eqref{eq:124} hold, which proves that
$\mi\in\Gamma_n$. During the process, by noting which inequality hold
with equality and which hold strictly, we can verify the
equalities or inequalities in \eqref{eq:101}-\eqref{eq:104} along the
way. The details are given below, where the checking of  \eqref{eq:12}-\eqref{eq:124} are organized according to the indices involved in the inequalities.  Table II indicates how the inequalities and equalities in (67)-(70) are verified in the process.  For example, \eqref{eq:101} is verified under Inequalities in $\cal A$, and \eqref{eq:102} is verified under Inequalities in $\cal B$, Inequalities in $\cal C$, Inequalities in $\cal A$ and $\cal B$, and Inequalities in $\cal A$ and $\cal C$ for the corresponding ranges of $(i,j)$, respectively.

The result of the check are listed in Table \ref{tab:2}. Then we
can prove that $\mi\in J_p([(1,1),(k_1,k_2)])$. The details are given below.

\noindent\underline{Inequalities in $\mc{A}$:}

\begin{itemize}
\item \eqref{eq:122} holds with equality since the first case in
  \eqref{eq:8} gives
\begin{equation}
\label{eq:131}
 s_{i,j}= in_2+jn_1-ij
\end{equation}
which is linear in $i$ for a fixed $j$.
When $i=l_1-2,j=l_2$, it proves \eqref{eq:103} holds for such $i,j$.

\item \eqref{eq:123}: it holds with equality since $s_{i,j}$
  above is linear in $j$ for a fixed $i$.
When $i=l_1$ and $j=l_2-2$, it proves that \eqref{eq:104} holds for
such $i$ and $j$.

\item \eqref{eq:124} holds strictly since
  \begin{align*}
    (s_{i+1,j}&+s_{i,j+1})- (s_{i,j}+s_{i+1,j+1})\\
=\ &((-(i+1)j)+(-i(j+1)))-(-ij-(i+1)(j+1))\\
=\ &1.
\end{align*}
When $i=l_1-1,j=l_2-1$, it proves that \eqref{eq:101} holds.

\item If $l_1=n_1$ and $l_2=n_2$, \eqref{eq:12} is in $\mc{A}$ and it holds with equality since
$s_{n_1,n_2}=s_{n_1-1,n_2}=n_1n_2$.

\item If $l_1=n_1$ and $l_2=n_2$, \eqref{eq:121} is in $\mc{A}$ and it holds with equality since
$s_{n_1,n_2}=s_{n_1-1,n_2}=n_1n_2$.

\end{itemize}

\noindent\underline{Inequalities in $\mc{B}$:}

The second case in \eqref{eq:8} can be written in two subcases ($0\le
i\le l_1-1$ and $i=l_1$)
as
\begin{align}
  \label{eq:132}
  s_{i,j}=\ & l_1l_2+j-l_2+i(n_2-l_2)+j(n_1-l_1)\quad\text{if } 0\le i
  \le l_1-1,\\
\label{eq:133}
s_{l_1,j}=\ & l_1n_2+j(n_1-l_1).
\end{align}

\begin{itemize}
\item \eqref{eq:122} with $i\le l_1-3$ holds with equality since \eqref{eq:132}
is linear in $i$ for a fixed $j$.
When $i\le l_1-3$ and $i+j=l_1+l_2-2$, it proves 
\eqref{eq:103} holds for such $i,j$.

\item \eqref{eq:122} with $i=l_1-2$ holds strictly since by
  \eqref{eq:132} and \eqref{eq:133}
\begin{align*}
  2s_{l_1-1,j}&-s_{l_1-2,j}-s_{l_1,j}\\
=\ &2( l_1l_2+j-l_2+(l_1-1)(n_2-l_2)+j(n_1-l_1))\\ \ &-(
l_1l_2+j-l_2+(l_1-2)(n_2-l_2)+j(n_1-l_1))-(l_1n_2+j(n_1-l_1))\\
=\ &j-l_2>0.
\end{align*}

\item \eqref{eq:123} holds with equality since
  \eqref{eq:132} is linear in $j$ for a fixed $i$ and \eqref{eq:133} is linear in $j$.
When $i\le l_1-3$ and $i+j=l_1+l_2-2$, it proves 
\eqref{eq:104} holds for such $i,j$.

\item \eqref{eq:124} with $i\le l_1-2$ holds with equality since
by \eqref{eq:132} 
\begin{align*}
  (s_{i+1,j}&+s_{i,j+1})- (s_{i,j}+s_{i+1,j+1})\\
=\ & (s_{i,j+1}-s_{i,j})- (s_{i+1,j+1}-s_{i+1,j})\\
=\ & (1+n_1-l_1)-(1+n_1-l_2)=0.
\end{align*}
When $i\le l_1-3$ and $i+j=l_1+l_2-2$, it proves 
\eqref{eq:102} holds for such $i,j$.

\item \eqref{eq:124} with $i= l_1-1$ holds strictly since by \eqref{eq:132} and \eqref{eq:133}
\begin{align*}
  (s_{l_1,j}&+s_{l_1-1,j+1})- (s_{l_1-1,j}+s_{l_1,j+1})\\
=\ & (s_{l_1-1,j+1}-s_{l_1-1,j})- (s_{l_1,j+1}-s_{i_1,j})\\
=\ & (1+n_1-l_1)-(n_1-l_2)=1.
\end{align*}

\item If $l_1=n_1$, \eqref{eq:12} is in this block and it holds with equality since by
  \eqref{eq:132} and \eqref{eq:133},
  $s_{n_1,n_2}=s_{n_1-1,n_2}=n_1n_2$.

\item If $l_1=n_1$ and $l_2\le n_2-2$, \eqref{eq:121} is in this block and it holds with equality since by
\eqref{eq:133},
  $s_{n_1,n_2}=s_{n_1,n_2-1}=n_1n_2$.
\end{itemize}

\noindent\underline{Inequalities in $\mc{C}$:}

These inequalities are symmetrical to those inequalities in
$\mc{B}$. The details are omitted here.

\noindent\underline{Inequalities in $\mc{D}$:}

\begin{itemize}
\item \eqref{eq:122}, \eqref{eq:123} and \eqref{eq:124} hold by the
  same reason as they in $\mc{A}$.

\item If $l_1\le n_1-2$ and $l_2\le n_2-1$ \eqref{eq:12} is in this
  block and it holds by the same reason in $\mc{A}$.

\item If $l_1\le n_1-1$ and $l_2\le n_2-2$, \eqref{eq:121} is in this
  block and it holds by the same reason in $\mc{A}$.
\end{itemize}

\noindent\underline{Inequalities between $\mc{A}$ and $\mc{B}$:}

\begin{itemize}
\item \eqref{eq:123} with $i\le l_1-1$, $j=l_2-1$ holds since by \eqref{eq:131} and \eqref{eq:132},
  \begin{align*}
     2s_{i,l_2}&-s_{i,l_2-1}-s_{i,l_2+1}\\
=\ &
2(in_2+l_2n_1-il_2)-(in_2+(l_2-1)n_1-i(l_2-1))-(l_1l_2+1+i(n_2-l_2)+(l_2+1)(n_1-l_1))\\
=\ &l_1-i-1 \ge 0.
  \end{align*}
Furthermore, when $i= l_1-1$ and $j=l_2-1$, it holds with equality which proves
\eqref{eq:104} for such $i$ and $j$.

\item \eqref{eq:123} with $i= l_1$, $j=l_2-1$ holds with equality since by \eqref{eq:131} and \eqref{eq:133},
  \begin{align*}
     2s_{l_1,l_2}&-s_{l_1,l_2-1}-s_{l_1,l_2+1}\\
=\ &
2(l_1n_2+l_2n_1-l_1l_2)-(l_1n_2+(l_2-1)n_1-l_1(l_2-1))-(l_1n_2+(l_2+1)(n_1-l_1))\\
=\ &0.
  \end{align*}

\item \eqref{eq:123} with $i\le l_1-1$, $j=l_2$ holds with equality since by in \eqref{eq:131} and \eqref{eq:132},
  \begin{align*}
     2s_{i,l_2+1}&-s_{i,l_2+2}-s_{i,l_2}\\
=\ & 2(l_1l_2+1+i(n_2-l_2)+(l_2+1)(n_1-l_1))\\ \ &-(l_1l_2+2+i(n_2-l_2)+(l_2+2)(n_1-l_1))-(in_2+l_2n_1-il_2)\\
=\ &0;
  \end{align*}

When $i=l_1-2$ and $j=l_2$, it proves
\eqref{eq:104} for such $i$ and $j$.

\item \eqref{eq:123} with $i= l_1$, $j=l_2$ holds with equality since by in \eqref{eq:131} and \eqref{eq:133},
  \begin{align*}
     2s_{l_1,l_2+1}&-s_{l_1,l_2+2}-s_{l_1,l_2}\\
=\ & 2(l_1n_2+(l_2+1)(n_1-l_1))-(l_1n_2+(l_2+2)(n_1-l_1))-(l_1n_2+l_2n_1-l_1l_2)\\
=\ &0;
  \end{align*}

\item \eqref{eq:124} with $i\le l_1-2$, $j=l_2$ holds with equality since by \eqref{eq:131} and \eqref{eq:132},
  \begin{align*}
     (s_{i+1,l_2}&+s_{i,l_2+1})-(s_{i,l_2}+s_{i+1,l_2+1}) \\
=\ & (s_{i+1,l_2}-s_{i,l_2})-(s_{i+1,l_2+1}-s_{i,l_2+1}) \\
=\ & (n_2-l_2)-(n_2-l_2)=0.
  \end{align*}

When $i=l_1-1$ and $i+j=l_1+l_2-2$, it proves \eqref{eq:102} holds for
such $i$ and $j$.

\item \eqref{eq:124} with $i= l_1-1$, $j=l_2$ holds strictly since by \eqref{eq:131} and \eqref{eq:133},
  \begin{align*}
   (s_{l_1,l_2}&+s_{l_1-1,l_2+1})-(s_{l_1-1,l_2}+s_{l_1,l_2+1}) \\
=\ & (s_{l_1,l_2}-s_{l_1-1,l_2})-(s_{l_1,l_2+1}-s_{l_1-1,l_2+1}) \\
=\ & (n_2-l_2)-(n_2-l_2-1)=1.
  \end{align*}

\item If $l_1=n_1$ and $l_2=n_2-1$, \eqref{eq:121} is in this case and it holds with
  equality since by \eqref{eq:131} and \eqref{eq:133}, $s_{n_1,n_2}=s_{n_1,n_2-1}=n_1,n_2$.
\end{itemize}

\begin{table}
  \centering
  \begin{tabular}{|c|c|l|l|l|l|}
    \hline
    \hline
    &Inequalities&Inequalities in $\mc{B}$&Inequalities in
    $\mc{C}$&Inequalities between&Inequalities between
    \\
    &in $\mc{A}$&&& $\mc{A}$ and $\mc{B}$& $\mc{A}$ and $\mc{C}$
    \\
\hline
\hline
  \eqref{eq:101}&$\cdot$& & & &\\
\hline
  \eqref{eq:102}& &$\max\{0,n_1-l_1-l_2+1\}$&$(i,j):l_1+1\le i\le $
  &$(i,j)=$&$(i,j)=$\\
& &$\le i\le l_1-3$&$ \min\{n_1-1,l_1+l_2-2\}$
  &$(l_1-2,l_2)$&$(l_1,l_2-2)$\\
& &$j=l_1+l_2-2-i$&$j=l_1+l_2-2-i$& &\\

\hline
  \eqref{eq:103}& &$\max\{0,n_1-l_1-l_2+2\}$&$l_1+1\le i\le $ &
  &$(i,j)=$\\
&$(l_1-2,l_2)$ &$\le i\le
  l_1-3$&$\min\{n_1-2,l_1+l_2-2\}$ & &$(l_1-1,l_2-1)$\\
& &$j=l_1+l_2-2-i$&$j=l_1+l_2-2-i$& &or $(l_1,l_2-2)$\\

\hline
  \eqref{eq:104}& &$\max\{0,n_1-l_1-l_2\}$&$l_1+1\le i\le $&$(i,j)=$  &\\
&$(l_1,l_2-2)$&$\le i\le
  l_1-3$&$\min\{n_1,l_1+l_2-2\}$&$(l_1-2,l_2)$ or&\\
& &$j=l_1+l_2-2-i$&$j=l_1+l_2-2-i$&$(l_1-1,l_2-1)$  &\\
\hline
  \end{tabular}
  \caption{Verification of \eqref{eq:101}-\eqref{eq:104}}
  \label{tab:2}
\end{table}


\noindent\underline{Inequalities between $\mc{A}$ and $\mc{C}$:}

These inequalities are symmetrical to those inequalities between
$\mc{A}$  and
$\mc{B}$. The details are omitted here.

\noindent\underline{Inequalities between $\mc{B}$ and $\mc{D}$:}
\begin{itemize}
\item \eqref{eq:122} with $i=l_1-1$ holds with equality since by
  \eqref{eq:131}, \eqref{eq:132} and \eqref{eq:133},      
  \begin{align*}
    2s_{l_1,j}&-s_{l_1-1,j}-s_{l_1+1,j}\\
=\ &2(l_1n_2+j(n_1-l_1))-(l_1l_2+j-l_2+(l_1-1)(n_2-l_2)+j(n_1-l_1))\\ &-((l_1+1)n_2+jn_1-(l_1+1)j)\\
=\ &0.
  \end{align*}

\item \eqref{eq:122} with $i=l_1$ holds with equality since by
  \eqref{eq:131} and \eqref{eq:133},      
  \begin{align*}
    2s_{l_1+1,j}&-s_{l_1,j}-s_{l_1+2,j}\\
=\ &2((l_1+1)n_2+jn_1-(l_1+1)j)-(l_1n_2+j(n_1-l_1))-((l_1+2)n_2+jn_1-(l_1+2)j)\\
=\ &0.
  \end{align*}

\item \eqref{eq:124} with $i=l_1$ holds strictly since by \eqref{eq:131} and \eqref{eq:133}, 
  \begin{align*}
    (s_{l_1+1,j}&+s_{l_1,j+1})-(s_{l_1,j}+s_{l_1+1,j+1})\\
=\ &(s_{l_1,j+1}-s_{l_1,j})-(s_{l_1+1,j+1}-s_{l_1,j+1})\\
=\ &(n_1-l_1)-(n_1-l_1-1)=1.
  \end{align*}

\item If $l_1=n_1-1$, \eqref{eq:12} belongs to this case and it holds since
  by \eqref{eq:131} and \eqref{eq:133}, $s_{n_1,n_2}=s_{s_{n_1-1,n_2}}=n_1n_2$.
\end{itemize}

\noindent\underline{Inequalities between $\mc{C}$ and $\mc{D}$:}

These inequalities are symmetrical to those inequalities between 
$\mc{B}$ and $\mc{D}$. The details are omitted here.

\noindent\underline{Inequalities between $\mc{A}$, $\mc{B}$, $\mc{C}$ and $\mc{D}$:}
\begin{itemize}
\item \eqref{eq:124} with $i=l_1$ and $j=l_2$ holds strictly since by \eqref{eq:8}, 
  \begin{align*}
    (s_{l_1+1,l_2}&+s_{l_1,l_2+1})-(s_{l_1,l_2}+s_{l_1+1,l_2+1})\\
=\ &(l_2n_1+(l_1+1)(n_2-l_2)+l_1n_2+(l_2+1)(n_1-l_1))\\ &-(l_1n_2+l_2n_1-l_1l_2+(l_1+1)n_2+(l_2+1)n_1-(l_1+1)(l_2+1))\\
=\ &1.
  \end{align*}
\end{itemize}
\hfill\QQQ

\emph{Proof of Lemma \ref{fd802}}
Without
loss of generality, we assume $p,p'\in\mc{P}^*_n$ and the first block $p'$ is the union
of the first two blocks of $p$
and the other blocks of $p'$ are the other corresponding blocks of $p$,
i.e., $\N'_1=\N_1\cup\N_2=\{1,\cdots,n'_1\}$ and
$\N'_i=\N_{i+1},l=2,\cdots,t-1$ and furthermore $\N_1=\{1,\cdots,n_1\},\N_2=\{n_1+1,\cdots,n'_1\}$.
 With
this assumption, we have $n'_1=n_1+n_2$ and $n'_i=n_{i+1}$,
$i=2,\cdots,t-1$.

By the discussion above \eqref{bfkdg},
$\mf{E}_{p'}=\{\mc{E}_{p'}(\bs{\lambda}'):\bs{\lambda}'\in\mf{N}_{p'}\}$ with
\begin{align*}
  \mf{N}_{p'}= &\mf{N}_A\cup\mf{N}_B\cup\mf{N}_C\\
= &\{[\mb{1}_{t-1}(l),\mb{0}_{t-1}]:1\le l\le
  t-1\}\\ &\cup\{[\mb{1}_{t-1}(l_1,l_2),(k'_1,\cdots,k'_{t-1})]:1\le l_1<l_2\le t-1, (k'_1,\cdots,k'_t)\in\N_{p'},
k'_{l_1}\neq n'_{l_1}, k'_{l_2}\neq n'_{l_2}\}
\\ &\cup\{[\mb{2}_{t-1}(l),(k'_1,\cdots,k'_{t-1})] :1\le l\le t-1,(k'_1,\cdots,k'_{t-1})\in\N_{p'},
k'_{l}\neq n'_{l}-1, n'_{l}\}.
\end{align*}
For the convenience of our discussion, we partition $\mf{N}_{p'}$ into
the following five disjoint subsets,
\begin{enumerate}
\item 
$\mf{N}_1\triangleq\{[\mb{1}_{t-1}(1),\mb{0}_{t-1}]\}
\cup\{[\mb{2}_{t-1}(1),(k'_1,\cdots, k'_{t-1})]:(k'_1,\cdots,k'_{t-1})\in\N_{p'},
k'_1\neq n'_1-1,n'_1\}$;
\item
$\mf{N}_2\triangleq\{[\mb{1}_{t-1}(1,l),(k'_1,\cdots,k'_{t-1})]: 2\le
l\le t-1, (k'_1,\cdots,k'_t)\in\N_{p'}, k'_{1}\neq
n'_{1}, k'_{l}\neq n'_{l}\}$;
\item 
$\mf{N}_3\triangleq\{[\mb{1}_{t-1}(l),\mb{0}_{t-1}]:2\le l\le
  t-1\}$;
\item 
$\mf{N}_4\triangleq\{[\mb{2}_{t-1}(l),(k'_1,\cdots,k'_{t-1})]: 2\le l\le t-1, (k'_1,\cdots,k'_{t-1})\in\N_{p'},
k'_{l}\neq n'_{l}-1, n'_{l}\}$;
\item
$\mf{N}_5\triangleq
\{[\mb{1}_{t-1}(l_1,l_2),(k'_1,\cdots,k'_{t-1})]:2\le l_1<l_2\le t-1, (k'_1,\cdots,k'_t)\in\N_{p'},
k'_{l_1}\neq n'_{l_1}, k'_{l_2}\neq n'_{l_2}\}$.
\end{enumerate}
Note that $\mf{N}_1$ is composed of those $\bsl$ in $\N_A$ or $\N_C$
such that $l=1$, that is, for $\bsl\in\mf{N}_1$, $\mc{E}_p(\bsl)$ contains
$E(\mc{I},\mc{K})$ such that $\mc{I}\subset\N'_1$. Subset
$\mf{N}_2\subset\mf{N}_B$ and for 
$\bsl\in\mf{N}_2$, $\mc{E}_p(\bsl)$ contains
$E(\mc{I},\mc{K})$ such that $|\mc{I}\cap\N'_1|=1$. For the remaining, we have $\mf{N}_3=\mf{N}_A\setminus \mf{N}_1$,
$\mf{N}_4=\mf{N}_C\setminus \mf{N}_1$ and $\mf{N}_5=\mf{N}_B\setminus \mf{N}_2$.

\noindent\underline{$\bsl'\in\mf{N}_1$:}

Consider $q=\{\N_1,\N_2\}$ and $q'=\{\N'_1\}$ in
$\mc{P}_{n'_1}$. By Lemma \ref{u89sd}, for every particular $\hat{\bs{\lambda}}'\in\mf{N}_{q'}$ and each
  $\hat{\bs{\lambda}}\in\mf{N}_q$ such that
  $\mc{E}_q(\hat{\bs{\lambda}})\in\mf{E}_{\mc{E}_{[q']}(\hat{\bs{\lambda}}'),q}$, $\mc{E}_q(\hat{\bs{\lambda}})$ has
  an isolation in $\mf{E}_{\mc{E}_{q'}(\hat{\bs{\lambda}}'),q}$. For
$E(\mc{I},\mc{K})\in\mc{E}_{p'}(\bs{\lambda}')$ with $\bs{\lambda}'\in\mf{N}_1$,
since $\mc{I}\subset\N'_1$, they can be treated as
$E(\mc{I},\mc{K}')\in \mc{E}_{n'_1}$ with
$\mc{K'}=\mc{K}\cap\N'_1$. The details are explained in the following.

For $\bs{\lambda}'\in\mf{N}_{p'}$, let $\bs{\lambda}'(1)$ denote
the pair of the first entry of the two vectors in $\bsl'$. For
$\bs{\lambda}\in\mf{N}_p$, let $\bs{\lambda}(1,2)$ denote the pair of the
first two entries of the two 
vectors in $\bsl$. For example,
$[\mb{1}_{t-1}(1),\mb{0}_{t-1}](1)=[(1),(0)]$ and
$[\mb{1}_{t}(1),\mb{0}_{t}](1,2)=[(1,0),(0,0)]$. 
Now for each
$\bs{\lambda}'\in \mf{N}_1$, it can be seen that there exists
$\hat{\bs{\lambda}}'\in\mf{N}_{q'}$ such that
$\bs{\lambda}'(1)=\hat{\bs{\lambda}}'$. Observe that there exists a
bijection
$\omega_{\lambda'}:\mf{E}_{\mc{E}_{p'}(\bs{\lambda}'),p}\to\mf{E}_{\mc{E}_{q'}(\hat{\bs{\lambda}}'),q}$
defined by 
\begin{equation*}
  \omega_{\lambda'}(\mc{E}_p(\bs{\lambda}))=\mc{E}_q(\hat{\bs{\lambda}})
  \text{ if }\bs{\lambda}(1,2)=\hat{\bs{\lambda}}.
\end{equation*}
For example, for
$[\mb{1}_{t-1}(1),\mb{0}_{t-1}]\in\mf{N}_1$,
$[\mb{1}_{t-1}(1),\mb{0}_{t-1}](1)=[(1),(0)]\in\mf{N}_{q'}$. Note that
$\mf{E}_{\mc{E}_{p'}([\mb{1}_{t-1}(1),\mb{0}_{t-1}]),p}=\{\mc{E}_{p}([\mb{1}_{t}(1),\mb{0}_{t}]),\mc{E}_{p}([\mb{1}_{t}(2),\mb{0}_{t}])\}$
and $\mf{E}_{\mc{E}_{q'}([(1),(0)]),q}=\{\mc{E}_{q}([(1,0),(0,0)]),\mc{E}_{q}([(0,1),(0,0)])\}$.
It can be checked that $[\mb{1}_{t}(1),\mb{0}_{t}](1,2)=[(1,0),(0,0)]$
and $[\mb{1}_{t}(2),\mb{0}_{t}](1,2)=[(0,1),(0,0)]$.

Let $\mi_q$ be an isolation of
$\mc{E}_q(\hat{\bs{\lambda}})$
in $\mf{E}_{\mc{E}_{q'}(\hat{\bs{\lambda}}'),q}$. Define $\mi_p$ by
\begin{equation}
\label{io8dn}
  \mi_p(\mc{A})=\mi_q(\mc{A}\cap\N'_1),\ \mc{A}\subset\N.
\end{equation}
It can be checked that $\mi_p$ is an isolation of
$\mc{E}_p(\bs{\lambda})$ in
$\mf{E}_{\mc{E}_{p'}(\bs{\lambda}'),p}$. For example, let $\mi_q$ be
an isolation of $\mc{E}_{q}([(1,0),(0,0)])$ in
$\mf{E}_{\mc{E}_{q'}([(1),(0)]),q}$. Then by \eqref{eq:5},
$\s^q\triangleq\s(\mi_q,q)$ satisfies
\begin{align*}
  s^q_{n_1,n_2}>s^q_{n_1-1,n_2},\ s^q_{n_1,n_2}=s^q_{n_1,n_2-1}.
\end{align*}
By \eqref{io8dn}, $\s^p\triangleq\s_p(\mi_p,p)$ satisfies
$s^p_{n_1,n_2,n_3,\cdots,n_t}=s^q_{n_1,n_2}$,
$s^p_{n_1-1,n_2,n_3,\cdots,n_t}=s^q_{n_1-1,n_2}$ and
$s^p_{n_1,n_2-1,n_3,\cdots,n_t}=s^q_{n_1,n_2-1}$ which implies that
\begin{align*}
  s^p_{n_1,\cdots,n_t}>s^p_{n_1-1,n_2,\cdots,n_t},\ s^p_{n_1,\cdots,n_t}=s^p_{n_1,n_2-1,n_3,\cdots,n_t}.
\end{align*}
Hence $\mi_p$ is an isolation of
$\mc{E}_p([\mb{1}_t,\mb{0}_t])$ in
$\mf{E}_{\mc{E}_{p'}([\mb{1}_{t-1},\mb{0}_{t-1}]),p}$.

\noindent\underline{$\bsl'\in\mf{N}_2$:}

For
this case, we have 
$\mf{E}_{\mc{E}_{p'}(\bs{\lambda}'),p}=\{\mc{E}_p(\bs{\lambda}):\bs{\lambda}\in\{[\mb{1}_{t}(1,l+1),(k_1,\cdots,k_{t})]:(k_1,\cdots,k_{t})\in\N_p,
k_1\neq n_1,\ k_1+k_2=k'_1,k_i=k'_{i-1},i=3,\cdots,t\}\cup\{
[\mb{1}_{t}(2,l+1),(k_1,\cdots,k_{t})]:(k_1,\cdots,k_{t})\in\N_p, k_2\neq n_2,\
k_1+k_2=k'_1,k_i=k'_{i-1},i=3,\cdots,t\}\}$. We only need to treat the
case that $\bsl$ has the form
$[\mb{1}_{t}(1,l+1),(k_1,\cdots,k_{t})]$; the other case follows by
symmetry. Fix $\bsl=[\mb{1}_{t}(1,l+1),(k_1,\cdots,k_{t})]$ where $(k_1,\cdots,k_{t})\in\N_p,
k_1\neq n_1,\ k_1+k_2=k'_1,k_i=k'_{i-1},i=3,\cdots,t$.
Let $\mi$ be a matroid on $\N$ with $U_{k_1+k_{l+1}+1,n_1+n_{l+1}}$ as its submatroid on $\N_1\cup\N_{l+1}$
and other elements as loops, i.e.,
\begin{equation*}
  \mi(\mc{A})=\min\{k_1+k_{l+1}+1,|\mc{A}\cap (\N_1\cup\N_{l+1})|\},\ \mc{A}\subset\N.
\end{equation*}
It can be checked that $\s=\s(\mi,p)$ satisfies that 
\begin{align*}
  s_{k_1+1,k_2,\cdots,k_t}+s_{k_1,k_2,\cdots,k_l,k_{l+1}+1,k_{l+2},\cdots,k_t}&>s_{k_1,k_2,\cdots,k_t}+s_{k_1+1,k_2,\cdots,k_l,k_{l+1}+1,k_{k+2},\cdots,k_t},\\
  s_{i+1,j,k_3,\cdots,k_t}+s_{i,j,k_3,\cdots,k_l,k_{l+1}+1,k_{l+2},\cdots,k_t}&=s_{i,j,k_3,\cdots,k_t}+s_{i+1,j,k_3,\cdots,k_{l+1}+1,\cdots,k_t},\
  i+j=k_1+k_2,(i,j)\neq(k_1,k_2),\\
  s_{i,j+1,k_3,\cdots,k_t}+s_{i,j,k_3,\cdots,k_l,k_{l+1}+1,\cdots,k_{l+2},k_t}&=s_{i,j,k_3,\cdots,k_t}+s_{i,j+1,k_3,\cdots,k_l,k_{l+1}+1,k_{l+2},\cdots,k_t},\ i+j=k_1+k_2,
\end{align*}
which implies that $\mi$ is an isolation of
$\mc{E}_{p}(\bs{\lambda})$ with
$\bs{\lambda}=[\mb{1}_{t}(1,l+1),(k_1,\cdots,k_{t})]$ in
$\mathfrak{E}_{\mc{E}_{p'}(\bs{\lambda}'),p}$. 

\noindent\underline{$\bsl'\in\mf{N}_3$:}
 
For this case, we have $\mathfrak{E}_{\mc{E}_{p'}(\bs{\lambda}'),p}=\{\mc{E}_{p}([\mb{1}_{t}(l+1),\mb{0}_{t}])\}$, containing only one
element. Therefore, all $E\in
\mc{E}_{p'}(\bs{\lambda}')$ are $p$-equivalent. 
Then, any $\mi\in\Psi_p$
such that $\mi\notin E\cap S_p$ for all $E\in
\mc{E}_{p}(\bs{\lambda})$ is by definition an isolation of $\mc{E}_{p}(\bs{\lambda})$ in $\mathfrak{E}_{\mc{E}_{p'}(\bs{\lambda}'),p}$.

\noindent\underline{$\bsl'\in\mf{N}_4$:}

 For this case,
we have
$\mathfrak{E}_{\mc{E}_{p'}(\bs{\lambda}'),p}=\{\mc{E}_p([\mb{2}_{t}(l+1),(k_1,\cdots,k_{t})]):(k_1,\cdots,k_{t})\in\N_p, k_1+k_2=k'_1,k_i=k'_{k-1},i=3,\cdots,t\}$. 
Fix $\bsl=[\mb{2}_{t}(l+1),(k_1,\cdots,k_{t})]$.
Let $\mi$ be a matroid on $\N$ with $U_{k_1+k_{l+1}+1,n_1+n_{l+1}}$ as its submatroid on $\N_1\cup\N_{l+1}$
and other elements as loops, i.e.,
\begin{equation*}
  \mi(\mc{A})=\min\{k_1+k_{l+1}+1,|\mc{A}\cap (\N_1\cup\N_{l+1})|\},\ \mc{A}\subset\N.
\end{equation*}
It can be checked that $\s=\s(\mi,p)$ satisfies
 \begin{align*}
  2s_{k_1,k_2,\cdots,k_l,k_{l+1}+1,k_{l+2},\cdots,k_t}&>s_{k_1,k_2,\cdots,k_t}+s_{k_1,k_2,\cdots,k_l,k_{l+1}+2,k_{l+2},\cdots,k_t},\\
  2s_{i,j,k_3,\cdots,k_l,k_{l+1}+1,k_{l+2},\cdots,k_t}&=s_{i,j,k_3,\cdots,k_t}+s_{i,j,k_3,\cdots,k_l,k_{l+1}+2,k_{l+2},\cdots,k_t},\
  i+j=k_1+k_2,(i,j)\neq(k_1,k_2),
\end{align*}
which implies that $\mi$ is an isolation of $\mc{E}_{p}([\mb{2}_{t}(l+1),(k_1,\cdots,k_{t})])$ in
$\mf{E}_{\mc{E}_{p'}([\mb{2}_{t-1}(l),(k'_1,\cdots,k'_{t-1})]),p}$.

\noindent\underline{$\bsl'\in\mf{N}_5$:}

For this case,
$\mf{E}_{\mc{E}_{p'}(\bs{\lambda}'),p}=\{\mc{E}_p([\mb{1}_{t}(l_1+1,l_2+1),(k_1,\cdots,k_{t})]):(k_1,\cdots,k_{t})\in\N_p,
k_1+k_2=k'_1,k_i=k'_{k-1},i=3,\cdots,t\}$.
Fix $\bsl=[\mb{1}_{t}(l_1+1,l_2+1),(k_1,\cdots,k_{t})]$.
 Let $\mi$ be a matroid on $\N$ with submatroid
$U_{k_1+k_{l_1+1}+k_{l_2+1}+1,n_1+n_l}$ 
on $\N_1\cup\N_{l+1}\cup\N_{l+2}$ and other elements as loops, i.e., 
\begin{equation*}
  \mi(\mc{A})=\min\{k_1+k_{l_1+1}+k_{l_2+1}+1,|\mc{A}\cap (\N_1\cup\N_{l+1}\cup\N_{l+2})|\},\ \mc{A}\subset\N.
\end{equation*}
It can be checked that $\s=\s(\mi,p)$ satisfies
 \begin{align*}
  s_{k_1,k_2,\cdots,k_l,k_{l+1}+1,k_{l+2},\cdots,k_t}+s_{k_1,k_2,\cdots,k_{l_2},k_{l_2+1}+1,k_{k_2+2}\cdots,k_t}&>s_{k_1,k_2,\cdots,k_t}+s_{k_1,k_2,\cdots,k_{l_1},k_{l_1+1}+1,k_{l_1+2},\cdots,k_{l_2}, k_{l_2+1}+1,k_{l_2+2},\cdots,k_t},\\
   s_{i,j,k_3,\cdots,k_l,k_{l+1}+1,k_{l+2},\cdots,k_t}+s_{i,j,k_3,\cdots,k_{l_2},k_{l_2+1}+1,k_{k_2+2}\cdots,k_t}&=s_{i,j,k_3,\cdots,k_t}+s_{i,j,k_3,\cdots,k_{l_1},k_{l_1+1}+1,k_{l_1+2},\cdots,k_{l_2}, k_{l_2+1}+1,k_{l_2+2},\cdots,k_t},\\
  i+j=k_1+k_2,(i,j)\neq(k_1,k_2),
\end{align*}
which implies that
$\mc{E}_{p}(\bs{\lambda})$ in
$\mf{E}_{\mc{E}_{p'}(\bs{\lambda}'),p}$.

Now, we have proved that 
For any
$\mc{E}_{p'}(\bs{\lambda}')\in\mf{E}_{p'}$ with
$\bs{\lambda}'\in\mf{N}_{p'}$, for each $\bs{\lambda}\in\mf{N}_{p}$,
$\mc{E}_{p}(\bs{\lambda})\in
\mf{E}_{\mc{E}_{p'}(\bs{\lambda}'),p}$ has
an isolation in $\mathfrak{E}_{\mc{E}_{p'}(\bs{\lambda}'),p}$. 
Therefore, this
lemma is true. 
\hfill\QQQ

\section*{Acknowledgments}
We thank Dr.\ Tarik Kaced and Dr.\ Satyajit Thakor for the insightful
discussions. We also thank the reviewers for their useful suggestions
for the revision. The first round revision was done when Qi Chen was
working at ECE department, Drexel University. Qi Chen thanks
Prof. Walsh for giving him suggestions on the revision. He also thanks
Prof. Mat\v{u}\'s for pointing out that \eqref{col1} was also proved
in \cite{P03} and that our results have possible applications to the
$[1,n-1]$-bipartite secret-sharing problem \cite{MBjr08}.

This work was partially supported by a grant from the University
Grants Committee of the Hong Kong Special Administrative Region, China
(Project No. AoE/E-02/08) and 
partially supported by grants from the Shenzhen Key Laboratory of Network Coding Key Technology and Application, Shenzhen, China (ZSDY20120619151314964).

\end{document}